\newtheorem*{rep@theorem}{\rep@title}
\newcommand{\newreptheorem}[2]{%
	\newenvironment{rep#1}[1]{%
		\def\rep@title{#2 \ref{##1}}%
		\begin{rep@theorem}}%
		{\end{rep@theorem}}}
\theoremstyle{plain}
\newtheorem{theorem}{Theorem}
\newtheorem{lemma}[theorem]{Lemma}
\newtheorem{proposition}[theorem]{Proposition}
\newtheorem{corollary}[theorem]{Corollary}
\newtheorem{problem}[theorem]{Problem} 
\newtheorem{property}[theorem]{Property} 
\theoremstyle{definition}
\newtheorem{definition}[theorem]{Definition}
\newcommand{\approxdeg}{\widetilde{\textrm{deg}}}
\newcommand{\eps}{\varepsilon}
\DeclareMathOperator{\bs}{bs}
\renewcommand{\lparen}{\texttt{(}}
\renewcommand{\rparen}{\texttt{)}}
\newcommand{\CFED}{\textsc{cf-ed}}
\newcommand{\adv}{\operatorname{Adv^\pm}}
\newcommand{\binL}{L^{\normalfont{\texttt{bin}}}}
\newcommand{\hashtag}{\#}
\begin{document}

\title{A Quantum Query Complexity Trichotomy for Regular Languages}
\author{
Scott Aaronson\thanks{Supported by a Vannevar Bush Fellowship from the US Department of Defense, a Simons Investigator Award, and the Simons "It from Qubit" collaboration.
} \\ UT Austin \\ \texttt{aaronson@utexas.edu }  \and 
Daniel Grier\thanks{Supported by an NSF Graduate Research Fellowship under Grant No. 1122374.} \\ MIT \\ \texttt{grierd@mit.edu} \and
Luke Schaeffer \\ MIT \\ \texttt{lrs@mit.edu}
}
\date{}
\maketitle

\begin{abstract}
We present a trichotomy theorem for the quantum query complexity of regular languages. Every regular language has quantum query complexity $\Theta(1)$, $\tilde{\Theta}(\sqrt n)$, or $\Theta(n)$. The extreme uniformity of regular languages prevents them from taking any other asymptotic complexity. This is in contrast to even the context-free languages, which we show can have query complexity $\Theta(n^c)$ for all computable $c \in [1/2,1]$. Our result implies an equivalent trichotomy for the approximate degree of regular languages, and a dichotomy---either $\Theta(1)$ or $\Theta(n)$---for sensitivity, block sensitivity, certificate complexity, deterministic query complexity, and randomized query complexity.

The heart of the classification theorem is an explicit quantum algorithm which decides membership in any star-free language in $\tilde{O}(\sqrt n)$ time. This well-studied family of the regular languages admits many interesting characterizations, for instance, as those languages expressible as sentences in first-order logic over the natural numbers with the less-than relation. Therefore, not only do the star-free languages capture functions such as OR, they can also express functions such as ``there exist a pair of 2's such that everything between them is a 0."  

Thus, we view the algorithm for star-free languages as a nontrivial generalization of Grover's algorithm which extends the quantum quadratic speedup to a much wider range of string-processing algorithms than was previously known.  We show a variety of applications---new quantum algorithms for dynamic constant-depth Boolean formulas, balanced parentheses nested constantly many levels deep, binary addition, a restricted word break problem, and path-discovery in narrow grids---all obtained as immediate consequences of our classification theorem.
\end{abstract}


\section{Introduction}
\label{sec:intro}

Regular languages have a long history of study in classical theoretical computer science, going back to Kleene in the 1950s \cite{kleene:1956}. The definition is extremely robust: there are many equivalent characterizations ranging from machine models (e.g., deterministic or non-deterministic finite automata, $o(\log \log n)$-space Turing machines \cite{stearns:1965}), to grammars (e.g., regular expressions, prefix grammars), to algebraic structures (e.g., recognition via monoids, the syntactic congruence, or rational series). 
Regular languages are closed under most natural operations (e.g., union, complement), and also most natural questions are decidable (e.g., is the language infinite?).  Perhaps for this reason, regular languages are also a useful pedagogical tool, serving as a toy model for theory of computation students to cut their teeth on.

We liken regular languages to the symmetric\footnote{A symmetric Boolean function $f \colon \{0,1\}^n \to \{0,1\}$ is such that the value of $f$ only depends on the Hamming weight of the input.} Boolean functions. That is, both are a restricted, (usually) tractable special case of a much more general object, and often the common thread between a number of interesting examples. We suggest that these special cases should be studied and thoroughly understood first, to test proof techniques, to make conjectures, and to gain familiarity with the setting. 

In this work, we hope to understand the regular languages from the lens of another great innovation of theoretical computer science---query complexity, particularly quantum query complexity.   Not only is query complexity one of the few models in which provable lower bounds are possible, but is also often the case that efficient algorithms actually achieve the query lower bound.  In this case, the query lower bound suggests an algorithm which was otherwise thought not to exist, as was famously the case for Grover's search algorithm.

In the case of query complexity, symmetric functions are extremely well-understood with complete characterizations known for deterministic, randomized, and quantum algorithms in both the zero-error and bounded-error settings \cite{beals:2001}.  However, to the authors' knowledge, regular languages have not been studied in the query complexity model despite the fact that they appear frequently in query-theoretic applications. 

For example, consider the OR function over Boolean strings.  This corresponds to deciding membership in the language recognized by the regular expression $(0 | 1)^{*} 1 (0 | 1)^{*}$. Similarly, the parity function is just membership in the regular language $(0^* 1 0^* 1)^{*} 0^*$. It is well known that the quantum query complexity of OR is $\Theta(\sqrt{n})$, whereas parity is known to require $\Theta(n)$ quantum queries. Yet, there is a two-state deterministic finite automaton for each language.  This raises the question:  what is the difference between these two languages that causes the dramatic discrepancy between their quantum query complexities? More generally, can we decide the quantum query complexity of a regular language given a description of the machine recognizing it? Are all quantum query complexities even possible? We answer all of these questions in this paper.

The main contribution of this work is the complete characterization of the quantum query complexity of regular languages (up to some technical details), manifest as the following trichotomy: every regular language has quantum query complexity $\Theta(1)$, $\tilde{\Theta}(\sqrt{n})$, or $\Theta(n)$. In the process, we get an identical trichotomy for approximate degree, and dichotomies---in this case, $\Theta(1)$ or $\Theta(n)$---for a host of other complexity measures including deterministic complexity, randomized query complexity, sensitivity, block sensitivity, and certificate complexity.

Many of the canonical examples of regular languages fall easily into one of the three categories via well-studied algorithms or lower bounds. For example, the upper bound for the OR function results from Grover's famous search algorithm, and the lower bounds for OR and parity functions are straightforward applications of either the polynomial method \cite{beals:2001} or adversary method \cite{ambainis:2002}. Nevertheless, it turns out that there exists a vast class of regular languages which have neither a trivial $\Omega(n)$ lower bound nor an obvious $o(n)$ upper bound resulting from a straightforward application of Grover's algorithm.  A central challenge of the trichotomy theorem for quantum query complexity was showing that these languages \emph{do} actually admit a quadratic quantum speedup.

One such example is the language $\Sigma^{*} (2 0^{*} 2) \Sigma^{*}$, where $\Sigma = \{ 0, 1, 2 \}$.  Although there is no \emph{finite} witness for the language (e.g., to find by Grover search), we show that it nevertheless has an $\tilde{O}(\sqrt n)$ quantum algorithm.  More generally, this language belongs to a subfamily of regular languages known as \emph{star-free languages} because they have regular expressions which avoid Kleene star (albeit with the addition of the complement operation).\footnote{For example, the star-free expression for $\Sigma^{*} (2 0^{*} 2) \Sigma^{*}$ is $\overline{\varnothing} 2 \overline{\overline{\varnothing} \{1,2\} \overline{\varnothing}} 2 \overline{\varnothing}$.} Like regular languages, the star-free languages have many equivalent characterizations: counter-free automata \cite{mcnaughton:1971}, predicates expressible in either linear temporal logic or first-order logic \cite{kamp:1968, mcnaughton:1971}, the preimages of finite aperiodic monoids \cite{schutzenberger:1965}, or cascades of reset automata \cite{krohn:1965}. The star-free languages are those regular languages which can be decided in $\tilde{O}(\sqrt{n})$ queries. As a result, reducing a problem to any one of the myriad equivalent representations of these languages yields a quadratic quantum speedup for that problem. 

Let us take McNaughton's characterization of star-free languages in first-order logic as one example \cite{mcnaughton:1971}.  That is, every star-free language can be expressed as a sentence in first-order logic over the natural numbers with the less-than relation and predicates $\pi_a$ for $a \in \Sigma$, such that $\pi_a(i)$ is true if input symbol $x_i$ is $a$.  We can easily express the OR function as $\exists i \;  \pi_1(i)$, or the more complicated language $\Sigma^{*} (2 0^{*} 2) \Sigma^{*}$ as
$$
\exists i \, \exists k \, \forall j \ \ \  i < k \wedge \pi_2(i) \wedge \pi_2(k) \wedge (i < j < k \implies \pi_0(j)).
$$
Our result gives an algorithm for this sentence and arbitrarily complex sentences like it. We see this as a far-reaching generalization of Grover's algorithm, which extends the Grover speedup to a much wider range of string processing problems than was previously known.\footnote{Readers familiar with descriptive complexity will recall that $\AC^{0}$ has a similar, but somewhat more general characterization in first-order logic. It follows that all star-free languages, which have quantum query complexity $\tilde{O}(\sqrt{n})$, are in $\AC^{0}$. Conversely, we will show that regular languages \emph{not} in $\AC^{0}$ have quantum query complexity $\Omega(n)$. Thus, another way to state the trichotomy is that \emph{very roughly speaking} regular languages in $\NC^{0}$ have complexity $O(1)$, regular languages in $\AC^{0}$ but not $\NC^0$ have complexity $\tilde{\Theta}(\sqrt{n})$, and everything else has complexity $\Omega(n)$.}


\subsection{Results}

Our main result is the following:
\begin{theorem}[informal]
\label{thm:classification}
Every\footnote{There are two caveats: the quantum query complexity may oscillate between asymptotically different functions; the quantum query complexity may also be zero.  For the formal statement of this theorem see Section~\ref{sec:formalstatement}. } regular language has quantum query complexity $\Theta(1)$, $\tilde{\Theta}(\sqrt{n})$, or $\Theta(n)$. Moreover, the quantum time complexity of each language matches its query complexity.
\end{theorem}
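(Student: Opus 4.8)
The plan is to read the complexity of a regular language $L$ off the algebraic structure of its syntactic monoid $M$ (equivalently, its minimal DFA), in three layers. The coarsest split is Sch\"utzenberger's theorem: $L$ is star-free iff $M$ is aperiodic. If $M$ is \emph{not} aperiodic, then some $s\in M$ has $s^{k}\neq s^{k+1}$ for all $k$, so a power of $s$ lies in a nontrivial cyclic group; pumping the corresponding word (padding to equalize block lengths) embeds $\mathrm{MOD}\text{-}p$ for some $p\ge 2$ into $L$ by a gadget reduction, and since $\mathrm{MOD}\text{-}p$ needs $\Theta(n)$ quantum queries we get $Q(L)=\Omega(n)$, with a matching $O(n)$ upper bound in time and queries from the trivial DFA simulation. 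That is the $\Theta(n)$ bucket.

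Inside the star-free languages I would isolate the ``end-determined'' ones: finite Boolean combinations of conditions on a length-$O(1)$ prefix and a length-$O(1)$ suffix, equivalently those $L$ for which, for all large $n$, membership is a junta on $O(1)$ input positions (a decidable property of $M$). For these, querying those positions and simulating decides $L$ in $O(1)$ time, and if $L$ is nonconstant on a length class the complexity is $\Omega(1)$; this is the $\Theta(1)$ bucket (the degenerate case $Q=0$ on a length class is included here). If $L$ is star-free but \emph{not} end-determined, a pumping argument produces contexts $\alpha,\beta$ and equal-length words $a,b$ with $\alpha\, a^{i} b\, a^{m-1-i}\beta$ realizing $\mathrm{OR}_{m}$ for $m=\Omega(n)$, so $Q(L)=\Omega(\sqrt n)$. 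What remains---and is by far the hardest part---is a matching $\tilde O(\sqrt n)$ upper bound for \emph{every} star-free language.

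For that algorithm I would induct on a decomposition of the star-free languages whose depth is a fixed constant depending only on $L$: the concatenation (Straubing--Th\'erien) hierarchy, equivalently the quantifier-alternation structure of the first-order sentence, or a Krohn--Rhodes-style cascade of reset automata. Boolean combinations cost only $O(1)$ subroutines. The real work is the inductive step for a concatenation $L_{0}a_{1}L_{1}a_{2}\cdots a_{k}L_{k}$ with the $L_{j}$ one level lower: the naive approach---Grover-search the split points $p_{1}<\cdots<p_{k}$ and recursively test each in-between membership condition---costs a fresh $\sqrt n$ factor per level and degenerates to $\tilde O(n)$. The key is a ``one-sided search'' primitive: because each $L_{j}$ is lower in the hierarchy, the relevant prefix and suffix membership predicates can be summarized by a constant amount of data (at the base level, the set of bounded-length scattered subwords present, which reduces to $O(1)$ threshold positions computed by greedy first/last-occurrence searches in $\tilde O(\sqrt n)$ each), so that locating valid split points collapses to $O(1)$ further $\tilde O(\sqrt n)$-searches over sub-intervals; with constant recursion depth and $O(1)$ sub-intervals of total length $\le n$ per level, the $\sqrt{\,\cdot\,}$ costs aggregate to $\tilde O(\sqrt n)$. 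I expect the central obstacle to be making this ``summarize and greedily split'' argument go through uniformly at every level, so the per-level overhead is only $\mathrm{polylog}$ and never a new $\sqrt n$; controlling amplitude-amplification error across the $O(1)$ nested searches is a secondary technicality. Every primitive used (Grover/minimum-finding, threshold computation, constant-size DFA simulation) is time-efficient and we compose only $O(1)$ of them, so the quantum \emph{time} complexity matches the query complexity throughout. Finally, the ``oscillating'' and ``identically true/false'' caveats are dispatched by running this entire analysis separately on each residue class of $n$ modulo the period of $M$.
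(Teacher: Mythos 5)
Your high-level map is right: aperiodicity for the $\Omega(n)$/$\Theta(n)$ bucket, a ``prefix/suffix junta'' class for $\Theta(1)$ (the paper's \emph{trivial} languages, characterized by rectangular-band image), a $\tilde\Theta(\sqrt n)$ middle bucket for the remaining star-free languages, and residue-class splitting for the length-periodicity caveat (the paper's \emph{flattening}). But the hard part of the theorem is exactly where your proposal goes soft. On the upper bound, you induct on the Straubing--Th\'erien (concatenation/alternation) hierarchy and claim the split points of $L_0a_1L_1\cdots a_kL_k$ can be located by ``summarizing prefix/suffix membership with $O(1)$ data,'' citing scattered subwords and greedy first/last occurrence. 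That summary works only at level one (piecewise-testable languages, $\mathcal{J}$-trivial monoids); you give no mechanism for higher levels and acknowledge this is the ``central obstacle.'' The paper resolves it with a genuinely different induction, on Sch\"utzenberger's \emph{rank} $\rho(m)=|M-MmM|$ rather than concatenation depth, together with his decomposition $\varphi^{-1}(m)=(U\Sigma^*\cap\Sigma^*V)\setminus(\Sigma^*C\Sigma^*\cup\Sigma^*W\Sigma^*)$. The precise ``one-sided search'' primitive you are groping for is Lemma~\ref{lem:prefixrightideal}: to test $\varphi^{-1}(r)a\Sigma^*$, observe that as you scan a string the right ideals $\varphi(x_1\cdots x_i)M$ form a weakly decreasing chain, and the prefix can only end at one of the at most $|M|$ positions where this chain strictly descends; the chain's monotonicity makes these positions binary-searchable in $\tilde O(\sqrt n)$ given the lower-rank recursion. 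That ideal-descent structure is what kills the $\sqrt n$-per-level blowup; nothing in the concatenation-hierarchy formulation hands it to you for free, so as written your inductive step does not go through.

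On the lower bound, your $\Omega(\sqrt n)$ route is also a different (and under-justified) claim: you assert that every non-end-determined star-free language admits a clean $\mathrm{OR}_m$ embedding $\alpha\,a^{i}b\,a^{m-1-i}\beta$. That needs an argument (why should the many sensitive positions all flip in the same direction with no cancellation?), and the paper avoids needing it: it proves that the language of \emph{sensitivity masks} $S_L$ is regular (Theorem~\ref{thm:sensitivity_mask_is_regular}), pumps $S_L$ to get the sensitivity dichotomy $s(L)\in\{O(1),\Omega(n)\}$, passes to block sensitivity, and invokes $\approxdeg(f)=\Omega(\sqrt{\bs(f)})$. This is both cleaner and more robust than an explicit OR gadget. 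Your $\Theta(n)$ argument via a non-aperiodic element generating a $\mathrm{MOD}\text{-}p$ instance does match the paper's. In short: same skeleton, but the two load-bearing pieces of your proposal---the higher-level split-point search and the $\mathrm{OR}$ embedding---are the places the paper substitutes concrete, different machinery (rank induction with ideal-descent binary search, and the regular sensitivity-mask dichotomy), and without something like those your sketch does not close.
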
 

The theorem and its proof have several consequences which we highlight below.
\begin{enumerate}
\item \textbf{Algebraic characterization:} We give a characterization of each class of regular languages in terms of the monoids that recognize them.  That is, the monoid is either a rectangular band, aperiodic, or finite.  In particular, given a description of the machine, grammar, etc. generating the language, we can decide its membership in one of the three classes by explicitly calculating its syntactic monoid and checking a small number of conditions.  See Section~\ref{sec:formalstatement}.
\item \textbf{Related complexity measures:} Many of the lower bounds are derived from lower bounds on other query measures.  To this end, we prove query dichotomies for deterministic complexity, randomized query complexity, sensitivity, block sensitivity, and certificate complexity---they are all either $\Theta(1)$ or $\Theta(n)$ for regular languages.  By standard relationships between the measures, this shows that approximate degree and quantum query complexity are either $O(1)$ or $\Omega(\sqrt{n})$.  See Section~\ref{sec:lowerbounds}.
\item \textbf{Generalization of Grover's algorithm:}  The $\BQP$ algorithm using $\tilde{O}(\sqrt n)$ queries for star-free regular languages extends to a variety of other settings given that the star-free languages enjoy a myriad of equivalent characterizations.  The characterization of star-free languages as first-order sentences over the natural numbers with the less-than relation shows that the algorithm for star-free languages is a broad generalization of Grover's algorithm.  See Section~\ref{sec:upperbounds} for the description and proof of the star-free algorithm and Section~\ref{sec:applications} and Appendix~\ref{sec:more_applications} for applications.
\item \textbf{Star-free algorithm from faster unstructured search:}  The $\tilde{O}(\sqrt n)$ algorithm for star-free languages results from many nested calls to Grover search, using the speedup due to multiple marked items.  However, a careful analysis reveals that whenever this speedup is required, the marked items are \emph{consecutive}.  We show that these Grover search calls can then be replaced by any unstructured search algorithm.  Therefore, any model of computation that has faster-than-brute-force unstructured search will have an associated speedup for star-free languages.  Consider, for example, the model of quantum computation of Aaronson, Bouland, Fitzsimons, and Lee in which non-collapsing measurements are allowed \cite{aaronson_bouland:2016}.  It was shown that unstructured search in that model requires at most $\tilde{O}(n^{1/3})$ queries, and therefore, star-free languages can be solved in $\tilde{O}(n^{1/3})$ queries as well.
\end{enumerate}

Finally, we stress that this trichotomy is only possible due to the extreme uniformity in the structure of regular languages.  In particular, the trichotomy does not extend to another basic model of computation, the context-free languages.  

\begin{theorem}
\label{thm:cfl_computable}
For all limit computable\footnote{We say that a number $c \in \mathbb{R}$ is \emph{limit computable} if there exists a Turing machine which on input $n$ outputs some rational number $T(n)$ such that $\lim_{n \to \infty} T(n) = c$.} $c \in [1/2,1]$, there exists a context-free language $L$ such that $Q(L) = O(n^{c + \epsilon})$ and $Q(L) = \Omega(n^{c - \epsilon})$ for all $\epsilon > 0$.  Furthermore, if an additive $\epsilon$-approximation to $c$ is computable in $2^{O(1/\epsilon)}$ time, then $Q(L) = \Theta(n^{c})$. In particular, any algebraic $c \in [1/2, 1]$ has this property. 
\end{theorem}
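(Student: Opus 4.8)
The plan is to build a single context-free language $L$ whose query complexity is governed by a "schedule" that encodes the approximations $T(n)$ to $c$. The starting point is the observation that for any fixed rational $p/q \in [1/2,1]$ there is a context-free language with query complexity $\tilde\Theta(n^{p/q})$: one can iterate the standard trick that combines an OR (cost $\sqrt m$ on a block of size $m$) with a balanced-parentheses / palindrome-style structure that forces $\Omega(m)$ work, and by nesting such gadgets at carefully chosen block sizes one realizes any exponent of the form $\frac{a_1+a_2+\cdots}{b_1+b_2+\cdots}$ in $[1/2,1]$. A clean way to get a continuum of exponents inside one language is to let the input itself specify, in a prefix, which gadget depth / block partition to use, and to make the grammar check that the prefix is well-formed; the context-free power is exactly what lets us verify the matching/counting constraints of the chosen gadget. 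So the first step is to fix, for every rational $r\in[1/2,1]$, an explicit CFL-gadget $G_r$ on strings of length $m$ with $Q(G_r)=\tilde\Theta(m^{r})$, together with a uniform ``dispatcher'' grammar that reads a description of $r$ and of a length scale from a prefix and then simulates $G_r$ on the remaining suffix.

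The second step is to use the Turing machine $M$ witnessing limit computability of $c$ to choose the schedule. On inputs of length $n$ we want the ``effective exponent'' to be $T(\ell(n))$ for a slowly growing clock $\ell(n)$ (e.g.\ $\ell(n)=$ the largest $k$ with $\mathrm{time}_M(k)\le \log n$, so that the description of $T(\ell(n))$ is computable from $n$ in $o(n)$ steps and hence doesn't dominate the query cost). Because the input can carry its own prefix, we don't need the language itself to be computed with few queries from scratch: we let the relevant inputs be exactly those whose prefix correctly encodes $\ell(n)$ and $T(\ell(n))$ — this is a context-free (indeed, deterministic) check once $\ell$ and $T$ are fixed, by hardwiring the schedule into the grammar via a standard padding argument. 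On such inputs the query complexity is $\tilde\Theta(n^{T(\ell(n))})$, and since $T(\ell(n))\to c$, for every $\epsilon>0$ this is $O(n^{c+\epsilon})$ and $\Omega(n^{c-\epsilon})$ for all large $n$; inputs with a malformed prefix are rejected in $O(1)$ queries (or, to be careful about the ``infinitely often vs.\ eventually'' distinction, are padded so they do not affect the asymptotics). For the sharp $\Theta(n^c)$ conclusion under the stronger hypothesis, if an additive $\epsilon$-approximation to $c$ is computable in $2^{O(1/\epsilon)}$ time, then choosing $\epsilon=\epsilon(n)\to 0$ slowly (so that $2^{O(1/\epsilon(n))}\le \log n$) makes $|T(\ell(n))-c|\le\epsilon(n)=o(1/\log n)$, whence $n^{T(\ell(n))}=\Theta(n^{c})$ up to the $\mathrm{polylog}$ factors already absorbed in $\tilde\Theta$; the algebraic case follows because algebraic numbers admit such fast-converging approximations (e.g.\ via Newton iteration on a defining polynomial).

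The upper bounds come from running the star-free / Grover-type subroutines of the gadget in superposition over the (prefix-specified) block structure — each $G_r$ is a bounded nesting of OR-searches, so $\tilde O(m^{r})$ queries suffice — plus $o(n)$ classical queries to read and verify the prefix. The lower bounds come from an adversary or block-sensitivity argument localized to a single worst-case setting of the gadget inputs: fixing the prefix to the honest value and all but one nested block to their ``hard'' configuration leaves a hard instance of OR (or of a $\Omega(m)$-hard inner gadget) of the appropriate size, giving $\Omega(n^{T(\ell(n))})$ after accounting for how block sizes multiply.

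I expect the main obstacle to be the bookkeeping that makes the realized exponent genuinely track $T(\ell(n))$ rather than just $c$: one has to (i) choose the clock $\ell(n)$ so the prefix description is both computable within the query budget and long enough to pin down $T(\ell(n))$, (ii) ensure the nested-gadget construction can hit an \emph{arbitrary} rational exponent in $[1/2,1]$ (not just dyadic ones) with block sizes that are integers summing correctly to $n$, which forces a rounding analysis, and (iii) handle the subtlety that limit computability only gives convergence, not a modulus — so without the quantitative hypothesis one can only claim the $n^{c\pm\epsilon}$ bounds and must be content with $\limsup/\liminf$ statements, exactly as the theorem is phrased. Getting the rounding in (ii) to cost only a $\mathrm{polylog}$ factor (hidden by $\tilde\Theta$) is the delicate point.
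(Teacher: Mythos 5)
Your high-level architecture (let the input carry a description of which exponent to use, and have the grammar check that the description is well-formed) matches the spirit of the paper's construction, but the concrete realization differs in two places, and both differences hide genuine gaps.

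First, your key building block is a family of ``nested gadgets'' $G_r$, one for each rational $r\in[1/2,1]$, with $Q(G_r)=\tilde\Theta(m^r)$, plus a single dispatcher grammar that reads $r$ from a prefix and routes to $G_r$. Neither piece is established. It is far from clear that iterating OR-with-palindrome/parentheses gadgets at varying block sizes produces a context-free language with query complexity $\tilde\Theta(m^{p/q})$ for arbitrary $p/q$ (the known examples hit only isolated exponents like $1/2$, $2/3$, $1$), and even if each $G_r$ existed, a single CFG dispatching over infinitely many of them is a separate construction you have not given. The paper sidesteps this entirely: the only ``gadget'' it needs is parity of the prefix up to a marker $\$$, whose cost is simply the position of $\$$. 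Arbitrary exponents are realized not by a gadget hierarchy but by forcing the input to carry a Turing machine computation history that verifies $\$$ sits at position $\lceil n^{c_k}\rceil$. The two lemmas (a CFL whose complement verifies a binary length counter, and a CFL whose complement verifies a TM history) are the technical heart, and the crucial point is that each of these conditions is a ``there is no local violation'' statement, so the quantum algorithm can check all of them in $O(\sqrt n)$ queries by Grover searching for a violation. Your proposal's ``$o(n)$ classical queries to read and verify the prefix'' does not explain how the verification of a context-free constraint is done inside the query budget; the paper's Grover-on-complement trick is precisely what makes this work.

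Second, your clock for the sharp $\Theta(n^c)$ bound is off by an exponential. You set $\ell(n)$ so that $\mathrm{time}_M(\ell(n))\le\log n$, and then claim $\epsilon(n)=o(1/\log n)$. But $2^{O(1/\epsilon)}\le\log n$ forces $\epsilon=\Omega(1/\log\log n)$, which is nowhere near $o(1/\log n)$, so this does not give $n^{T(\ell(n))}=\Theta(n^c)$. The paper's clock is governed by the size of the TM history fitting inside the (length-$n$) input itself, so $n_{k+1}\le 2^{b/\epsilon_k}$ gives $|c-c_k|\le b/\log n_{k+1}$, i.e.\ an $O(1/\log n)$ error, which is exactly what cancels against $n$ in the exponent. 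To make your clock tight you would need to budget $\sim n$ rather than $\sim\log n$ for the description, which again forces the description to live in the input itself and to be verified by Grover search rather than read entirely, i.e.\ the paper's mechanism.
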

In fact, the converse also holds.
\begin{theorem}
\label{thm:cfl_converse}
Let $L$ be a context-free language such that $\lim_{n \to \infty} \frac{\log Q(L)}{\log n} = c$.  Then, $c$ is limit computable.
\end{theorem}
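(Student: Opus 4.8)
The plan is to exhibit a single Turing machine that, on input $m$, outputs a rational $T(m)$ with $\lim_{m\to\infty} T(m) = c = \lim_{n\to\infty} \frac{\log Q(L)}{\log n}$. The obvious obstacle is that $Q(L)$ — the exact bounded-error quantum query complexity of the length-$n$ slice of $L$ — is not something we can compute exactly, and even if we could, a naive search over quantum algorithms does not terminate. So the first move is to replace $Q(L)$ by a computable proxy with the same asymptotic growth (up to constants, which vanish after taking $\log Q / \log n$). The natural candidate is a polynomial-method or adversary-method optimum: for the length-$n$ slice $L_n \subseteq \{0,1\}^n$ (or over the relevant alphabet), both the approximate-degree $\approxdeg(L_n)$ and the (positive-weight or negative-weight) adversary bound are solutions to explicitly describable optimization problems — a linear program or semidefinite program whose size is a computable function of $n$ and of a finite description of $L_n$. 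Since $L$ is context-free, membership of a fixed string in $L$ is decidable (CYK), so $L_n$ is a computable finite object, and hence the SDP defining $\adv(L_n)$ has computable rational data. By Reichardt's theorem $Q(L_n) = \Theta(\adv(L_n))$, so it suffices to show $\frac{\log \adv(L_n)}{\log n}$ converges — and it does, because it equals $\frac{\log Q(L_n)}{\log n} + \frac{O(1)}{\log n} \to c$ by hypothesis.

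Next I would address computability of the proxy to sufficient precision. An SDP optimum need not be rational, but we only need $\adv(L_n)$ to within a multiplicative factor, say $2$, because that perturbs $\frac{\log \adv(L_n)}{\log n}$ by only $O(1/\log n)$. One can compute a $2$-approximation to an SDP with rational data in time bounded by a computable function of the instance size using the ellipsoid method (or just exhaustively search a fine enough rational grid and certify feasibility, since all we need is \emph{a} halting procedure, not an efficient one). So define $T(m)$ as follows: compute $L_m$ by running CYK on all length-$m$ strings; write down the adversary SDP for $L_m$; compute a rational $\widetilde{A}_m$ with $\widetilde{A}_m \le \adv(L_m) \le 2\widetilde{A}_m$; output $T(m) = \frac{\log \widetilde{A}_m}{\log m}$ (rounded to a rational, with the convention $T(m)=0$ for the finitely many small $m$ where $\log m \le 1$ or $\adv(L_m) \le 1$). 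Every step is a terminating computation on a finite object, so $T$ is total recursive.

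Finally, convergence: $T(m) = \frac{\log \adv(L_m)}{\log m} + \frac{\delta_m}{\log m}$ with $|\delta_m| \le 1$, and $\frac{\log \adv(L_m)}{\log m} = \frac{\log \Theta(Q(L_m))}{\log m} = \frac{\log Q(L_m)}{\log m} + \frac{O(1)}{\log m}$. The first term tends to $c$ by the hypothesis of the theorem, and both error terms are $O(1/\log m) \to 0$, so $\lim_{m\to\infty} T(m) = c$, i.e.\ $c$ is limit computable. The one genuine subtlety to get right in the writeup is that ``$Q(L)$'' in the theorem statement refers to the query complexity as a function of input length, and one must pin down the slice $L_n$ unambiguously (e.g.\ strings of length exactly $n$ over the fixed terminal alphabet, with the partial function/total function convention matching whatever the paper uses in Section~\ref{sec:formalstatement}); once that is fixed, the decidability of CFL membership is what makes each $L_n$ an effectively presented Boolean (or non-Boolean-alphabet) function, and everything else is bookkeeping. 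I expect the main obstacle readers will scrutinize is precisely this reduction from the uncomputable quantity $Q$ to the computable SDP proxy, so I would state the Reichardt characterization and the effective-SDP-approximation fact explicitly as the two load-bearing lemmas.
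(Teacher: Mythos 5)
Your proposal matches the paper's proof: both rely on decidability of CFL membership to make the length-$n$ slice a computable finite object, invoke Reichardt's SDP characterization $Q(L_n) = \tilde\Theta(\adv(L_n))$ as the computable proxy, and observe that the $\tilde\Theta$ slack and any constant-factor SDP approximation error vanish after taking $\frac{\log(\cdot)}{\log n}$. You fill in some details the paper glosses over (effective SDP approximation, pinning down the slice), but the argument is the same.
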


\subsection{Proof Techniques}

Most of the lower bounds are derived from a dichotomy theorem for sensitivity---the sensitivity of a regular language is either $O(1)$ or $\Omega(n)$.  In particular, we show that the language of sensitive bits for a regular language is itself regular.  Therefore, by the pumping lemma for regular languages, we are able to boost any nonconstant number of sensitive bits to $\Omega(n)$ sensitive bits, from which the dichotomy follows.

The majority of the work required for the classification centers around the $\tilde{O}(\sqrt n)$ quantum query algorithm for star-free languages.  The proof is based on Sch\"utzenberger's characterization of star-free languages as those languages recognized by finite aperiodic monoids.  Starting from an aperiodic monoid, Sch\"utzenberger constructs a star-free language recursively based on the ``rank'' of the monoid elements involved.  Roughly speaking, this process culminates in a decomposition of any star-free language into star-free languages of smaller rank.  Although this decomposition does not immediately give rise to an algorithm, the notion of rank proves to be a particularly useful algebraic invariant.  Specifically, we use it to show that given a $\tilde{O}(\sqrt n)$-query algorithm for membership in some star-free language $L$, we can construct a $\tilde{O}(\sqrt n)$-query algorithm for $\Sigma^* L \Sigma^*$.  This ``infix'' algorithm is the key subroutine for much of the general star-free algorithm.

\subsection{Applications}
\label{sec:applications}

We give quantum quadratic speedups for several problems simply by showing that the underlying language is star free.  Consider the language $2\Sigma^{*}2 \backslash \Sigma^{*} 2 0^{*} 2 \Sigma^{*}$, where $\Sigma = \{ 0, 1, 2 \}$. We call this the \emph{dynamic AND-OR language}, for reasons which may not be evident from the regular expression alone. Think of the $2$'s as delimiting the string into some number of blocks over $\{ 0, 1 \}$. We take the OR of each block and the AND of those results to decide if the string is in the language. That is, if there is some pair of consecutive $2$'s with no intervening $1$, then that block evaluates to $0$, and the whole string is not in the language. It has long been known that the quantum query complexity of the AND-OR tree, or more generally Boolean formulas with constant depth, is $\Theta(\sqrt{n})$ \cite{hoyer:2003}. In that case, however, the tree or formula is fixed in advance and not allowed to change with the input.  Nevertheless, our quantum algorithm for star-free languages implies that even the dynamic version of the AND-OR language (as well as the dynamic generalization of constant-depth Boolean formulas \cite{barrington:1988}) can be decided with $\tilde{\Theta}(\sqrt{n})$ queries and, moreover, there is an efficient quantum algorithm. 

Next consider the language of balanced parentheses, where the parentheses are only allowed to nest $k$ levels deep. When $k$ is unbounded, this is called the Dyck language.  When $k = 1$ this is the language of strings of the form $\texttt{()()}\ldots \texttt{()}$, which has a simple Grover search speedup---search for $\texttt{((}$ or $\texttt{))}$. However, the language quickly becomes more interesting as $k$ increases. Nevertheless, for any constant $k$, this language is known to be star free \cite{crespi:1978}, and therefore has an $\tilde{O}(\sqrt n)$ quantum algorithm by our classification.  

Finally, we mention a few more examples of star-free languages (proofs in Appendix~\ref{sec:more_applications}).
\begin{itemize}[itemsep = -1pt]
\item[] \textit{Addition: } Given three binary numbers $x_0 y_0 z_0 x_1 y_1 z_1\ldots x_n y_n z_n$ as input, decide if $x + y = z$.
\item[] \textit{Word Break: } Given finite dictionary $D \subseteq \Sigma \cup \Sigma^2$, decide if word $x \in \Sigma^*$ is in $D^*$.
\item[] \textit{Grid Path: } Given a constant-height grid of cells, some of which are impassable, decide whether there is a path from the bottom left corner to the top right corner.
\end{itemize}

To the authors' knowledge, no quantum quadratic speedups for any of the previous problems were known prior to this publication.

\subsection{Related Work}

We are not the first to study regular languages in a query-complexity setting. One such example is work in property testing by Alon, Krivelevich, Newman, and Szegedy. They show that regular languages can be tested\footnote{We say a language $L$ is testable with constantly many queries if there exists a randomized algorithm such that given a word $w \in \Sigma^n$, the algorithm accepts $w$ if $w \in L$, and the algorithm rejects $w$ if at least $\epsilon n$ many positions of $w$ must be changed in order to create a word in $L$.  The algorithm is given $\tilde{O}(1/\epsilon)$ many queries to $w$.} with $\tilde{O}(1/\epsilon)$ queries \cite{alon:2001}. Interestingly, Alon et al.\ also show that there exist context-free grammars which do not admit constant query property testers \cite{alon:2001}. In Section~\ref{sec:cfl}, we show that context-free languages can have query complexity outside the trichotomy. 

A second example comes from work of Tesson and Th\'erien on the communication complexity of regular languages \cite{tesson:2003}. As with query complexity, several important functions in communication complexity happen to be regular, e.g., inner product, disjointness, greater-than, and index. They show that for several measures of communication complexity, the complexity is $\Theta(1)$, $\Theta(\log \log n)$, $\Theta(\log n)$, or $\Theta(n)$. Clearly, there are many parallels with this work, but surprisingly the classes of regular languages involved are different. Also, communication complexity is traditionally more difficult than query complexity, yet the authors appear to have skipped over query complexity---we assume because quantum query complexity is necessary to get an interesting result.

There are also striking parallels in work of Childs and Kothari, who conjecture a dichotomy for the quantum query complexity of minor-closed graph properties \cite{childs:2012}. Minor-closed graph properties are not, to our knowledge, directly related to regular languages, but they are morally similar in that both are very uniform---(almost) every part of the input is treated the same by the property. Childs and Kothari show that such properties have query complexity $\Theta(n^{3/2})$, except for forbidden subgraph properties which are $o(n^{3/2})$ and $\Omega(n)$, and are conjectured to be $\Theta(n)$. Even some of the proof techniques are similar---the proof that forbidden subgraph properties are $\Omega(n)$ could be phrased in terms of block sensitivity, like our $\Omega(\sqrt{n})$ lower bound for non-trivial languages. 

Finally, we are aware of one more result on the complexity of star-free languages prior to our work. It is possible to show that star-free languages have $o(n)$ quantum query complexity, just barely enough to separate them from non-star-free languages. This result is a combination of two existing results: Chandra, Fortune, and Lipton \cite{chandra:1983} show that star-free languages have (very slightly) super-linear size $\AC^{0}$ circuits; Bun, Kothari, and Thaler show that linear size $\AC^{0}$ circuits have (moderately) sublinear quantum query complexity \cite{bun:2018}. This connection was pointed out to us by Robin Kothari.


\section{Background}
\label{sec:background}

This section introduces both regular languages and basic query complexity measures and their relationships.  In particular, we will focus on algebraic definitions of regular languages as they serve as the basis for many of the results in this paper.  Readers familiar with query complexity can skip much of the introduction on that topic, but may still want to read Section~\ref{sec:alphabet_size} on extending the complexity measures to larger alphabets.

\subsection{Regular languages}

The \emph{regular languages} are those languages that can be constructed from $\varnothing$, $\{ \varepsilon \}$, and singletons $\{ a \}$ for all $a \in \Sigma$ using the operations of concatenation (e.g., $AB$), union (e.g., $A \cup B$), and Kleene star\footnote{Let $A$ be a set of strings.  Define $A^* = \{ a_1 \ldots a_k : k \ge 0, a_i \in A\}$, that is, the concatenation of zero or more strings in $A$.  We will also use $A^+ = \{ a_1 \ldots a_k : k \ge 1, a_i \in A\}$ to capture one or more strings.} ($A^{*}$).  A \emph{regular expression} for a regular language is an explicit expression for how to construct the language, traditionally writing $|$ for alternation (instead of union), and omitting some brackets by writing $a$ for $\{ a \}$ and $\varepsilon$ for $\{ \varepsilon \}$.   For example, over the alphabet $\Sigma = \{0,1\}$, the OR function can be written as regular expression $\Sigma^* 1 \Sigma^*$, and the languages of all strings such that there are no two consecutive 1's is  $(0 | 10)^* (\varepsilon | 1)$.

The class of regular languages has extremely robust definitions and many equivalent characterizations.  For instance, some machine-based definitions\footnote{We assume familiarity with the basic machine models for regular languages---see \cite{sipser:2006} for an introduction.} include those languages accepted by deterministic finite automata (DFA), or by non-deterministic finite automata (NFA), or even by alternating finite automata. Regular languages also arise by weakening Turing machines, for example by making the machine read-only or limiting the machine to $o(\log \log n)$ space.

For our purposes, some of the most useful definitions of regular languages are algebraic in nature.  In particular, regular languages arise as the preimage of a subset of a finite monoid under monoid homomorphism.\footnote{A \emph{monoid} $(M, \cdot, 1_M)$ is a set $M$ closed under an associative binary operation $\cdot \colon M \times M \to M$ with an identity element $1_M \in M$.  A \emph{monoid homomorphism} is a map from one monoid to another that preserves multiplication and identity.}  First, we say that language $L \subseteq \Sigma^{*}$ is \emph{recognized} by a monoid $M$ if there exists a monoid homomorphism $\varphi \colon \Sigma^{*} \to M$ (where $\Sigma^{*}$ is a monoid under concatenation) and a subset $S \subseteq M$ such that 
$$
L = \{ w \in \Sigma^{*} : \varphi(w) \in S \} = \varphi^{-1}(S).
$$
\begin{theorem}[folklore]
\label{thm:reg_lang_finite_monoid}
A language is recognized by a finite monoid iff it is regular. 
\end{theorem}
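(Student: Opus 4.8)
The plan is to prove both directions of the equivalence in Theorem~\ref{thm:reg_lang_finite_monoid}. For the ``if'' direction, suppose $L = \varphi^{-1}(S)$ for a monoid homomorphism $\varphi \colon \Sigma^* \to M$ with $M$ finite and $S \subseteq M$. I would build a DFA whose state set is $M$ itself: the start state is $1_M$, the transition on reading symbol $a \in \Sigma$ sends state $m$ to $m \cdot \varphi(a)$, and the accepting states are exactly $S$. An easy induction on word length shows that after reading $w$ the automaton is in state $\varphi(w)$, so the automaton accepts $w$ iff $\varphi(w) \in S$ iff $w \in L$. Since $M$ is finite this is a genuine DFA, hence $L$ is regular by the standard machine-based characterization.

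For the ``only if'' direction, suppose $L$ is regular, so there is a DFA $A = (Q, \Sigma, \delta, q_0, F)$ with $\delta \colon Q \times \Sigma \to Q$ and $L = L(A)$. The natural monoid to use is the \emph{transition monoid}: each word $w \in \Sigma^*$ induces a function $\hat{\delta}_w \colon Q \to Q$ given by running the DFA from each state, and $M := \{ \hat{\delta}_w : w \in \Sigma^* \}$ is a submonoid of the monoid of all functions $Q \to Q$ under composition, with identity $\hat{\delta}_\varepsilon = \mathrm{id}_Q$. Since $Q$ is finite, $M$ is finite (at most $|Q|^{|Q|}$ elements). The map $\varphi \colon \Sigma^* \to M$ sending $w \mapsto \hat{\delta}_w$ is a homomorphism because $\hat{\delta}_{uv} = \hat{\delta}_v \circ \hat{\delta}_u$ (being careful about the order of composition, or equivalently working in the opposite monoid). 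Taking $S := \{ f \in M : f(q_0) \in F \}$, we get $w \in L \iff \hat{\delta}_w(q_0) \in F \iff \varphi(w) \in S$, so $L = \varphi^{-1}(S)$ is recognized by the finite monoid $M$.

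The steps in order are: (1) state the two directions; (2) for the easy direction, define the DFA from $(M, \varphi, S)$ and verify by induction that the run on $w$ ends in state $\varphi(w)$; (3) for the harder direction, define the transition monoid of a DFA, check it is a finite monoid and that $w \mapsto \hat{\delta}_w$ is a homomorphism; (4) exhibit the recognizing subset $S$ and conclude. I expect the main obstacle to be purely bookkeeping rather than conceptual: getting the direction of function composition right so that $\varphi$ is honestly a homomorphism (one often needs to read $\hat{\delta}_w$ as ``apply the transitions of $w$ left to right,'' which makes $\hat{\delta}_{uv} = \hat{\delta}_v \circ \hat{\delta}_u$, so either one composes on the correct side or passes to the opposite monoid). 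Since this is folklore, the write-up can be brief, citing a standard reference such as \cite{sipser:2006} for the DFA--regular-language equivalence and simply spelling out the monoid construction.
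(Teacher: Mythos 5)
Your proof is correct and is exactly the standard folklore argument; the paper states Theorem~\ref{thm:reg_lang_finite_monoid} without proof, so there is nothing to compare against, but your two constructions (DFA from a finite monoid in one direction, transition monoid of a DFA in the other) are the canonical ones. Your parenthetical about composition order is the right thing to flag: writing $\hat{\delta}_{uv} = \hat{\delta}_v \circ \hat{\delta}_u$ makes $w \mapsto \hat{\delta}_w$ an \emph{anti}-homomorphism under standard function composition, so you do need to either adopt left-to-right (diagrammatic) composition as the monoid operation on $M$ or explicitly pass to the opposite monoid, and once that bookkeeping is settled the argument goes through.
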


In fact, starting from a regular language, we can specify a finite monoid recognizing it through the so-called syntactic congruence.  Given language $L \subseteq \Sigma^{*}$, the \emph{syntactic congruence} is an equivalence relation $\sim_{L}$ on $\Sigma^{*}$ such that $x \sim_L y$ if 
$$
\forall u, v \in \Sigma^{*}, \ uxv \in L \iff uyv \in L.
$$ 
Thus, $\sim_L$ divides $\Sigma^{*}$ into equivalence classes.  Furthermore, $\sim_L$ is a monoid congruence because $u \sim_L v$ and $x \sim_L y$ imply $ux \sim_L vy$. This means the equivalence classes of $\Sigma^{*}$ under $\sim_L$ are actually \emph{congruence classes} (because they can be multiplied), defining a monoid $M_{L}$ which we call the \emph{syntactic monoid of $L$}. Finally, it is not hard to see that the map $\varphi \colon \Sigma^{*} \to M_{L}$, from a string to its congruence class, is a homomorphism.   Therefore, by Theorem~\ref{thm:reg_lang_finite_monoid}, the syntactic monoid for any regular language is finite.

The most important subclass of regular languages are the \emph{star-free languages}.   These languages are recognized by a variant of regular expressions where complement ($\overline{A}$) is allowed but Kleene star is not.  We call these \emph{star-free regular expressions}.  For convenience, star-free regular expressions sometimes contain the intersection operation since it follows by De Morgan's laws.

Note that star-free languages are not necessarily finite. For example, $\Sigma^{*}$ can also be expressed as $\overline{\varnothing}$, the complement of the empty language. Similarly, $0^{*}$ is $\overline{\overline{\varnothing} (\Sigma \backslash \{ 0 \}) \overline{\varnothing}}$, the set of strings which do not contain a string other than $0$.  Once again, an algebraic characterization of star-free languages will be particularly useful for us.  First, we say that a monoid $M$ is \emph{aperiodic} if for all $x \in M$ there exists an integer $n \geq 0$ such that $x^{n} = x^{n+1}$.

\begin{theorem}[Sch\"{u}tzenberger~\cite{schutzenberger:1965}]
\label{thm:schutzenbergers_theorem}
A language is recognized by a finite aperiodic monoid iff it is star free. 
\end{theorem}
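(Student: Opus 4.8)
The plan is to prove the two implications separately, since they have very different characters. For the easy direction (star-free $\Rightarrow$ aperiodic) I would argue by structural induction on the star-free regular expression, showing that the class $\mathcal{A}$ of languages whose syntactic monoid is finite and aperiodic contains the base cases and is closed under the allowed operations. The base cases $\varnothing$, $\{\varepsilon\}$, $\{a\}$ are recognized by the trivial monoid, the two-element monoid $\{1,0\}$ with $0^2=0$, and the three-element "reset" monoid $\{1,s,0\}$ with $s^2=0$, all of which are visibly aperiodic. Closure under complement is immediate, since $\overline L$ has the same syntactic congruence, hence the same syntactic monoid, as $L$. Closure under union holds because $L_1\cup L_2$ is recognized by (a divisor of) $M_{L_1}\times M_{L_2}$, and finite aperiodic monoids are closed under direct product and division. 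The only closure property with real content is concatenation, for which I would invoke the Schützenberger product $M_1\lozenge M_2$: it recognizes $L_1L_2$ whenever $M_i$ recognizes $L_i$, and one needs the algebraic lemma that $M_1\lozenge M_2$ is aperiodic when $M_1,M_2$ are. (One can instead verify the corresponding marked-concatenation closure directly, but either way this is the one step that is not bookkeeping.)

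For the hard direction (aperiodic $\Rightarrow$ star-free), fix a morphism $\varphi\colon\Sigma^*\to M$ with $M$ finite aperiodic and a target set $S$. Since $\varphi^{-1}(S)=\bigcup_{m\in S}\varphi^{-1}(m)$ and star-free languages are closed under finite union, it suffices to prove each $\varphi^{-1}(m)$ is star-free; replacing $M$ by $\varphi(\Sigma^*)$ we may also assume $\varphi$ is surjective (the image is still finite aperiodic). I would then induct on $|M|$, the base case $|M|=1$ giving $\varphi^{-1}(1)=\Sigma^*=\overline\varnothing$. In the inductive step, first dispose of trivially-acting letters: let $B=\{a\in\Sigma:\varphi(a)=1_M\}$, $A=\Sigma\setminus B$, $\psi=\varphi|_{A^*}$ (still surjective onto $M$), and $\pi\colon\Sigma^*\to A^*$ the morphism deleting $B$-letters. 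Since deleting letters that map to $1_M$ does not change a product, $\varphi=\psi\circ\pi$ and hence $\varphi^{-1}(m)=\pi^{-1}(\psi^{-1}(m))$; and one checks directly that $\pi^{-1}$ commutes with union and complement, satisfies $\pi^{-1}(K_1K_2)=\pi^{-1}(K_1)\pi^{-1}(K_2)$, and sends $\{\varepsilon\}$ to $B^*=\overline{\Sigma^*A\Sigma^*}$ and each $\{a\}$ to $B^*aB^*$, all star-free. So it is enough to treat $\psi$, i.e., we may assume no letter maps to $1_M$.

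Assuming no letter maps to $1_M$: because the group of units of a finite aperiodic monoid is trivial, $uv=1_M$ forces $u=v=1_M$, so no nonempty word maps to $1_M$ and $\varphi^{-1}(1_M)=\{\varepsilon\}$ is star-free. It remains to handle $m\neq 1_M$, and here I would run Schützenberger's rank-based recursion: express $\varphi^{-1}(m)$ as a finite Boolean combination of marked concatenations $\varphi_1^{-1}(m_1)\cdot\{a\}\cdot\varphi_2^{-1}(m_2)$, where $\varphi_1,\varphi_2$ are morphisms onto strictly smaller (and still aperiodic) submonoids or quotients of $M$, so that the induction hypothesis applies and star-freeness is preserved under marked concatenation and Boolean operations.

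The main obstacle is precisely this last recursion: choosing the right rank function on $M$ (essentially, depth in the $\mathcal{J}$-order, arranged so that "peeling off" an occurrence of a rank-increasing generator strictly decreases rank), proving the resulting decomposition of $\varphi^{-1}(m)$ is finite, and identifying each piece as a preimage under a morphism onto a genuinely smaller aperiodic monoid. An alternative is to route through the Krohn–Rhodes decomposition — every finite aperiodic monoid divides an iterated wreath product of two-element reset monoids — together with closure of the star-free languages under the language operation dual to the wreath product; but this merely substitutes one substantial theorem for another, so I would present the rank argument as the primary route.
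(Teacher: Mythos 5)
The paper does not actually prove this theorem: it cites Sch\"utzenberger's original article and, in Section 4.1.2, quotes only the one ingredient it needs, namely the Decomposition Theorem (Theorem~\ref{thm:schutzdecomp}), together with the rank facts (Propositions~\ref{prop:rproduct} and~\ref{prop:unique_rank_zero_element} and Corollary~\ref{cor:shitty_corollary}). So there is no in-paper proof to compare against line by line. Your easy direction (star-free $\Rightarrow$ aperiodic) by structural induction, with the Sch\"utzenberger product handling concatenation, is the standard argument and is fine. The preliminary reductions in your hard direction are also fine: restrict the codomain to $\varphi(\Sigma^*)$; erase the letters mapping to $1_M$ via a morphism $\pi$ and check $\pi^{-1}$ preserves star-freeness; use aperiodicity to conclude that no nonempty word maps to $1_M$ (this is exactly the paper's Proposition~\ref{prop:unique_rank_zero_element}).

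The gap is in the induction scheme for the hard direction. You propose to induct on $|M|$ and to realize each piece of the decomposition as a preimage under a morphism onto a \emph{strictly smaller} aperiodic monoid. That is not how the decomposition works, and in general it cannot be made to work. Theorem~\ref{thm:schutzdecomp} writes $\varphi^{-1}(m)$ as a Boolean combination of $U\Sigma^*$, $\Sigma^*V$, $\Sigma^*C\Sigma^*$, $\Sigma^*W\Sigma^*$, where $U,V,W$ are finite unions of $\varphi^{-1}(r)a$, $a\varphi^{-1}(r)$, $a\varphi^{-1}(r)b$ for the \emph{same} morphism $\varphi$ and the \emph{same} monoid $M$; the induction parameter is $\rho(m)=|M-MmM|$, and the contentful claim is that every $r$ appearing in $E$, $F$, $G$ has $\rho(r)<\rho(m)$. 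Lower rank does not give a smaller recognizing monoid. For a concrete obstruction: take $M=\{1,a,0\}$ with $a^2=0$ and $0$ absorbing, $\Sigma=\{a\}$, $\varphi(a)=a$. Then $\rho(a)=1<\rho(0)=2$, yet the syntactic monoid of $\varphi^{-1}(a)=\{a\}$ is all of $M$, so no morphism onto a strictly smaller monoid recognizes that piece. Your $|M|$-induction therefore has no valid inductive step, and the "remove $B$-letters" reduction does not help, since it leaves $M$ unchanged. You correctly flag the recursion as the main obstacle; the fix is simply to keep $\varphi$ and $M$ fixed throughout and induct on $\rho(m)$, which is what the paper's quoted decomposition does.
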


We also define a subset of the star-free languages, which we call the \emph{trivial languages}.  Intuitively, the trivial languages are those languages for which membership can be decided by the first and last characters of the input string,\footnote{More generally, trivial languages are decided by a constant size prefix and/or suffix of the input, but the processing we do to formalize the trichotomy theorem compresses those substrings to length 1. See Section~\ref{sec:formalstatement}.} which we formalize as those languages accepted by \emph{trivial regular expressions}.  A trivial regular expression is any Boolean combination of the languages $a | a \Sigma^* a$, $a \Sigma^*b$, and $\varepsilon$ for $a \neq b \in \Sigma$.  

The algebraic characterization of trivial languages will need to use both the properties of the monoid \emph{and} the properties of the homomorphism onto the monoid.  To that end, we say that language $L \subseteq \Sigma^{*}$ is \emph{recognized} by a monoid homomorphism $\varphi \colon \Sigma^{*} \to M$ if  $L = \{ w \in \Sigma^{*} : \varphi(w) \in S \} = \varphi^{-1}(S)$ for some subset $S \subseteq M$.  Finally, a monoid $M$ is a \emph{rectangular band} if for $r,s,t \in M$, each element is idempotent, $r^2 = r$, and satisfies the rectangular property, $rst = rt$.

\begin{theorem}[Appendix~\ref{sec:regex_parallels}]
\label{thm:trivial_lang_rectangular_band}
A language is recognized by morphism $\varphi$ such that $\varphi(\Sigma^+)$ is a finite rectangular band iff it is trivial. 
\end{theorem}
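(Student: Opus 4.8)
The plan is to prove both directions by unwinding the definitions, using the fact that a rectangular band is a very rigid structure: in a rectangular band $M$, the product $rst = rt$ means the value of any product of length $\geq 2$ is determined entirely by its first and last factors, and every element is idempotent. First I would establish a structural lemma about finite rectangular bands: any rectangular band is isomorphic to a "matrix" monoid-with-identity-adjoined, i.e.\ $M \setminus \{1_M\}$ embeds in $A \times B$ with product $(a,b)(a',b') = (a,b')$, for finite sets $A,B$; this is the classical Rees-type description and follows quickly from idempotence plus the rectangular law. (One has to be slightly careful that $\varphi(\Sigma^+)$ is the rectangular band, not all of $M$, so $1_M$ may or may not lie in the band; I would treat $\varphi(\varepsilon)=1_M$ separately throughout.)

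For the "trivial $\Rightarrow$ rectangular band" direction, I would take a trivial regular expression — a Boolean combination of the atoms $a \mid a\Sigma^* a$, $a\Sigma^* b$, and $\varepsilon$ — and exhibit an explicit recognizing morphism. The natural choice sends a nonempty word $w = w_1 \cdots w_k$ to the pair $(w_1, w_k)\in\Sigma\times\Sigma$ (its first and last letter), sends $\varepsilon$ to $1_M$, with multiplication $(a,b)(c,d)=(a,d)$; this target monoid $\Sigma\times\Sigma$ with adjoined identity is visibly a rectangular band on $\varphi(\Sigma^+)$. Each atom pulls back to a subset of $\Sigma\times\Sigma \cup \{1_M\}$ — e.g.\ $a\Sigma^*b$ is $\varphi^{-1}(\{(a,b)\})$, and $a\mid a\Sigma^* a$ is $\varphi^{-1}(\{(a,a)\})$ — and Boolean combinations of recognized languages with a common morphism are recognized by the same morphism, so the whole trivial language is recognized by $\varphi$.

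For the converse, suppose $L = \varphi^{-1}(S)$ with $N := \varphi(\Sigma^+)$ a finite rectangular band. Using the structural lemma, pick for each letter $a\in\Sigma$ its image $\varphi(a)\in N$, which I will write as a pair $(\ell_a, r_a)\in A\times B$ under the matrix description. Then for any nonempty word $w=w_1\cdots w_k$, the rectangular law gives $\varphi(w) = \varphi(w_1)\cdots\varphi(w_k) = (\ell_{w_1}, r_{w_k})$ — the image depends only on the first and last letters of $w$. Hence for each target value $m = (p,q)\in S\cap N$, the preimage $\varphi^{-1}(\{m\}) \cap \Sigma^+$ is exactly $\bigcup_{a:\ell_a=p,\ b:r_b=q}\bigl(a \mid a\Sigma^*b\bigr)$ when $p\ne q$ (taking the single-letter words into account via the $a\mid a\Sigma^*a$ atoms when $p=q$), a finite union of trivial atoms; and whether $1_M\in S$ adds or removes $\{\varepsilon\}$. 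Taking the union over all $m\in S$ exhibits $L$ as a trivial regular expression.

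I expect the main obstacle to be bookkeeping rather than any deep difficulty: getting the structural lemma for rectangular bands exactly right (the standard statement is for rectangular bands as semigroups without identity, so I must carefully handle the adjoined $1_M$ and the possibility that $N$ already contains an identity element of its own that is \emph{not} $1_M$), and correctly matching up the length-one words, which are "boundary cases" of the first-letter/last-letter description — these are precisely why the atom $a \mid a\Sigma^* a$ is phrased as a union. A secondary subtlety is confirming that the trivial atoms as literally written (in particular $a\mid a\Sigma^*a$ versus $a\Sigma^*a$ alone) are closed under the operations needed, but since the definition already takes arbitrary Boolean combinations this is immediate.
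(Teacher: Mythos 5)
Your argument is sound, and one direction is identical to the paper's: for trivial~$\Rightarrow$~rectangular band, both you and the paper use the morphism sending a nonempty word to its (first letter, last letter) pair in $(\Sigma \times \Sigma) \cup \{1\}$ with the rectangular multiplication $(a,b)(c,d)=(a,d)$, and read off the accepting set from the atoms. For the converse the paper takes a more elementary route: rather than invoking the Rees-type coordinatization of a rectangular band as $A\times B$ with $(a,b)(a',b')=(a,b')$, it derives everything needed directly from the two axioms in a short calculation --- for $w\in\Sigma^{+}$, $\varphi(awa)=\varphi(a)\varphi(w)\varphi(a)=\varphi(a)\varphi(a)=\varphi(a)$ and $\varphi(awb)=\varphi(a)\varphi(w)\varphi(b)=\varphi(a)\varphi(b)=\varphi(ab)$ --- and then concludes that $L$ is the union of whichever atoms $a\mid a\Sigma^{*}a$, $a\Sigma^{*}b$, $\varepsilon$ it meets. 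Your route works and makes the ``first and last letter'' intuition explicit, but it imports a structure theorem the paper avoids proving, and it raises worries that are actually vacuous: a rectangular band with more than one element never has an identity, since a two-sided identity $e$ would give $x = exe = ee = e$ for all $x$, so there is never an ``internal identity'' of $\varphi(\Sigma^{+})$ distinct from $1_M$ to separate out. There is also a small bookkeeping slip worth fixing: $\bigcup_{a:\ell_a=p,\,b:r_b=q}(a\mid a\Sigma^{*}b)$ is not a union of the permitted atoms when $a\neq b$ (the atom in that case is $a\Sigma^{*}b$, not $a\mid a\Sigma^{*}b$), and your expression spuriously includes the singleton word $a$ whenever $\ell_a=p$ but $r_a\neq q$. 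The clean statement is
\[
\varphi^{-1}\bigl((p,q)\bigr)\cap\Sigma^{+} \;=\; \bigcup_{\substack{a\neq b\\ \ell_a=p,\ r_b=q}} a\Sigma^{*}b \;\cup\; \bigcup_{\substack{a\\ \ell_a=p,\ r_a=q}}\bigl(a\mid a\Sigma^{*}a\bigr),
\]
which is a genuine finite union of trivial atoms. None of this affects the validity of your overall argument.
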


\subsection{Query complexity}
This section serves as a brief overview of query complexity, a model of computation where algorithms are charged based on the number of input bits they reveal (the input is initially hidden) rather than the actual computation being done. To model that the input is hidden, all query algorithms must access their inputs via an indexing oracle---a function which takes some index and outputs the value of the corresponding input bit.  We use the standard notion of oracles in the quantum setting.  That is, for oracle function $\mathcal{O} \colon \{0,1\}^n \to \{0,1\}$, the quantum algorithm can apply the $(n+1)$-qubit transformation which flips the last qubit if $\mathcal{O}$ applied to the first $n$ qubits evaluates to 1.  

Formally, the \emph{quantum query complexity} of a function $f \colon \Sigma^* \to \{0,1\}$ is a function $Q(f) \colon \mathbb N \to \mathbb N$ such that $Q(f)(n)$ is the minimum number of oracle calls for a quantum circuit to decide (with bounded error) the value of $f$ for input strings of length $n$.  An astute reader may notice that we only defined the indexing function over bits and that regular languages are defined over arbitrary finite alphabets $\Sigma$. However, one can always transform the function so that each symbol of $\Sigma$ is encoded by $\lceil \log_2 |\Sigma| \rceil$ bits.  In fact, we will show later that this only affects the query complexity by a constant factor for regular languages.

One can similarly define \emph{deterministic query complexity} ($D$), \emph{bounded-error randomized query complexity} ($R$), and \emph{zero-error randomized query complexity} ($R_0$) by counting the number of input symbols accessed in these models.  Closely related to quantum query complexity is a notion of approximation by polynomials called approximate degree, denoted $\approxdeg(f)$.  The \emph{approximate degree} of a function $f \colon [k]^{n} \to \{ 0, 1 \}$ is the minimum degree of a polynomial $p(x_1, \ldots, x_n)$ such that $|p(x_1, \ldots, x_n) - f(x_1, \ldots, x_n)| \leq \frac{1}{3}$ for all $x_1, \ldots, x_n \in [k]$.

We conclude by defining several query complexity measures which are useful tools in proving lower bounds in the more standard models of computation above.  Fix a function $f \colon \Sigma^{*} \to \{ 0, 1 \}$.  Let $x \in \Sigma^n$ be some input.  We say that some input symbol $x_i$ is \emph{sensitive} if changing only $x_i$ changes the value of the function on that input.  The \emph{sensitivity} of $x$ is equal to its total number of sensitive symbols. The \emph{sensitivity of $f$}, denoted $s(f)$, is the maximum sensitivity over all inputs $x$. 

Similarly, the \emph{block sensitivity} at an input is the maximum number of disjoint blocks (i.e., subsets of the input bits) such that changing one entire block changes the value of the function. The \emph{block sensitivity of $f$}, denoted $bs(f)$, is the maximum block sensitivity over all inputs $x$.

A \emph{certificate} is a partial assignment of the input symbols such that $f$ evaluates to the same value on all inputs consistent with the certificate. The certificate complexity of an input is the minimum certificate size (i.e., the number of bits assigned in the partial assignment). The \emph{certificate complexity of $f$}, denoted $C(f)$, is the maximum certificate complexity over all inputs. 

Finally, when clear from context, we will often let a language denote its characteristic function when used as an argument in the various complexity measures.  For example, for language $L \subseteq \Sigma^*$, we will write $Q(L)$ as the quantum query complexity of the function $f_L \colon \Sigma^* \to \{0,1\}$ where $f(x) = 1$ iff $x \in L$.

\subsubsection{Relationships}

There are many relationships between the different complexity measures that will be useful throughout this paper.  For example, the proposition below follows from the fact that some models of computation can easily simulate others.
\begin{proposition}[\cite{beals:2001}]
\label{prop:qcinclusions}
For all $f \colon \{0,1\}^* \to \{0,1\}$, 
$$
\frac{1}{2} \approxdeg(f) \leq Q(f) \leq R(f) \leq R_0(f) \leq D(f).
$$
\end{proposition}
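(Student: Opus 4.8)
The plan is to verify the four inequalities one at a time, reading each as a pointwise statement about the functions $n \mapsto Q(f)(n)$, $n \mapsto R(f)(n)$, and so on. Three of them are simulation arguments and the fourth is the polynomial method; since the statement is classical and standard (this is \cite{beals:2001}), I would include the proof only for completeness.

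The two rightmost inequalities are essentially definitional. A deterministic decision tree is a zero-error randomized algorithm using the trivial distribution over coins, so the minimum cost over the larger class of zero-error algorithms is at most $D(f)(n)$, giving $R_0(f) \le D(f)$; and a zero-error algorithm errs with probability $0 \le 1/3$, hence is in particular a bounded-error algorithm with the same worst-case query budget, giving $R(f) \le R_0(f)$. (If one prefers to define $R_0$ via \emph{expected} query count, the middle step instead truncates the Las Vegas algorithm after $3R_0(f)(n)$ queries and applies Markov's inequality, costing only a constant factor, which is harmless for every use we make of the proposition.)

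For $Q(f) \le R(f)$ I would run a classical randomized query algorithm of worst-case cost $T$ reversibly on a quantum machine: generate its coins with Hadamard gates, implement each round of queries by one application of the standard quantum oracle (which, restricted to computational-basis states, reproduces the classical query exactly), keep the workspace around, and measure the answer register at the end. This uses $T$ oracle calls and reproduces the acceptance probability exactly, so the bounded-error guarantee carries over.

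The one step with content is $\tfrac12\approxdeg(f) \le Q(f)$, proved by the polynomial method. Fix an optimal $T$-query algorithm for length-$n$ inputs, $T = Q(f)(n)$, alternating fixed unitaries with oracle calls. An induction on the number of queries shows that after $t$ calls every amplitude of the algorithm's state is a polynomial in $x_1,\dots,x_n$ of degree at most $t$: a fixed unitary replaces amplitudes by fixed linear combinations of them, while each oracle call multiplies each amplitude by a factor that is affine in a single $x_i$. Hence after all $T$ queries the probability of measuring $1$ in the answer register---a sum of squared magnitudes of such amplitudes---is a real polynomial $p(x_1,\dots,x_n)$ of degree at most $2T$. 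Since the algorithm decides $f$ with error at most $1/3$, we have $|p(x) - f(x)| \le 1/3$ for all $x \in \{0,1\}^n$, so $p$ witnesses $\approxdeg(f)(n) \le 2T$, and rearranging gives the bound. The only point requiring care is this amplitude-degree bookkeeping across the interleaved unitaries and oracle calls; the rest is definitional, and I expect no real obstacle.
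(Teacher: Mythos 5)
The paper cites this proposition to Beals et al.\ \cite{beals:2001} and gives no internal proof, so there is no in-paper argument to compare against; your task was effectively to reconstruct the standard one, and you have done so correctly. The steps $R_0(f)\le D(f)$ and $R(f)\le R_0(f)$ are model inclusions (and you correctly flag the subtlety that if $R_0$ is defined via expected cost, one should truncate via Markov, which loses only a constant factor and is harmless for every use the paper makes of this chain); $Q(f)\le R(f)$ is the usual reversible simulation of a randomized decision tree with Hadamard-generated coins and one quantum oracle call per classical query; and $\frac12\approxdeg(f)\le Q(f)$ is the polynomial method, tracking that each oracle call raises the degree of each amplitude by at most one so that the acceptance probability after $T$ queries is a real polynomial of degree at most $2T$ within $1/3$ of $f$ pointwise. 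One small wording nit: with the bit-flip oracle $|i\rangle|b\rangle\mapsto|i\rangle|b\oplus x_i\rangle$ the oracle does not multiply an amplitude by an affine factor but rather replaces it by an affine combination of two preexisting amplitudes (the ``multiplies by an affine factor'' description is literally correct only for the phase oracle $|i\rangle\mapsto(-1)^{x_i}|i\rangle$); either reading gives the same degree-increment bound, so your conclusion stands unchanged.
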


In Section~\ref{sec:dichotomies}, we prove a dichotomy theorem for block sensitivity---it is either $O(1)$ or $\Omega(n)$.  This is particularly useful since nearly all complexity measures are polynomially related to block sensitivity:\begin{theorem}[\cite{buhrman:2002}]
\label{thm:bs_relations}
For all $f \colon \{0,1\}^* \to \{0,1\}$, we have the following relationships for block sensitivity:
\begin{center}
\setlength\tabcolsep{2pt}
\begin{tabular}{r c l l | l r c l}
\multicolumn{3}{c}{Lower bounds} & \hspace{.1em} &  \hspace{1em} & \multicolumn{3}{c}{Upper bounds} \\ \hline
$C(f)$ & $\geq$ & $\bs(f)$ & & &  $s(f)$ & $\leq$ & $bs(f)$ \\
$\approxdeg(f)$ & $=$ & $\Omega(\sqrt{bs(f)})$ & & &  $C(f)$ & $\leq$ & $bs(f)^2$ \\
$R(f)$ & $=$ & $\Omega(\bs(f))$ & & & $D(f)$ & $\leq$ & $bs(f)^3$.
\end{tabular}
\end{center}

\end{theorem}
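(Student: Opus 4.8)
The plan is to prove each of the six inequalities separately; five are elementary combinatorial arguments about \emph{sensitive blocks} and only the approximate-degree bound requires the polynomial method. Fix a Boolean function and an input $x$. For a block $B$ of coordinates write $x^{B}$ for the input obtained by flipping $x$ on $B$, call $B$ \emph{sensitive at $x$} if $f(x^{B}) \neq f(x)$, and \emph{minimal} if no proper nonempty subset of $B$ is sensitive at $x$. The bound $s(f) \le \bs(f)$ is immediate: a sensitive coordinate is a sensitive block of size one, so a maximum-size sensitive coordinate set at $x$ is a family of disjoint sensitive blocks. For $C(f) \ge \bs(f)$, take $x$ realizing $\bs(f)$ via disjoint sensitive blocks $B_{1},\dots,B_{k}$; any certificate for $x$ must fix a coordinate inside each $B_{i}$ (else flipping the untouched $B_{i}$ gives a consistent input of the opposite value), so $C(f) \ge k = \bs(f)$.

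For $C(f) \le \bs(f)^{2}$, given $x$ greedily pick a maximal family $B_{1},\dots,B_{k}$ of pairwise-disjoint \emph{minimal} sensitive blocks at $x$; disjointness gives $k \le \bs(f)$. The restriction of $x$ to $B_{1}\cup\dots\cup B_{k}$ is a certificate: an input $y$ agreeing with $x$ there but with $f(y)\neq f(x)$ would make $\{j : y_{j}\neq x_{j}\}$ a sensitive block disjoint from all the $B_{i}$, contradicting maximality. Each minimal sensitive block is small: at $x^{B_{i}}$, minimality means flipping back any single coordinate of $B_{i}$ changes the value, so every coordinate of $B_{i}$ is sensitive at $x^{B_{i}}$ and $|B_{i}| \le s(f) \le \bs(f)$. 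Hence $C(x) \le k\cdot\bs(f) \le \bs(f)^{2}$. For $D(f) \le \bs(f)^{3}$ it then suffices to prove $D(f) \le \bs(f)\cdot C(f)$: run the adaptive algorithm that, while $f$ is not yet forced by the answers, picks any still-consistent input $z$ and queries a smallest certificate of $z$ (cost $\le C(f)$); if a round does not finish, the true input disagrees with $z$'s certificate somewhere, and a routine disjointness argument shows the disagreement blocks over successive rounds form disjoint sensitive blocks of a common input, bounding the number of rounds by $\bs(f)$.

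For $R(f) = \Omega(\bs(f))$, let $\mathcal{A}$ be a bounded-error randomized algorithm and let $x$ have disjoint sensitive blocks $B_{1},\dots,B_{k}$ with $k=\bs(f)$. Couple the runs of $\mathcal{A}$ on $x$ and on $x^{B_{i}}$ by using the same random string: the two runs agree until $\mathcal{A}$ first queries inside $B_{i}$, so the acceptance probabilities on $x$ and $x^{B_{i}}$ differ by at most the probability $p_{i}$ that $\mathcal{A}$ queries $B_{i}$ on input $x$. Since $f(x)\neq f(x^{B_{i}})$ and the error is at most $1/3$, we get $p_{i}\ge 1/3$. The blocks being disjoint, the expected number of queries of $\mathcal{A}$ on $x$ is at least $\sum_{i=1}^{k}p_{i}\ge k/3$, so $R(f)=\Omega(\bs(f))$.

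The one genuinely analytic step is $\approxdeg(f)=\Omega(\sqrt{\bs(f)})$ (Nisan--Szegedy), which I expect to be the main obstacle. Fix $x$ with disjoint sensitive blocks $B_{1},\dots,B_{k}$, $k=\bs(f)$, and make the \emph{affine} substitution that fixes every coordinate outside $\bigcup_{i}B_{i}$ to its value in $x$ and replaces each coordinate $j\in B_{i}$ by $x_{j}\oplus y_{i}$ for a fresh variable $y_{i}$ (this is affine in $y_{i}$ because $x_{j}$ is a constant). This turns an approximating polynomial for $f$ into one for $g\colon\{0,1\}^{k}\to\{0,1\}$ \emph{without raising the degree}, so $\approxdeg(g)\le\approxdeg(f)$, and by construction $g(0^{k})=f(x)$ while $g(e_{i})=f(x^{B_{i}})\neq f(x)$ for each $i$. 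Symmetrizing a $(1/3)$-approximating polynomial for $g$ gives a univariate polynomial $q$ of the same degree with $q(t)$ a convex combination of its values on Hamming-weight-$t$ points, hence $q(0),q(1),\dots,q(k)$ all lie in $[-1/3,4/3]$, while $|q(1)-q(0)|\ge 1/3$. A discrete Markov-type inequality for polynomials bounded at the integers $0,1,\dots,k$ then forces $\deg(q)=\Omega(\sqrt{k})$. The delicate points are that the substitution must be affine so that degree is preserved, and that the Markov-type bound must be applied with the correct normalization on $[0,k]$; beyond that the argument is the classical one.
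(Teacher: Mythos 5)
The paper does not prove this theorem---it is quoted directly from the Buhrman--de Wolf survey \cite{buhrman:2002}, so there is no internal argument to compare against. Your reconstruction correctly assembles the standard proofs from that source: $s \le bs$ and $C \ge bs$ by direct block reasoning, $C \le bs^2$ via maximal families of minimal sensitive blocks (Nisan), $D \le C \cdot bs$ by iterated certificate querying, the coupling argument for $R = \Omega(bs)$, and symmetrization plus a Markov--Bernstein bound for $\approxdeg = \Omega(\sqrt{bs})$ (Nisan--Szegedy).

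One detail to tighten in the $D(f) \le bs(f) \cdot C(f)$ step: the classical algorithm must consistently query, say, a $1$-certificate of a consistent $1$-input (outputting $0$ when no $1$-input remains consistent), so that all disagreement blocks are sensitive blocks of a \emph{single} reference $0$-input. Your formulation picks ``any still-consistent input $z$,'' which can mix $0$- and $1$-certificates across rounds; the disjoint blocks then split into two families, sensitive for a consistent $0$-input and a consistent $1$-input respectively. This still bounds the number of rounds by $O(bs(f))$, so the conclusion $D(f) = O(bs(f)^3)$ is unaffected, but it does not deliver the clean $D(f) \le bs(f)^3$ constant as stated. Fixing a single certificate type restores the exact bound.
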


Notice that for nearly all complexity measures $M$, we have $bs(f)^a \leq M(f) \leq bs(f)^b$ for some constants $a, b \geq 0$. The exception is sensitivity, for which it is famously open whether a polynomial in sensitivity upper bounds block sensitivity. There is, however, an \emph{exponential} relation due to Simon. 
\begin{theorem}[Simon \cite{simon:1983}]
\label{thm:simonsensitivity}
For all $f \colon \{0,1\}^* \to \{0,1\}$,  $bs(f) = O(s(f) 4^{s(f)})$.
\end{theorem}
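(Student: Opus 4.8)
The plan is to reduce the inequality to the classical combinatorial fact that a Boolean function depending on all $m$ of its variables must have an input of sensitivity $\Omega(\log m)$. Fix an input $x$ achieving $bs(f)=b$, witnessed by pairwise disjoint sensitive blocks $B_1,\dots,B_b$; by replacing each $B_i$ with a minimal sensitive sub-block, we may assume each $B_i$ is minimal (no proper subset of $B_i$ is a sensitive block at $x$), and without loss of generality $f(x)=0$, so $f(x^{B_i})=1$ for every $i$, where $x^{B}$ denotes $x$ with every coordinate of $B$ flipped. The key observation is that minimality forces \emph{every} coordinate of $B_i$ to be sensitive at the point $x^{B_i}$: flipping a coordinate $j\in B_i$ at $x^{B_i}$ yields $x^{B_i\setminus\{j\}}$, and since $B_i\setminus\{j\}$ is a proper subset of $B_i$, this point has value $f(x)=0\ne 1=f(x^{B_i})$. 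Hence $|B_i|\le s(f)$ for all $i$; write $s:=s(f)$.

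Next I would pass to a restriction. Let $U:=B_1\cup\cdots\cup B_b$, and let $g\colon\{0,1\}^U\to\{0,1\}$ be the restriction of $f$ obtained by fixing every coordinate outside $U$ to its value in $x$. Fixing variables can only decrease sensitivity, so $s(g)\le s$. On the other hand, for each $j\in U$, say $j\in B_i$, both $x^{B_i}$ and $x^{B_i\setminus\{j\}}$ agree with $x$ outside $U$, so the sensitivity computation above takes place entirely inside the subcube $\{0,1\}^U$; this shows that $g$ depends on every coordinate of $U$, i.e. $g$ is non-degenerate. Since the $B_i$ are disjoint and nonempty, $bs(f)=b\le|U|$. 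It therefore suffices to bound the number of variables of the non-degenerate function $g$ by $s(g)\cdot 4^{s(g)}\le s\cdot 4^{s}$.

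This last step is the combinatorial heart, and the part I expect to be the main obstacle: a Boolean function $g$ depending on all $m$ of its variables has an input of sensitivity at least $\frac{1}{2}\log_2 m - O(\log\log m)$, equivalently $m=O(s(g)\,4^{s(g)})$. The approach I would take is an adaptive ``discovery'' walk through the hypercube: starting from an arbitrary input, and using that every not-yet-seen variable is relevant, one repeatedly moves to a nearby input witnessing the sensitivity of a fresh variable, flipping only $O(s(g))$ coordinates per step; after $t$ steps one has discovered $O(t\cdot s(g))$ variables, so exhausting all $m$ of them requires $t=\Omega(m/s(g))$ steps, while a counting argument over the at most $2^{O(s(g))}$ possible patterns of recently seen sensitive coordinates forces some input encountered along the way to have sensitivity $\Omega(\log t)=\Omega(\log(m/s(g)))$; rearranging yields the bound. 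Making this walk precise, and in particular the accompanying pattern-counting, is the delicate part — it is essentially Simon's original argument — whereas the reduction in the first two paragraphs is routine once the minimality trick has been spotted.
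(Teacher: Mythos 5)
The paper offers no proof here: the theorem is stated as a black box and attributed to Simon, so the comparison target is simply ``cite Simon.'' Your first two paragraphs are a correct (and essentially folklore) reduction from this block-sensitivity form to Simon's theorem in its usual form, namely that a Boolean function depending on all $m$ of its variables has sensitivity $\Omega(\log m)$, equivalently $m = O(s\,4^{s})$. Taking minimal sensitive blocks to force $|B_i|\le s(f)$, restricting to $U=\bigcup_i B_i$ to obtain a non-degenerate $g$ with $s(g)\le s(f)$, and observing $bs(f)=b\le|U|$ with $|U|$ equal to the number of relevant variables of $g$ --- all of that is fine, and is the standard way to pass from Simon's relevant-variables bound to the statement as quoted.

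The genuine gap is your third paragraph, which is the part you actually need. As written it is not a proof, and I do not believe the sketch is on the right track. Two concrete problems. First, you assert that from any current input you can move, by flipping only $O(s(g))$ coordinates, to an input witnessing the sensitivity of a not-yet-seen variable; nothing forces the nearest sensitivity witness of a fixed variable to lie within $O(s)$ Hamming distance of wherever the walk currently sits, and in general it need not. Second, the ``counting argument over the at most $2^{O(s(g))}$ possible patterns of recently seen sensitive coordinates'' is left entirely unspecified: it is not said what a ``pattern'' is, why a repeat must occur, or why a repeat would produce an input of sensitivity $\Omega(\log t)$. You flag this yourself as ``essentially Simon's original argument,'' but Simon's proof is an inductive restriction argument on the number of variables, not a walk-with-pattern-counting argument, and nothing in your sketch recovers it. Since the paper does not reprove Simon's theorem either, the right move is to keep your reduction (it is worth stating explicitly) and cite Simon for the $m=O(s\,4^{s})$ step, rather than present a sketch that cannot presently be completed.
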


\begin{corollary}
\label{cor:qcallconstant}
If any query complexity measure in $\{ s, bs, C, D, R_0, R, Q, \approxdeg \}$ is $O(1)$, then all of them are $O(1)$.
\end{corollary}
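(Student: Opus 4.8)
The plan is to funnel every measure through block sensitivity, exploiting the fact (Theorem~\ref{thm:bs_relations} and Proposition~\ref{prop:qcinclusions}) that $\bs$ is polynomially sandwiched between the seven measures $\{\bs, C, D, R_0, R, Q, \approxdeg\}$, and the fact (Theorem~\ref{thm:simonsensitivity}) that $\bs$ is at worst exponentially related to $s$. First I would establish that $\bs(f) = O(1)$ implies all eight measures are $O(1)$. By the upper-bound half of Theorem~\ref{thm:bs_relations} we get $s(f) \le \bs(f)$, $C(f) \le \bs(f)^2$, and $D(f) \le \bs(f)^3$, each $O(1)$; then Proposition~\ref{prop:qcinclusions} passes this down the chain $\approxdeg(f) \le 2Q(f) \le 2R(f) \le 2R_0(f) \le 2D(f) = O(1)$. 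So $\bs = O(1)$ is enough to make the whole list $O(1)$.

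Next I would show the converse for the seven ``polynomially related'' measures: if any $M \in \{\bs, C, D, R_0, R, Q, \approxdeg\}$ is $O(1)$, then $\bs(f) = O(1)$. This is immediate from the lower-bound half of Theorem~\ref{thm:bs_relations} combined with Proposition~\ref{prop:qcinclusions}: $\bs(f) \le C(f)$; $\bs(f) = \Omega^{-1}$ of $R(f)$, hence $\bs(f) = O(R(f)) = O(R_0(f)) = O(D(f))$; and $\bs(f) = O(\approxdeg(f)^2)$, hence also $\bs(f) = O(Q(f)^2)$ since $\approxdeg(f) \le 2Q(f)$. Composing with the first paragraph, any one of these seven being $O(1)$ forces $\bs(f) = O(1)$, and therefore all eight measures to be $O(1)$.

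The only measure needing separate treatment is sensitivity, since $\bs$ is not known to be polynomially bounded by $s$. Here I would invoke Simon's theorem (Theorem~\ref{thm:simonsensitivity}): if $s(f) = O(1)$, say $s(f)(n) \le c$ for all $n$, then $\bs(f)(n) = O\!\left(s(f)(n)\, 4^{s(f)(n)}\right) = O(c\,4^c) = O(1)$, which puts us back in the previous case; and conversely, if any other measure is $O(1)$ then $\bs(f) = O(1)$, whence $s(f) \le \bs(f) = O(1)$. I do not expect a genuine obstacle in this corollary; the one point to be careful about is exactly this appearance of sensitivity, where the available relation to $\bs$ is merely exponential rather than polynomial. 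It suffices anyway, because an exponential function of a quantity that is bounded by a constant is itself bounded by a constant, so sensitivity does not escape the equivalence class of ``$O(1)$'' measures.
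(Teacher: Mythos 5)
Your proof is correct and follows exactly the route the paper intends (the paper omits an explicit proof, but places the corollary immediately after Theorem~\ref{thm:bs_relations} and Theorem~\ref{thm:simonsensitivity} precisely so that it can be read off by chaining through block sensitivity as a hub). You also correctly flag the one subtlety---that sensitivity is only exponentially, not polynomially, related to $\bs$, but that an exponential of a constant is still a constant---which is the whole reason Theorem~\ref{thm:simonsensitivity} is quoted.
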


\subsubsection{Alphabet size}
\label{sec:alphabet_size}

In this section, we discuss how alphabet size affects the various query measures.  Recall that the query complexity measures above are usually defined for \emph{Boolean} functions. Nevertheless, we would like to extend the known relationships between the complexity measures to functions over larger (yet constant) alphabets.  While it is true that many of these relationships generalize without too much work, we would like to avoid reproving the results one at a time.

Our solution is to simply encode symbols of $\Sigma$ as binary strings of length $\lambda := \lceil \log |\Sigma| \rceil$.  If the size of the alphabet $\Sigma$ is not a power of two, we can simply map the extra binary strings to arbitrary elements of $\Sigma$.  This maps a language $L \subseteq \Sigma^*$ to a language $\binL \subseteq \{ 0, 1 \}^{*}$ over binary strings.  Since regular languages are closed under inverse morphism, $\binL$ is regular if $L$ is regular.

It is also easy to see that almost all complexity measures are changed by at most a constant factor when converting to a binary alphabet.  For example, $D(L)(n) \leq D(\binL)(\lambda n)$ since for any bit we look at, there is some symbol we can examine that tells us that bit. In the other direction, $D(\binL)(n) \leq \lambda D(L)(\lambda n)$, since we can query the entire encoding of any symbol we query. Similarly, the encoding changes $R_0$, $R$, $Q$, $s$, $C$, and (with some additional work) $\approxdeg$, by at most a constant factor. The exception is block sensitivity. 

It is clear that $bs(L)(n) \leq bs(\binL)(\lambda n)$, since for any sensitive block of symbols there is some way to flip it, and this changes some block of bits. In the other direction, a block of sensitive bits gives a block of sensitive symbols in the obvious way, but then disjoint blocks of bits will not necessarily map to disjoint blocks of symbols, so it is difficult to say more for general languages. 

\begin{theorem}
\label{thm:alphabetsize}
Let $L \subseteq \Sigma^{*}$ be a regular language. Then, there exists constant $c$ such that $bs(L)(n) \ge c \cdot bs(\binL)(\lambda n)$ for all $n$. 
\end{theorem}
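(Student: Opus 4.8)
The easy direction, $bs(L)(n)\le bs(\binL)(\lambda n)$, is already noted in the text, so the task is the reverse inequality up to a constant. Write $\mathrm{dec}\colon\{0,1\}^{\lambda}\to\Sigma$ for the decoding map; applied coordinatewise it identifies binary strings of length $\lambda n$ with $\Sigma$-strings of length $n$ so that $\binL$-membership of a length-$\lambda n$ string equals $L$-membership of its decoding. Start from a block-sensitivity witness for $\binL$ at length $\lambda n$: an input $x$ and pairwise-disjoint blocks $B_1,\dots,B_b$ of bit-positions, each sensitive at $x$, with $b=bs(\binL)(\lambda n)$. Each $B_i$ induces a block $S_i\subseteq[n]$ of \emph{symbol} positions, namely the set of length-$\lambda$ windows on which the decoded symbol actually changes when $B_i$ is flipped. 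Then $S_i\ne\varnothing$ (otherwise flipping $B_i$ would not change the decoded word, hence not change $\binL$-membership) and $S_i$ is a sensitive block for $L$ at $y:=\mathrm{dec}(x)$. The key bookkeeping fact is that, because the $B_i$ are disjoint and each window contains only $\lambda$ bits, each window is touched by at most $\lambda$ of the $B_i$; hence each symbol position lies in at most $\lambda$ of the $S_i$, and in particular $\sum_i|S_i|\le\lambda n$. If the $S_i$ happened to be pairwise disjoint we would be done immediately; more generally, if they all had size at most some constant $t$, a greedy selection (repeatedly take one $S_i$ and discard it together with the fewer than $t\lambda$ other blocks meeting it) would extract $\Omega(b)$ pairwise-disjoint sensitive blocks for $L$, giving $bs(L)(n)=\Omega(b)$. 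The obstruction is that the $S_i$ may be as large as $\Theta(n)$ and heavily overlapping.

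To control this, I would use that $\binL$ is itself a (binary) regular language and split on the asymptotics of $bs(\binL)$. First establish -- this is the sensitivity dichotomy for binary regular languages, which I would prove separately -- that $s(\binL)$ is either $O(1)$ or $\Omega(n)$: the language of pairs (input, marked position) for which the marked position is sensitive for $\binL$ is again regular (run the DFA for $\binL$ up to the mark, then carry two parallel copies of the automaton, one for each value of the marked bit, and accept iff exactly one copy ends in an accepting state), so by the pumping lemma one cannot have a number of sensitive bits that grows without bound yet sublinearly. Combining with Simon's theorem (Theorem~\ref{thm:simonsensitivity}) and the elementary bound $s\le bs$, it follows that $bs(\binL)(\lambda n)$ is either $O(1)$ or $\Omega(n)$.

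Now the two cases. If $bs(\binL)(\lambda n)\le K$ for a constant $K$: whenever $bs(\binL)(\lambda n)\ge 1$ there is at least one nonempty induced block $S_i$, so $bs(L)(n)\ge 1\ge\frac{1}{K}\,bs(\binL)(\lambda n)$, while if $bs(\binL)(\lambda n)=0$ the inequality is trivial. If instead $bs(\binL)(\lambda n)\ge c''n$ for all large $n$: then from $\sum_i|S_i|\le\lambda n$ and averaging, at least $b/2$ of the $S_i$ have size at most $2\lambda n/b\le 2\lambda/c''$, a constant; the greedy selection above, applied to these $\ge b/2$ bounded-size blocks (each symbol in at most $\lambda$ of them), yields $\Omega(b)$ pairwise-disjoint sensitive blocks for $L$ at $y$, so $bs(L)(n)=\Omega(b)=\Omega(bs(\binL)(\lambda n))$. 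Taking $c$ to be the minimum of the constants produced in the two cases -- and absorbing the finitely many small $n$ not covered by the ``$\Omega(n)$'' bound, where the ratio $bs(L)(n)/bs(\binL)(\lambda n)$ is a positive rational (with $0/0$ read as $1$) -- completes the proof.

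The step I expect to be the real obstacle is the one forced by regularity: ruling out a sublinear-but-unbounded, heavily overlapping family of large induced blocks. That is exactly what the sensitivity dichotomy for $\binL$ buys, and its proof -- that the language of sensitive positions is regular, so that pumping boosts any super-constant sensitivity to $\Omega(n)$ -- is the crux; by contrast the decoding bookkeeping and the greedy extraction are routine. (A minor point to keep straight: ``sensitive block'' for the binary $\binL$ means flipping the entire block, so when passing to $L$ one flips the corresponding bits rather than resetting them arbitrarily, which is precisely why $S_i$ is well defined as the set of windows whose decoded symbol changes.)
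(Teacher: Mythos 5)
Your core strategy — establish a sensitivity dichotomy, then split into a constant case and a linear case — matches the paper's, and the greedy extraction in the linear case is a valid (if somewhat laborious) way to transfer block sensitivity from $\binL$ back to $L$. However, there is a genuine gap: you assert a \emph{uniform} sensitivity dichotomy for $\binL$ (``$s(\binL)$ is either $O(1)$ or $\Omega(n)$''), and this is false for general regular languages. The paper's Corollary~\ref{cor:sensitivity} holds only for \emph{flat} regular languages, and flatness is essential. Take $L$ to be parity intersected with $(\Sigma^2)^*$; then $\binL$ inherits the length-periodicity, so $s(\binL)$ is identically $0$ at odd lengths and $\Theta(n)$ at even lengths, hence neither $O(1)$ nor $\Omega(n)$. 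Your write-up treats the two cases as an exhaustive either/or over all $n$, and the remark about ``absorbing the finitely many small $n$ not covered by the $\Omega(n)$ bound'' misjudges the situation: for non-flat languages there may be \emph{infinitely} many lengths at which neither of your clauses applies as stated. The pumping argument you sketch does not by itself rule this out — it yields $\Omega(n)$ sensitivity only for a positive fraction of lengths, and one needs flatness (Property~\ref{property:flat}) to propagate the bound to all $n$.

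The paper handles this by invoking Theorem~\ref{thm:flattening} before appealing to the dichotomy: membership in $L$ reduces, for each $n$, to membership in one of finitely many flat languages depending on $n \bmod p$, which yields the correct \emph{per-length} statement — for every $n$, $s(L)(n)$ is either at most a fixed constant or at least $cn$. If you read your two cases as a per-$n$ case split justified by flattening, the rest of your argument goes through, with the final constant being the minimum of the constants from the two cases. A secondary stylistic point: the paper applies the dichotomy to $s(L)$ rather than $s(\binL)$, which makes the linear case essentially free — $bs(L)(n) \ge s(L)(n) = \Omega(n)$, while $bs(\binL)(\lambda n) \le \lambda n$ holds trivially, so both sides are $\Theta(n)$ — and this eliminates the need for your averaging-plus-greedy decomposition. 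Your route is not wrong, but it does more combinatorial work than necessary precisely because you placed the dichotomy on $\binL$ and then had to push a lower bound back through the encoding.
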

\begin{proof}
We borrow a dichotomy result\footnote{Note that Corollary~\ref{cor:sensitivity} is true for any alphabet size and does not depend on Theorem~\ref{thm:alphabetsize}, so the argument is not circular.} from Section~\ref{sec:dichotomies}, namely Corollary~\ref{cor:sensitivity}---any flat regular language has sensitivity either $O(1)$ or $\Omega(n)$.  Since $L$ is a regular language and not necessarily flat, we also borrow Theorem~\ref{thm:flattening} from Section~\ref{sec:formalstatement}---membership in $L$ reduces to membership in some flat language based on some finite suffix of the input string.  Therefore, for every length $n$, the sensitivity $s(L)$ is either constant or $\Omega(n)$, which we use to split the proof into two cases.

If the sensitivity $s(L)$ is constant, then $s(\binL)$ is also constant.  This implies that $bs(\binL)$ is constant by Theorem~\ref{thm:simonsensitivity}.   Therefore, $bs(L)$ is also constant since $bs(L)(n) \le bs(\binL)(\lambda n)$.  
If the sensitivity $s(L)$ is not constant, then it is linear by the dichotomy theorem.  Therefore, $s(L)(n) \leq bs(L)(n) \leq bs(\binL)(\lambda n)$ implies block sensitivity is linear for both languages from which the theorem follows. 
\end{proof}
With this theorem, every regular language and its encoding have the same complexity for all of the measures we are interested in, up to constants.  Therefore, we will lift known relationships between complexity measures in the Boolean setting to the general alphabet setting without further comment.


\section{Formal Statement}
\label{sec:formalstatement}

The na\"{i}ve version of the trichotomy theorem states that the quantum query complexity of a regular language is always $\Theta(1)$, $\tilde{\Theta}(\sqrt{n})$, or $\Theta(n)$. Unfortunately, this is not strictly true. We now explain the difficulty and a technique which we call ``flattening'' that allows us to formalize this statement. 

Let us see why flattening is necessary.  Consider any language which has large quantum query complexity (e.g., parity) and take its intersection with $(\Sigma^2)^{*}$, the language of even length strings. When the input length is odd, we know without any queries that the string cannot be in the language. When the input length is even, we have to solve the parity problem, which requires $\Omega(n)$ queries. Thus, the query complexity oscillates drastically between $0$ and $\Theta(n)$ depending on the length of the input.  Strictly speaking, this means the complexity is neither $\Theta(1)$, $\tilde{\Theta}(\sqrt{n})$, nor $\Theta(n)$; the na\"{i}ve statement of the trichotomy is false. 


We want to state the trichotomy only for languages which are length-independent. Fortunately, a DFA cannot count how many symbols it reads. With finite state, the best a DFA can do is count \emph{modulo} some constant. Thus, if there is any dependence on length, it is periodic.  Similarly, a language may have periodic dependence on position. For example, consider the language of all strings with exactly two $1$s. This language is star free and therefore has an $\tilde{O}(\sqrt n)$ quantum query algorithm. If we further require the $1$s to be an even distance apart, the language is no longer star free, but clearly has an $\tilde{O}(\sqrt{n})$ quantum query algorithm.  Flattening will reduce this language to a collection of star-free languages, and in general it will remove periodicities not inherent to the query complexity of the language.

Before continuing with flattening, we address a different way to handle \emph{length} dependence. That is, redefine the quantum query complexity of a function to be the minimum number of quantum oracle calls needed to compute the function on inputs of length \emph{up to} $n$ (rather than exactly $n$).  For this definition, notice that the quantum query complexity is nondecreasing. In Appendix~\ref{sec:monotone_query_complexity} we show that trichotomy theorem holds for \emph{all} regular languages under this definition as a simple consequence of Theorem~\ref{thm:classification}, the trichotomy theorem for \emph{flat} languages.  To be clear, we will continue to use the standard definition of quantum query complexity for the remainder of the paper.

\subsection{Flattening}

The main idea behind flattening is to eliminate a language's periodicities by dividing the strings into blocks.  For any string $x \in \Sigma^*$ of length $kn$, we can reimagine $x$ as a length-$n$ string over $\Sigma^k$.  This operation can be applied to a language by keeping only strings of length divisible by $k$ and projecting them to the alphabet $\Sigma^k$.  Flattening a regular language applies this operation to the language for some carefully chosen $k$ to be determined later.  Nevertheless, we argue that the language and its flattened version are essentially the same since we are simply blocking characters together.   We formalize this in the following theorem.

\begin{theorem}
\label{thm:flattening}
Let $L \subseteq \Sigma^{*}$ be a regular language recognized by a monoid $M$. There exists an integer $p \geq 2$ and a finite family of \emph{flat} regular languages $\{ L_i \}_{i \in I}$ over alphabet $\Sigma^{p}$ such that testing membership in $L$ reduces (in fewer than $p$ queries) to testing membership in $L_i$ for some $i$. Furthermore, the same monoid $M$ recognizes $L$ and every $L_i$ (although there may be a simpler monoid which recognizes $L_i$).
\end{theorem}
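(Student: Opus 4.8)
The plan is to build the family $\{L_i\}$ directly from a monoid morphism recognizing $L$, choosing the block size $p$ so that the morphism induced on $\Sigma^p$-words automatically satisfies the defining condition of flatness (the image of the one-letter words is closed under multiplication).

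\emph{Setup.} First I would fix a morphism $\varphi \colon \Sigma^* \to M$ and a set $S \subseteq M$ with $L = \varphi^{-1}(S)$. For a block size $p \geq 2$ (chosen below), view $\Sigma^p$ as an alphabet and let $\psi \colon (\Sigma^p)^* \to M$ be the unique morphism with $\psi(b) = \varphi(b)$ for each block $b \in \Sigma^p$, reading $b$ as a length-$p$ word over $\Sigma$; then $\psi$ agrees with $\varphi$ on any concatenation of blocks, $\psi(b_1 \cdots b_q) = \varphi(b_1 \cdots b_q)$. So $\psi$ ``is'' $\varphi$ restricted to lengths divisible by $p$.

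\emph{Choice of $p$.} Since $\varphi(\Sigma^{k+1}) = \varphi(\Sigma)\cdot\varphi(\Sigma^k)$ is a function of $\varphi(\Sigma^k)$ alone, the sequence of subsets $(\varphi(\Sigma^k))_{k \geq 0}$ of $M$ is ultimately periodic; I would fix an index $i_0$ and a period $d_0$ for it and take $p$ to be any multiple of $d_0$ with $p \geq \max(2, i_0)$. For such $p$, one has $\varphi(\Sigma^{p}) = \varphi(\Sigma^{kp})$ for every $k \geq 1$ (both exponents are $\geq i_0$ and $\equiv 0 \pmod{d_0}$), and in particular $\psi(\Sigma^p)^2 = \varphi(\Sigma^{2p}) = \varphi(\Sigma^p) = \psi(\Sigma^p)$. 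Thus the image of the one-letter words under $\psi$ is a subsemigroup of $M$, which is exactly what is required for any language recognized by $\psi$ to be flat. (If the working definition of flatness also constrains element-wise powers, one may additionally take $p$ a multiple of $\lcm$ of the periods of all elements of $M$, so that $m^p = m^{2p}$ for all $m \in M$; this is harmless.)

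\emph{The family and the reduction.} Index the family by short suffixes, $I = \bigsqcup_{r=0}^{p-1} \Sigma^r$, a finite set, and for $s \in I$ put $S_s = \{ m \in M : m\,\varphi(s) \in S \}$ and $L_s = \psi^{-1}(S_s) \subseteq (\Sigma^p)^*$. Each $L_s$ is recognized by $M$ via $\psi$, is flat by the previous paragraph, and the family is finite. For the reduction: given $x \in \Sigma^n$, compute $r = n \bmod p$, query the last $r < p$ symbols of $x$ to learn the suffix $s \in \Sigma^r$, write $x = y\,s$ with $|y| = n - r$ divisible by $p$, and reinterpret $y$ as $\tilde y \in (\Sigma^p)^{(n-r)/p}$; then $x \in L \iff \varphi(y)\varphi(s) \in S \iff \psi(\tilde y) \in S_s \iff \tilde y \in L_s$. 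The degenerate case $n < p$ (read all of $x$, with $\tilde y = \varepsilon$) is the same identity with $q = 0$. Since $p$ depends only on $M$, this is a reduction of fewer than $p$ additive queries and constant multiplicative overhead.

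\emph{Main obstacle.} The only substantive step is the choice of $p$: one must be sure that blocking by $p$ destroys \emph{all} of the language's periodicity (in position and in length), which here is packaged into the clean algebraic fact that $\varphi(\Sigma^p)$ is a subsemigroup of $M$. Getting this right is a matter of pushing $p$ past the index of, and making it a multiple of the period of, the ultimately-periodic subset sequence $(\varphi(\Sigma^k))_k$ (plus, if needed, the element periods). Everything else — finiteness of $I$, that $M$ still recognizes each $L_i$, and that the reduction spends fewer than $p$ queries and handles short inputs correctly — is routine bookkeeping, as is transferring query-complexity bounds through the reduction since $p = O_M(1)$.
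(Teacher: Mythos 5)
Your proof is correct and follows essentially the same route as the paper's: block by a period-aligned $p$, index the family by short suffixes $r$ with $L_r$ recognized by the same monoid via the block-restriction of $\varphi$, and spend fewer than $p$ queries reading the remainder. The only cosmetic difference is how $p$ is extracted---you use ultimate periodicity of the image-set sequence $(\varphi(\Sigma^k))_{k}$ inside $2^M$, while the paper argues dually via ultimate periodicity of each length-set $A_r = \lambda(\varphi^{-1}(r)) \subseteq \mathbb N$; both routes yield $\varphi(\Sigma^p)=\varphi(\Sigma^{kp})$ for all $k\ge 1$, which is the paper's ``conductor $1$'' condition (note this is strictly stronger than $\varphi(\Sigma^p)$ merely being a subsemigroup, though your actual computation does establish the stronger equality).
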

The full proof is in Appendix~\ref{sec:flattening} with the rest of the details about flattening a language. The key property of a flattened language is the following:

 \begin{property}
\label{property:flat}
Let $L \subseteq \Sigma^{*}$ be a flat regular language. For any non-empty string $x \in \Sigma^{+}$, and any non-zero length $k > 0$, there exists a string $y \in \Sigma^{k}$ of length $k$ such that for any $u, v \in \Sigma^{*}$, 
$$
uxv \in L \iff uyv \in L.
$$
That is, $x$ and $y$ belong to the same congruence class.
\end{property}

In other words, for any non-empty string $x$, we can replace (substring) occurrences of $x$ with some string of every (non-zero) length, without changing membership in the language. Notice that a flat regular language cannot have a length dependence, otherwise we would replace the first few letters with something slightly longer or shorter to reduce the problem to whichever nearby length is easiest. 


To summarize, any regular language can be reduced (or \emph{flattened}) to a collection of flat regular languages. Some of these languages may be easier than others, but they are all length-independent, and thus suitable for our trichotomy theorem. See Appendix~\ref{sec:flattening} for details.

\subsection{Formal Statement of Main Result}

We are now ready to formally state Theorem~\ref{thm:classification}.  Technically, there are a few regular languages (even flat languages), which can be decided with zero queries, strictly from the length of the input. This divides the languages into the following \emph{four} classes (i.e., a \emph{tetrachotomy}).

\begin{reptheorem}{thm:classification}
Every flattened regular language has quantum query complexity $0$, $\Theta(1)$, $\tilde{\Theta}(\sqrt{n})$, or $\Theta(n)$ according to the smallest class in the following hierarchy that contains the language.
\begin{itemize} [itemsep = 0pt]
\item Degenerate: One of the four languages $\varnothing$, $\varepsilon$, $\Sigma^{*}$, or $\Sigma^+$.
\item Trivial: The set of languages which have trivial regular expressions. 
\item Star free: The set of languages which have star-free regular expressions.
\item Regular: The set of languages which have regular expressions. 
\end{itemize}
Note that each class is contained in the next.  Furthermore, the quantum time complexity of each class matches its query complexity.
\end{reptheorem}

%
%

Nevertheless, we refer to this classification as a \emph{trichotomy}. We either think of degenerate and trivial languages under the category of ``constant query regular languages" or, alternatively, disregard the degenerate languages entirely because they are uninteresting.

As it turns out, the regular expression descriptions, some of which were already mentioned in Section~\ref{sec:background}, are not particularly useful for the classification. We will prefer the following algebraic/monoid definitions of the languages, and use them throughout. We prove they coincide with the regular expression characterizations in Appendix~\ref{sec:regex_parallels}.

\begin{theorem}
\label{thm:monoid_characterizations}
Let $L$ be a regular language.
\begin{itemize} [itemsep = 0pt]
\item $L$ is degenerate iff it is recognized by morphism $\varphi$ such that $|\varphi(\Sigma^+)| = 1$. 
\item $L$ is trivial iff it is recognized by morphism $\varphi$ such that $\varphi(\Sigma^+)$ is a finite rectangular band. 
\item $L$ is star free iff it is recognized by a finite aperiodic monoid. 
\item $L$ is regular iff it is recognized by a finite monoid. 
\end{itemize}
\end{theorem}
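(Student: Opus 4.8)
The plan is to treat the four bullets in turn, quoting the last two and giving short self-contained arguments for the first two. The fourth bullet is verbatim Theorem~\ref{thm:reg_lang_finite_monoid} and the third is verbatim Schützenberger's Theorem~\ref{thm:schutzenbergers_theorem}, so neither needs new work; the content to establish is that ``degenerate'' and ``trivial'', as defined by their regular expressions, coincide with the two algebraic conditions, and for the trivial case this is also the statement already announced as Theorem~\ref{thm:trivial_lang_rectangular_band}.

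For the degenerate bullet I would argue directly. ($\Rightarrow$) Each of $\varnothing$, $\varepsilon$, $\Sigma^{*}$, $\Sigma^{+}$ is recognized by the two-element monoid $M = \{1_M, z\}$ with $z$ a zero, via $\varphi(\varepsilon) = 1_M$ and $\varphi(a) = z$ for all $a \in \Sigma$; here $\varphi(\Sigma^{+}) = \{z\}$ has size one, and the four accepting sets $S \subseteq M$ recover exactly $\varnothing$, $\varepsilon$, $\Sigma^{+}$, $\Sigma^{*}$. ($\Leftarrow$) If $\varphi(\Sigma^{+}) = \{m\}$ then $m$ is idempotent, membership of any non-empty word in $L = \varphi^{-1}(S)$ depends only on whether $m \in S$, and membership of $\varepsilon$ depends only on whether $1_M \in S$; splitting on whether $1_M = m$ and on the two bits ``$m \in S$'' and ``$1_M \in S$'' exhibits $L$ as one of the four degenerate languages.

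The trivial bullet is the substantive one, and I would base it on the structure theorem for rectangular bands: every rectangular band is isomorphic to a direct product $P \times Q$ under $(p, q)(p', q') = (p, q')$. ($\Leftarrow$) Suppose $\varphi(\Sigma^{+})$ is a finite rectangular band; identify it with $P \times Q$ and write $\varphi(a) = (\alpha(a), \beta(a))$ for $a \in \Sigma$. By repeated use of the multiplication rule, $\varphi(w_1 \cdots w_k) = (\alpha(w_1), \beta(w_k))$ for every word of length $k \ge 1$, so $\varphi(w)$ depends only on the first and last letters of $w$ (and not on its length). Hence $L = \varphi^{-1}(S)$ is the union of $\{\varepsilon\}$ (present exactly when $1_M \in S$) with, for each accepted $(p, q)$, the set of non-empty words starting with a letter of $\alpha^{-1}(p)$ and ending with a letter of $\beta^{-1}(q)$; the latter is a finite union over $a \in \alpha^{-1}(p)$, $b \in \beta^{-1}(q)$ of the atom $a \mid a\Sigma^{*}a$ when $a = b$ and of $a\Sigma^{*}b$ when $a \ne b$. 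Each piece is a trivial regular expression and there are finitely many, so $L$ is trivial. ($\Rightarrow$) Conversely I would exhibit one fixed morphism recognizing \emph{every} trivial language: let $M = \{1\} \cup (\Sigma \times \Sigma)$ with $1$ an adjoined identity and $(a, b)(c, d) = (a, d)$, and put $\varphi(\varepsilon) = 1$, $\varphi(a) = (a, a)$. Then $\varphi(w_1 \cdots w_k) = (w_1, w_k)$ for $k \ge 1$, so $\varphi(\Sigma^{+}) = \Sigma \times \Sigma$ is a finite rectangular band, and the atoms are preimages of singletons: $a \mid a\Sigma^{*}a = \varphi^{-1}(\{(a, a)\})$, $a\Sigma^{*}b = \varphi^{-1}(\{(a, b)\})$ for $a \ne b$, and $\varepsilon = \varphi^{-1}(\{1\})$. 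Since the languages recognized by a \emph{fixed} morphism $\varphi$ (with $S$ ranging over all subsets) are closed under union, intersection, and complement, every Boolean combination of these atoms---i.e., every trivial language---is recognized by this $\varphi$.

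The main obstacle is the trivial/rectangular-band bullet: it is where one must invoke the rectangular-band structure theorem and, more importantly, keep track of the fact that the hypothesis constrains the \emph{morphism}---it is $\varphi(\Sigma^{+})$, not the whole monoid $M$, that must be a rectangular band---since a rectangular band with an identity adjoined is a perfectly good finite aperiodic monoid, so the monoid-level condition alone cannot separate trivial languages from general star-free ones. The rest is bookkeeping: handling $\varepsilon$, and the split between the atoms $a \mid a\Sigma^{*}a$ (which already includes the single letter $a$) and $a\Sigma^{*}b$ for $a \ne b$ (which is automatically of length $\ge 2$). The degenerate bullet is a finite case check, and the remaining two bullets are quoted.
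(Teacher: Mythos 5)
Your proof is correct, and it follows the same overall strategy as the paper's Appendix~B: quote Theorems~\ref{thm:reg_lang_finite_monoid} and \ref{thm:schutzenbergers_theorem} for the last two bullets, give a short direct argument for the degenerate bullet, and for the trivial bullet show one direction by exhibiting the concrete morphism into $\{1\}\cup(\Sigma\times\Sigma)$ with the projection product, and the other by arguing that $\varphi$ of any nonempty word depends only on its first and last letters. The one genuine difference is in that last step: you invoke the structure theorem for rectangular bands (every rectangular band is isomorphic to some $P\times Q$ with $(p,q)(p',q')=(p,q')$) to read off $\varphi(w_1\cdots w_k)=(\alpha(w_1),\beta(w_k))$ at a glance, whereas the paper avoids citing that classification and instead works directly from the two defining identities $r^2=r$ and $rst=rt$, computing $\varphi(a)=\varphi(aa)=\varphi(awa)$ and so on. Both are legitimate; yours is cleaner once the structure theorem is granted, while the paper's is more self-contained. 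You also make explicit two things the paper leaves informal (the two-element monoid construction for the degenerate $\Rightarrow$ direction, and the closure-under-Boolean-combinations remark justifying that a \emph{fixed} morphism recognizes every trivial language), and you correctly flag the key subtlety---that the rectangular-band condition must be imposed on $\varphi(\Sigma^+)$ rather than on all of $M$, since adjoining an identity to a rectangular band gives a perfectly good aperiodic monoid---which is exactly why Eilenberg's variety theorem, as the paper notes at the end of Appendix~B, forbids a purely monoid-level characterization here.
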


\subsection{Structure of the proof}

We separate the proof of the trichotomy into two natural pieces: upper bounds (Section~\ref{sec:upperbounds}) and lower bounds (Section~\ref{sec:lowerbounds}).   The upper bounds are derived directly from the monoid characterizations of the various classes.  Given a flat language, we construct explicit algorithms using at most 0 queries for degenerate languages, 2 queries for trivial languages, $\tilde{O}(\sqrt n)$ queries  for star-free languages, and $n$ queries for regular languages.

The lower bound section aims to prove that these are the \emph{only} possible classes.  First, we show that any non-degenerate language requires at least one quantum query.  We then show that any nontrivial language requires $\omega(1)$ quantum queries.  At this point, we will appeal to a dichotomy theorem for the block sensitivity of regular languages, which we prove in Section~\ref{sec:dichotomies}.  From this dichotomy and standard relationships between the complexity measures, we get that any regular language requiring $\omega(1)$ quantum queries actually requires $\Omega(\sqrt n)$ queries.  Finally, we show that any non-star-free language requires $\Omega(n)$ queries, completing the proof.


\section{Upper Bounds}
\label{sec:upperbounds}

In this section, we will describe the algorithms for achieving the query upper bounds in Theorem~\ref{thm:classification}.  As a warm-up, we will first consider every class besides the star-free languages.  Each algorithm will follow trivially from the monoid characterization of each class. 

\begin{proposition}
\label{prop:easy_upper_bounds}
Any regular language has an $O(n)$-time deterministic algorithm.  The trivial languages have constant-time deterministic algorithms.  The degenerate languages have $0$-query deterministic algorithms.\footnote{Note, the power of constant-time algorithms depends on the particular model of computation.  We assume a RAM model where the length of the input string is given, and arithmetic on indices can be performed in constant time.}
\end{proposition}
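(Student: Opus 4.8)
The plan is to verify each of the three claims directly from the monoid characterizations provided by Theorem~\ref{thm:monoid_characterizations}, since in each case the algorithm is essentially to simulate the recognizing homomorphism.

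\textbf{Regular languages in $O(n)$ time.} Let $L = \varphi^{-1}(S)$ for a homomorphism $\varphi \colon \Sigma^* \to M$ onto a finite monoid $M$ and some $S \subseteq M$ (which exists by Theorem~\ref{thm:reg_lang_finite_monoid} and the syntactic monoid construction). The algorithm reads the input $x = x_1 \cdots x_n$ left to right, maintaining the running product $\varphi(x_1) \cdots \varphi(x_i) \in M$; each step is one query and one monoid multiplication, which is $O(1)$ since $M$ is a fixed finite structure (we may assume its multiplication table is hard-coded). After $n$ queries we have $\varphi(x)$ and accept iff $\varphi(x) \in S$. Equivalently, one simply runs the DFA for $L$. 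This is $O(n)$ queries and, in the RAM model with constant-time table lookups, $O(n)$ time.

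\textbf{Trivial languages in constant time.} By Theorem~\ref{thm:monoid_characterizations}, $L = \varphi^{-1}(S)$ where $\varphi(\Sigma^+)$ is a finite rectangular band $B$; write $e = \varphi(\varepsilon) = 1_M$. For a nonempty input $x = x_1 \cdots x_n$ with $n \geq 1$, the rectangular property $rst = rt$ (together with idempotence) gives, by an easy induction on $n$, that $\varphi(x_1)\varphi(x_2)\cdots\varphi(x_n) = \varphi(x_1)\varphi(x_n)$; in particular $\varphi(x) = \varphi(x_1)\varphi(x_n)$ depends only on the first and last symbols. So the algorithm queries $x_1$ and $x_n$ (or just $x_1$ when $n = 1$, using $\varphi(x_1)^2 = \varphi(x_1)$ by idempotence), computes $\varphi(x_1)\varphi(x_n)$, and accepts iff it lies in $S$; for $n = 0$ it accepts iff $e \in S$, using $0$ queries. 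This is at most $2$ queries and constant time.

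\textbf{Degenerate languages with $0$ queries.} Here $|\varphi(\Sigma^+)| = 1$, so every nonempty string has the same image $m \in M$, and $\varepsilon$ has image $1_M$. Thus membership in $L$ is determined entirely by whether $n = 0$: if $n = 0$, accept iff $1_M \in S$; if $n \geq 1$, accept iff $m \in S$. Since we are given $n$, this requires no queries. (Concretely the four languages are $\varnothing$, $\{\varepsilon\}$, $\Sigma^+$, and $\Sigma^*$.)

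None of these steps is a genuine obstacle; the only point requiring a small argument is the reduction $\varphi(x_1)\cdots\varphi(x_n) = \varphi(x_1)\varphi(x_n)$ in the rectangular-band case, which follows immediately by induction from $rst = rt$, and the bookkeeping for the edge cases $n = 0$ and $n = 1$. The mild subtlety worth flagging is the model of computation: constant-time and $0$-query claims presume, as stated in the paper's footnote, a RAM model in which $n$ is given and index arithmetic is $O(1)$, and that the fixed finite monoid's operation table is available as a constant.
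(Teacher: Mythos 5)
Your proposal is correct and follows the same approach as the paper: simulate the homomorphism onto the recognizing monoid, using the finiteness of $M$ for the $O(n)$-time case, the rectangular-band collapse $\varphi(x) = \varphi(x_1)\varphi(x_n)$ for the trivial case, and the single-image property $|\varphi(\Sigma^+)| = 1$ for the degenerate case. The only stylistic difference is that you prove the rectangular-band collapse by induction from $rst = rt$, whereas the paper writes $x = ayb$ and applies the identity once to $\varphi(a)\varphi(y)\varphi(b)$; your version also handles the $n=1$ edge case via idempotence, which the paper glosses over, but the content is the same.
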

\begin{proof}
Let $L \subseteq \Sigma^*$ be a regular language.  Let $\varphi$ be the homomorphism onto its syntactic monoid $M_L$ such that $L = \{ \varphi^{-1}(s) : s \in S \subseteq M_L \}$.  Let $x = x_1 \ldots x_n \in \Sigma^n$.  We have that $x \in L$ iff $\varphi(x_1) \varphi(x_2) \ldots \varphi(x_n) \in S$.  Since $M_L$ is finite and $\varphi$ is specified by a finite mapping from characters to monoid elements, this product is computable in linear time.

Suppose $L$ is trivial.  Consider input $x = a y b$ where $a, b \in \Sigma$ and $y \in \Sigma^*$.  By the rectangular band property, we have $\varphi(x) = \varphi(a) \varphi(y) \varphi(b) = \varphi(a) \varphi(b)$.  That is, $x \in L$ iff $\varphi(ab) \in S$.

Suppose $L$ is degenerate.  Consider some input $x \in \Sigma^*$.  If $|x| = 0$, then $x \in L$ iff $\varphi(\varepsilon) \in S$.  If $|x| > 0$, then $\varphi(x) \in \varphi(\Sigma^+) = \{s\}$ so $x \in L$ iff $s \in S$.  Since the query algorithm knows the length in advance, no queries are needed to determine the membership of $x$.
\end{proof}

Of course, the existence of these deterministic algorithms implies their corresponding query upper bounds as well.  Much more interesting is the $\tilde{O}(\sqrt{n})$ quantum algorithm for star-free languages to which the remainder of this section is dedicated.  Much like Proposition~\ref{prop:easy_upper_bounds}, we will use the monoid characterization as our starting point for the algorithm; however, before delving directly into the details of the algorithm, we give some techniques and ideas that will be pervasive throughout.

\subsection{Proof techniques}
\label{sec:proof_techniques}

In this section, we introduce a basic substring search operation and a decomposition theorem (due to Sch\"utzenberger) for aperiodic monoids.

\subsubsection{Splitting and infix search}

Consider the language $L = \Sigma^{*} 2 0^{*} 2 \Sigma^{*}$ over the alphabet $\Sigma = \{ 0, 1 , 2 \}$, that is, the problem of finding a substring of the form $2 0^{*} 2$. We call the problem of finding a contiguous substring satisfying a predicate \emph{infix search}. Since $L$ is star free, our trichotomy theorem implies that infix search for the language $2 0^* 2$ is possible with $\tilde{O}(\sqrt n)$ queries.

Consider the following algorithm for $L$: Grover search for an index $i$ in the middle of a substring $20^{*}2$, searching outwards to verify that there is a substring of the form $20^{*}$ immediately before the index (\emph{suffix search}) and a substring of the form $0^{*} 2$ immediately after (\emph{prefix search}). More precisely, we can use Grover search to check whether a substring is all $0$s, then binary search to determine how far the $0$s extend on either side of the index, and finally check for $2$s on either end.

We introduce a few ideas necessary to prove this algorithm for $L$ is efficient, and to generalize it to arbitrary languages. The first tool we need is Grover search, to help us search for the position of the substring. In particular, we use a version of Grover search which is faster when there are multiple marked items.\footnote{
	In this section, we will need the speedup from multiple marked items. However, whenever we require the speedup, the marked items will be consecutive. In this case, we can derive the same speedup from any $\tilde{O}(\sqrt{n})$ unstructured search algorithm by searching over indices at fixed intervals (a ``grid" on the input). In more detail: we search for a grid size $G$, starting from $n$ and halving until $G$ is less than the number of consecutive marked items (which is unknown). Hence, the set of indices divisible by $G$ will intersect some marked item and the search on $n/G$ indices will succeed in $\tilde{O}(\sqrt{n/G})$ queries.  Since the last search dominates the runtime, the entire procedure requires $\tilde{O}(\sqrt{n/t})$ queries. \\
	\indent In fact, there are other models of computation where unstructured search uses $\tilde{O}(n^c)$ queries for $c \neq 1/2$ (for instance, \cite{aaronson_bouland:2016}).  It will turn out that the procedure described above still accelerates search for multiple consecutive marked items.  This will translate to an $\tilde{O}(n^c)$-query algorithm for star-free languages.  In particular, the runtime in Theorem~\ref{thm:scott_trick} becomes $\tilde{O}(n^c)$.
} 
\begin{theorem}[Grover search]
\label{thm:grover}
Given oracle access to a string of length $n$ which is 1 on at least $t \ge 1$ indices, there exists a quantum algorithm which returns a random index on which the oracle evaluates to 1 in $O(\sqrt{n/t})$ queries with constant probability.
\end{theorem}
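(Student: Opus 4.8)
The plan is to reduce the claim to the standard amplitude-amplification routine when the number of marked indices is known exactly, and then handle the ``at least $t$'' (in particular, unknown count) setting via the exponential-guessing trick of Boyer, Brassard, H{\o}yer, and Tapp. Throughout, let $t'$ denote the true number of marked indices, so $1 \le t \le t' \le n$, and recall $1/\sin(2\theta') = O(\sqrt{n/t'}) = O(\sqrt{n/t})$ for the angle $\theta'$ defined below.

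First I would recall the Grover iterate. Over the index space of size $n$, let $A$ prepare the uniform superposition $\ket{\psi} = \frac{1}{\sqrt{n}}\sum_i \ket{i}$, let $S_f$ be the phase flip on marked indices (one oracle call), and let $S_0$ be the phase flip on the subspace orthogonal to $\ket{\psi}$. Then $G = -A S_0 A^{-1} S_f$ acts as a rotation by $2\theta'$ in the plane spanned by the uniform superpositions over marked and unmarked indices, where $\sin^2\theta' = t'/n$; hence applying $G$ a total of $k$ times to $\ket{\psi}$ and measuring in the computational basis returns a marked index with probability exactly $\sin^2((2k+1)\theta')$, at a cost of $k$ queries. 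When $t'$ is known exactly and $t' \le n/2$, choosing $k = \lfloor \pi/(4\theta')\rfloor = O(1/\theta') = O(\sqrt{n/t'})$ makes $(2k+1)\theta'$ land within $\theta' \le \pi/4$ of $\pi/2$, giving success probability at least $\cos^2\theta' \ge 1/2$; if $t' > n/2$ a single random query already suffices. This is the conceptual core.

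For the general case only a lower bound $t$ on $t'$ is available (and a fixed number of iterations may overshoot $\pi/2$), so I would randomize the iteration count. The key estimate is the averaging identity
$$
\frac{1}{m}\sum_{k=0}^{m-1}\sin^2\big((2k+1)\theta'\big) \;=\; \frac{1}{2} - \frac{\sin(4m\theta')}{4m\sin(2\theta')},
$$
proved by writing $\sin^2 = \frac{1-\cos}{2}$ and summing the resulting geometric series of exponentials; since $|\sin(4m\theta')| \le 1$, this average is at least $\tfrac14$ whenever $m \ge 1/\sin(2\theta')$. Thus, if $k$ is chosen uniformly from $\{0,1,\dots,m-1\}$ with $m \ge 1/\sin(2\theta')$, applying $k$ Grover iterates to $\ket{\psi}$ and measuring yields a marked index with probability at least $1/4$.

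Finally I would assemble the BBHT loop, which is where the real work lies. Fix a constant $\lambda \in (1, 4/3)$ and iterate over rounds $j = 0,1,2,\dots$: in round $j$ set $m_j = \lceil \lambda^j \rceil$, pick $k$ uniformly from $\{0,\dots,m_j-1\}$, run $k$ Grover iterates on $\ket{\psi}$, measure an index $i$, spend one more query to check whether $i$ is marked, and halt returning $i$ if so. Once $m_j$ exceeds the threshold $1/\sin(2\theta') = O(\sqrt{n/t})$ each round halts with probability $\ge 1/4$, while round $j$ costs $O(m_j)$ queries and the $m_j$ grow geometrically; summing the geometric series shows the expected total number of queries before halting is $O(\sqrt{n/t})$. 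By Markov's inequality the algorithm halts within $O(\sqrt{n/t})$ queries with at least a constant probability, and whenever it halts it returns an index on which the oracle evaluates to $1$. The main obstacle is making the threshold analysis quantitative — verifying the averaging identity and confirming that the critical round index $j$ where $\lambda^j \approx 1/\sin(2\theta')$ contributes only an $O(\sqrt{n/t})$ term to the geometric sum — but this is routine once the identity above is in hand. (The same argument goes through verbatim with the inner search replaced by any black-box unstructured-search primitive, yielding the $\tilde O(n^c)$ variants referenced in the footnote.)
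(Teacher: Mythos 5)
The paper states Theorem~\ref{thm:grover} without proof: it is invoked as the standard ``Grover search with an unknown number of marked items'' primitive of Boyer, Brassard, H{\o}yer, and Tapp, which the authors treat as a black box (and the surrounding footnote even discusses substituting other unstructured-search primitives for it). Your proposal is a correct reconstruction of exactly that standard BBHT argument --- the rotation picture, the averaging identity $\frac{1}{m}\sum_{k=0}^{m-1}\sin^2((2k+1)\theta') = \frac{1}{2} - \frac{\sin(4m\theta')}{4m\sin(2\theta')}$, the exponentially growing iteration-count schedule with $\lambda\in(1,4/3)$, and the Markov truncation to convert expected cost into constant success probability --- so there is nothing in the paper to contrast it with; it matches the intended citation.
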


Next, the solution to $\Sigma^{*} 20^{*} 2 \Sigma^{*}$ used the fact that given an index, we can search outwards for a substring $20^{*}$ before the index and $0^{*}2$ after. Notice that the index has ``split" the regular language $20^{*}2$ into two closely related languages. It is not clear every language has this property, so we introduce a notion of splitting for arbitrary regular languages.

\begin{definition}
We say that a language $L \subseteq \Sigma^*$ \emph{splits} if there exists a constant $k$ and languages $A_1, \ldots, A_k, B_1, \ldots, B_k$ such that $L = \bigcup_{i=1}^{k} A_i B_i$ and for all $x \in L$ and decompositions $x = uv$, there exists $1 \leq i \leq k$ such that $u \in A_i$ and $v \in B_i$. We say $L$ \emph{splits as} $\bigcup_{i=1}^{k} A_i B_i$ to succinctly introduce the languages. 
\end{definition}

Formally, $2 0^* 2$ splits as $(20^{*}2) \varepsilon \cup (2 0^*)(0^* 2) \cup \varepsilon (20^{*}2)$. In fact, \emph{every} star-free language $L\subseteq \Sigma^{*}$ splits as $\bigcup_{i=1}^{k} A_i B_i$ where the $A_i$ and $B_i$ are also star free.  We will prove this in the next section in Theorem~\ref{thm:monoid_split}.  We delay the proof until we have the definitions to show that the languages $A_i$ and $B_i$ are in some sense no harder than the language $L$ itself.


Supposing we can determine membership for $\Sigma^{*} A_i$ and $B_i \Sigma^{*}$ efficiently, a combination of Grover search and exponential search will solve the infix search problem, as shown below. 

\begin{theorem}[Infix search]
\label{thm:scott_trick}
Let language $L \subseteq \Sigma^{*}$ split as $\bigcup_{i=1}^{k} A_i B_i$.   Suppose $Q(\Sigma^{*}A_i)$ and $Q(B_i \Sigma^{*})$ are $\tilde{O}(\sqrt{n})$ for all $i \in \{1, \ldots, k\}$. Then, $Q(\Sigma^* L \Sigma^*) = \tilde{O}(\sqrt{n})$.
\end{theorem}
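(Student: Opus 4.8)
The plan is to turn infix search into a \emph{localized} search, so that short witnesses are caught by small windows and long witnesses by a Grover search over a coarse grid. Write $x=x_1\cdots x_n$ and $x[a..b]=x_{a+1}\cdots x_b$ for $0\le a\le b\le n$. Since $L=\bigcup_iA_iB_i$ and $L$ splits, I would first record that
\[
x\in\Sigma^{*}L\Sigma^{*}\iff \exists\,0\le c\le n,\ \exists i:\ x[0..c]\in\Sigma^{*}A_i\ \text{and}\ x[c..n]\in B_i\Sigma^{*},
\]
together with the one-sided strengthening: whenever $x[a..b]\in L$, \emph{every} $c\in[a,b]$ makes the right side true for some $i=i(c)$, because splitting guarantees exactly this for the decomposition $x[a..b]=x[a..c]\cdot x[c..b]$. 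This is the one place where the ``for all decompositions'' clause in the definition of splitting is used (not merely $L=\bigcup_iA_iB_i$), and it is what makes the set of good anchors $c$ an interval. The reverse implication, hence soundness of everything below, is immediate from $A_iB_i\subseteq L$. The edge cases $L=\varnothing$ and $\varepsilon\in L$ (so $\Sigma^{*}L\Sigma^{*}=\Sigma^{*}$) are fixed facts about $L$, answered with $0$ queries.

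For a center $c$ and radius $r$, define the windowed test $\textsc{Check}(c,r)$ to accept iff there is an $i$ with $x[\max(0,c-r)..c]\in\Sigma^{*}A_i$ and $x[c..\min(n,c+r)]\in B_i\Sigma^{*}$. Each of the (constantly many) underlying membership tests runs on a window of length $O(r)$, so by hypothesis it costs $\tilde O(\sqrt r)$ queries; amplifying the bounded-error subroutines to error $1/\mathrm{poly}(n)$ keeps $\textsc{Check}(c,r)$ at $\tilde O(\sqrt r)$. It has two properties: \emph{soundness}---if $\textsc{Check}(c,r)=1$ then $x\in\Sigma^{*}L\Sigma^{*}$, since it exhibits $a\in[c-r,c]$ and $b\in[c,c+r]$ with $x[a..c]\in A_i$, $x[c..b]\in B_i$, hence $x[a..b]\in A_iB_i\subseteq L$; and \emph{localized completeness}---if $x[a..b]\in L$ with $b-a\le r$ and $c\in[a,b]$, then $\textsc{Check}(c,r)=1$, because $[a,b]\subseteq[c-r,c+r]$ and splitting at $c$ supplies the needed $i$.

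The algorithm then sweeps scales $G=2^{j}$ for $j=0,1,\dots,\lceil\log_2 n\rceil$: at scale $G$ it runs a Grover search over the grid points $\{0,G,2G,\dots\}\cap[0,n]$ that returns a marked grid point if one exists (Theorem~\ref{thm:grover}, with the standard doubling search for the unknown number of marked items, which also detects when there are none), where $c$ is \emph{marked} iff $\textsc{Check}(c,2G)=1$; it outputs ``yes'' as soon as some scale returns a marked point, and ``no'' if all scales fail. Soundness is clear. For completeness, given any witness $x[a..b]\in L$ with $\ell:=b-a\ge1$, take $G=2^{\lfloor\log_2\ell\rfloor}$, so $G\le\ell<2G$ and $G\le n$: the scale-$G$ grid has spacing $G\le\ell$ hence meets $[a,b]$ at some $c$, and since $\ell<2G$ localized completeness gives $\textsc{Check}(c,2G)=1$, so the scale-$G$ search succeeds with high probability. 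For the query bound, scale $G$ has $O(n/G+1)$ grid points, so its search makes $\tilde O(\sqrt{n/G+1})$ calls to $\textsc{Check}(\cdot,2G)$, each of cost $\tilde O(\sqrt G)$, giving $\tilde O(\sqrt n)$ queries at that scale; summing over the $O(\log n)$ scales yields $\tilde O(\sqrt n)$ overall, and a union bound over the $O(\log n)$ Grover phases and the $\tilde O(\sqrt n)$ amplified $\textsc{Check}$ calls controls the error. The construction is also time-efficient given time-efficient subroutines.

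I expect the only real obstacle to be this cost balancing. The na\"{i}ve algorithm---Grover over all $\Theta(n)$ anchors with a black-box $\tilde O(\sqrt n)$ membership test at each---spends $\tilde O(n)$, and escaping it requires \emph{simultaneously} shrinking the per-anchor test to a scale-sized window (legitimate only because of the ``for all decompositions'' form of splitting) and amortizing the many anchors against the fact that a length-$\ell$ witness forces $\Theta(\ell)$ consecutive good anchors, which the geometric sweep over $G$ exploits. Everything else---the reformulation, soundness, and the standard amplification and union-bound bookkeeping for running Grover on a bounded-error subroutine---is routine.
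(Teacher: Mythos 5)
Your proposal is correct and follows essentially the same route as the paper: an exponential sweep over scales, and at each scale a Grover search that trades the cost of the per-anchor check against the number of anchors to test, exploiting that the ``for all decompositions'' clause forces an interval of good anchors of length matching the witness. The one cosmetic difference---you Grover over a coarse grid of $O(n/G)$ points rather than over all $n$ indices with the multi-marked-items speedup---is exactly the alternative the paper itself describes in the footnote to Theorem~\ref{thm:grover}, so the two are interchangeable.
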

\begin{proof}
We perform an exponential search---doubling $\ell$ with $\ell$ initially set to 1---until the algorithm succeeds. Let $x$ be the input and suppose there is a substring of $x$ belonging to $L$ of length at least $\ell$ and at most $2\ell$, for some power of two $\ell$. Search for an index $j$ such that $x_{j - 2\ell} \cdots x_{j - 1} \in \Sigma^* A_i$ and $x_j \cdots x_{j + 2\ell-1} \in B_i \Sigma^*$ for some $i = 1, \ldots, k$. This implies the substring $x_{j-2\ell} \cdots x_{j+2\ell-1}$ is in $\Sigma^{*} A_i B_i \Sigma^{*} \subseteq \Sigma^{*} L \Sigma^{*}$.

Since testing each index requires at most $\tilde{O}(\sqrt \ell)$ queries and $k$ is constant, there are $\tilde{O}(\sqrt \ell)$ queries to the string to test a particular index $j$. Recall that we assumed the matching substring has length at least $\ell$, and thus, there are $\ell$ indices of $x$ for which the prefix/suffix queries will return true. Hence, there are at most $O(\sqrt{n / \ell})$ total Grover iterations (Theorem~\ref{thm:grover}), and the final algorithm requires only $\tilde{O}(\sqrt n)$ queries. 
\end{proof}

\subsubsection{Aperiodic monoids and Sch\"utzenberger's proof}

At its core, the algorithm for star-free languages uses one direction of Sch\"utzenberger's theorem for star-free languages, which we recall from Section~\ref{sec:background}.

\begin{reptheorem}{thm:schutzenbergers_theorem} 
If language $L$ is recognized by a finite aperiodic monoid, then $L$ is star free.
\end{reptheorem}

We will show that Sch\"utzenberger's proof can be modified to produce a $\tilde{O}(\sqrt n)$ algorithm for any star-free language starting from the aperiodic monoid recognizing it.  Central to this modification will be the notion of splitting introduced in the previous section.  In this section we give the basic prerequisites and outline for Sch\"utzenberger's proof which will eventually culminate in a formal justification of splitting based on the properties of aperiodic monoids.

Let $M$ be a finite aperiodic monoid recognizing some language $L \in \Sigma^*$.  Recall that 
$
L = \varphi^{-1}(S) = \bigcup_{m \in S} \varphi^{-1}(m),
$ where $\varphi \colon \Sigma^{*} \to M$ is a surjective monoid homomorphism, and $S \subseteq M$ is some subset of the monoid.  Thus, to show that $L$ is star free, it suffices to show that $\varphi^{-1}(m)$ is star free for each $m \in M$.  

One of the central ideas in Sch\"utzenberger's proof is to consider these languages in order of the size of the ideal\footnote{Let $M$ be a monoid and $I \subseteq M$ be a subset. We say $I$ is a \emph{right ideal} if $IM = I$, $I$ is a \emph{left ideal} if $MI = I$, and $I$ is an \emph{ideal} if $MIM = I$.  For example, for any $m \in M$, $mM$ is a right ideal, $Mm$ is a left ideal, and $MmM$ is an ideal.} they generate.  Formally, Sch\"utzenberger's proof is an induction on the \emph{rank} of $m$, defined as
$$
\rho(m) := |M - MmM|,
$$
that is, the number of elements not in $M m M = \{ amb : a \in M, b \in M \}$.  For example, $\rho(1) = 0$.  Rank is a particularly useful measure of progress in the induction due to the following proposition:

%

\begin{proposition}
\label{prop:rproduct}
For any $p, q \in M$ we have $\rho(p), \rho(q) \leq \rho(pq)$. 
\end{proposition}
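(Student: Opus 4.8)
The plan is to show both inequalities at once by observing that $MpqM \subseteq MpM$ and $MpqM \subseteq MqM$, which immediately gives $\rho(pq) = |M - MpqM| \geq |M - MpM| = \rho(p)$ and similarly $\rho(pq) \geq \rho(q)$. So the entire content of the proposition reduces to the two set containments.

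For the first containment, take any element $x \in MpqM$. By definition $x = a \cdot (pq) \cdot b$ for some $a, b \in M$. Using associativity of the monoid operation, $x = (ap) \cdot q \cdot b$, and since $ap \in M$ and $b \in M$, this exhibits $x$ as an element of $MqM$. Hence $MpqM \subseteq MqM$. For the second containment, take $x = a \cdot (pq) \cdot b = a \cdot p \cdot (qb)$ with $a \in M$ and $qb \in M$, so $x \in MpM$, giving $MpqM \subseteq MpM$.

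From the two containments we get $M - MpM \subseteq M - MpqM$ and $M - MqM \subseteq M - MpqM$, and taking cardinalities (all sets are finite, since $M$ is finite) yields $\rho(p) = |M - MpM| \leq |M - MpqM| = \rho(pq)$ and likewise $\rho(q) \leq \rho(pq)$. This completes the proof.

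There is essentially no obstacle here; the only thing to be careful about is not to misread the direction of the inequality. The intuition worth recording is that multiplying elements together can only \emph{shrink} (or preserve) the two-sided ideal they generate, so rank, which measures the complement of that ideal, can only \emph{grow} (or stay the same). This monotonicity is exactly what makes rank a legitimate induction parameter in Sch\"utzenberger's argument: when a language of rank-$m$ elements is decomposed using products, the pieces that appear involve elements of strictly smaller rank, so the recursion terminates.
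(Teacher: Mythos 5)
Your proof is correct and follows the same route as the paper: both derive the inequalities from the ideal containments $MpqM \subseteq MpM$ and $MpqM \subseteq MqM$. You simply spell out the associativity step that the paper leaves implicit.
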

\begin{proof}
$MpqM \subseteq MpM$, so $M - MpqM \supseteq M - MpM$. Therefore, $\rho(p) \leq \rho(pq)$. Similarly, $\rho(q) \leq \rho(pq)$. 
\end{proof}

It will turn out that only the identity of the monoid $M$ has rank 0. First, we show that a product of monoid elements is the identity if and only if every element is the identity. 
\begin{proposition}
	\label{prop:unique_rank_zero_element}
	For elements $p_1, \cdots, p_n \in M$ in an aperiodic monoid $M$, if $p_1 \cdots p_n = 1$ then $p_1 = \cdots = p_n = 1$. 
\end{proposition}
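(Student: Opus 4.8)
The plan is to prove the contrapositive-flavored statement directly: if the product $p_1 \cdots p_n = 1$, then each $p_i = 1$. It suffices to handle $n=2$, since if $p_1 \cdots p_n = 1$ we can group as $p_1 \cdot (p_2 \cdots p_n) = 1$, conclude $p_1 = 1$ and $p_2 \cdots p_n = 1$, and induct. So the real content is: in a finite aperiodic monoid $M$, if $ab = 1$ then $a = b = 1$.

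First I would observe that $ab = 1$ makes $a$ left-invertible and $b$ right-invertible. In a \emph{finite} monoid, one-sided invertibility forces two-sided invertibility: the map $x \mapsto ax$ is injective on $M$ (if $ax = ax'$ then $bax = bax'$, i.e. $x = x'$), hence surjective since $M$ is finite, so there is $c$ with $ac = 1$; then $c = 1 \cdot c = (ba)c = b(ac) = b$, so $b = c$ is a genuine two-sided inverse of $a$. Thus $a$ lies in the group of units of $M$. Now invoke aperiodicity: there is an integer $k \ge 0$ with $a^k = a^{k+1}$. Multiplying both sides by $b^k = (a^{-1})^k$ gives $1 = a$, and then $b = a^{-1} = 1$ as well.

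Alternatively, and perhaps more in keeping with the paper's use of the rank function $\rho$, one could argue via Proposition~\ref{prop:rproduct}: from $ab = 1$ we get $\rho(a), \rho(b) \le \rho(1) = 0$, so $\rho(a) = \rho(b) = 0$, meaning $MaM = M$ and $MbM = M$. This says $a$ and $b$ each generate all of $M$ as a two-sided ideal, which is a strong constraint; combined with aperiodicity (so $M$ has no nontrivial subgroups, in particular its group of units is trivial) one again forces $a = b = 1$. I expect the cleaner writeup is the first one, reducing to units and then killing the unit group with aperiodicity.

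The main obstacle — really the only subtlety — is making sure the aperiodicity hypothesis is actually used to rule out a nontrivial cyclic group of units; without it the statement is false (e.g. $\mathbb{Z}/2\mathbb{Z}$). The finiteness of $M$ is what upgrades one-sided to two-sided inverses, and aperiodicity ($x^k = x^{k+1}$ applied to $x = a$, so $a^k(a - 1) $ "absorbs," forcing $a = 1$ after cancelling by the inverse) is what collapses the unit. Everything else is routine monoid manipulation, and the reduction from general $n$ to $n = 2$ by induction is immediate.
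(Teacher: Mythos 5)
Your finiteness step as written is circular. You assert that $x \mapsto ax$ is injective because ``if $ax = ax'$ then $bax = bax'$, i.e., $x = x'$,'' but that last implication requires $ba = 1$ — which is precisely what you are in the middle of establishing. From $ab = 1$ alone, what you actually get is \emph{surjectivity} of $x \mapsto ax$ (since $a(bx) = (ab)x = x$ for every $x$); finiteness of $M$ then upgrades surjectivity to injectivity, and left cancellation applied to $a\cdot(ba) = (ab)\cdot a = a = a\cdot 1$ gives $ba = 1$. (Equivalently, use the map $x \mapsto xa$, which \emph{is} genuinely injective: $xa = x'a$ gives $xab = x'ab$, hence $x = x'$.) The rank-based alternative you sketch has a related circularity: passing from $MaM = M$ to $a = 1$ in an aperiodic monoid is exactly Corollary~\ref{cor:shitty_corollary}, which the paper derives \emph{from} the present proposition.

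Once the finiteness step is repaired, your route — first show $a$ is a two-sided unit via finiteness of $M$, then use aperiodicity to kill the unit group — is correct, but it is a genuinely different path from the paper's. The paper never establishes that $p$ is a unit; it substitutes $pq = 1$ into itself to get $p^i q^i = 1$ for all $i$, then uses $p^{n+1} = p^n$ to compute $p = p \cdot p^n q^n = p^{n+1} q^n = p^n q^n = 1$, and symmetrically for $q$. That computation needs only aperiodicity, not finiteness, and is noticeably shorter. Your version is a clean application of the standard fact that in a finite aperiodic monoid the group of units is trivial, at the cost of carrying the extra finite-monoid machinery.
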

\begin{proof}
	It suffices to prove the result for $n = 2$ and induct. Suppose $1 = pq$, and then by repeated substitution, 
	$$
	1 = pq = p 1 q = p^2 q^2 = \cdots = p^{i} q^{i},
	$$
	for any $i$. Since the monoid is aperiodic, there exists $n \geq 0$ such that $p^{n+1} = p^{n}$. Therefore, 
	$$
	p = p (p^{n} q^{n}) = p^{n+1} q^{n} = p^{n} q^{n} = 1.
	$$
	By symmetry, $q$ is also the identity.
\end{proof}

\begin{corollary}
\label{cor:shitty_corollary}
Let $M$ be a finite aperiodic monoid.  For any $m \in M$, $\rho(m) = 0$ iff $m=1$.
\end{corollary}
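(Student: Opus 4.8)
The plan is to prove the two directions separately. The easy direction is that $m = 1$ implies $\rho(m) = 0$: this is just the observation already noted in the text, namely $M 1 M = M$, so $M - M 1 M = \varnothing$ and $\rho(1) = |M - M1M| = 0$.

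For the converse, I would argue the contrapositive: if $m \neq 1$, then $\rho(m) \geq 1$, i.e., $M m M \neq M$. The natural thing to try is to show $1 \notin M m M$. Suppose for contradiction that $1 \in M m M$, so $1 = a m b$ for some $a, b \in M$. Now apply Proposition~\ref{prop:unique_rank_zero_element} to the product $a \cdot m \cdot b$ (a product of three elements), which forces $a = m = b = 1$; in particular $m = 1$, contradicting our assumption. Hence $1 \notin M m M$, so $M m M$ is a proper subset of $M$, giving $\rho(m) \geq 1 > 0$.

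I do not expect any real obstacle here — the corollary is an immediate consequence of Proposition~\ref{prop:unique_rank_zero_element} together with the definition of rank. The only thing to be slightly careful about is the direction of the implication and making sure that Proposition~\ref{prop:unique_rank_zero_element} is being invoked with the correct hypothesis (a product equal to the identity), which it is. One could alternatively phrase the converse directly: if $\rho(m) = 0$ then $M m M = M$, so in particular $1 \in M m M$, and then conclude $m = 1$ as above; this is perhaps the cleaner write-up since it mirrors the statement more closely.
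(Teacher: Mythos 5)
Your proof is correct and essentially matches the paper's: the key step in both is to observe that $\rho(m)=0$ (equivalently $MmM = M$) forces $1 = amb$ for some $a,b \in M$, and then invoke Proposition~\ref{prop:unique_rank_zero_element} to conclude $a=m=b=1$. The paper writes only this direct version (your "cleaner write-up" at the end), leaving the trivial $m=1 \Rightarrow \rho(m)=0$ direction implicit from the earlier remark that $\rho(1)=0$.
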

\begin{proof}
Suppose that $\rho(m) = 0$ for some monoid element $m \in M$.  By the definition of rank, we have that $M = MmM$, and in particular $1 \in M$ implies $1 = amb$ for some $a, b \in M$. By Proposition~\ref{prop:unique_rank_zero_element}, $a = b = m = 1$. 
\end{proof}


It is not hard to see that $\varphi^{-1}(1)$ is star free.  For $\rho(m) > 0$, Sch\"utzenberger decomposes $\varphi^{-1}(m)$ into a Boolean combination of star-free languages with strictly smaller rank, completing the proof.
To avoid recapitulating all of Sch\"utzenberger's proof, we simply quote the main decomposition theorem.
\begin{theorem}[Decomposition Theorem]
\label{thm:schutzdecomp}
For any $m \in M$, 
$$
\varphi^{-1}(m) = (U \Sigma^{*} \cap \Sigma^{*} V) \backslash (\Sigma^{*} C \Sigma^{*} \cup \Sigma^{*} W \Sigma^{*}).
$$
where
\begin{align*}
U &= \bigcup_{(r,a) \in E} \varphi^{-1}(r) a  \\
V &= \bigcup_{(a,r) \in F}  a \varphi^{-1}(r) \\
C &= \{ a \in \Sigma : m \notin M \varphi(a) M \} \\
W &= \bigcup_{(a,r,b) \in G}  a \varphi^{-1}(r) b 
\end{align*}
and
\begin{align*}
E &= \{ (r,a) \in M \times \Sigma : r \varphi(a) M = mM, rM \neq mM \}, \\
F &= \{ (a,r) \in \Sigma \times M : M \varphi(a) r = Mm, Mr \neq Mm \}, \\
G &= \{ (a,r,b) \in \Sigma \times M \times \Sigma : m \in (M \varphi(a) r M \cap M r \varphi(b) M) \backslash M \varphi(a) r \varphi(b) M \}.
\end{align*}
Furthermore, for all $r \in M$ appearing in $E$, $F$, or $G$, $\rho(r) < \rho(m)$. 
\end{theorem}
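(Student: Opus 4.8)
The plan is to establish the two inclusions $\varphi^{-1}(m) \subseteq \Phi$ and $\Phi \subseteq \varphi^{-1}(m)$, where $\Phi := (U\Sigma^{*} \cap \Sigma^{*}V) \backslash (\Sigma^{*}C\Sigma^{*} \cup \Sigma^{*}W\Sigma^{*})$ denotes the claimed right-hand side. Throughout I would reason with Green's relations $\mathcal{R}, \mathcal{L}, \mathcal{J}, \mathcal{H}$ and their associated preorders (so $x \leq_{\mathcal{J}} y$ means $x \in MyM$, and similarly for one-sided ideals), using two classical facts about finite monoids which are not proved in the excerpt but are entirely standard: \emph{stability} (if $x\,\mathcal{J}\,y$ and $xM \subseteq yM$ then $xM = yM$, and dually for left ideals), and the fact that a finite monoid is aperiodic if and only if it is $\mathcal{H}$-trivial (every $\mathcal{H}$-class is a singleton). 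Since $m = 1$ is the unique element of rank $0$ (Corollary~\ref{cor:shitty_corollary}) and is handled separately---$\varphi^{-1}(1)$ is exactly the set of words all of whose letters map to $1$ by Proposition~\ref{prop:unique_rank_zero_element}---I would assume $\rho(m) > 0$, i.e.\ $m \neq 1$.

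For $\varphi^{-1}(m) \subseteq \Phi$, fix $w$ with $\varphi(w) = m$. Reading $w$ from the left, the right ideals $\varphi(\mathrm{prefix})M$ form a chain decreasing from $M$ to $mM$; let $w_{1}\cdots w_{j}$ be the first prefix with image in $mM$, and put $r = \varphi(w_{1}\cdots w_{j-1})$, $a = w_{j}$. Minimality of $j$ gives $rM \neq mM$ while $r\varphi(a)M = mM$, so $(r,a)\in E$ and $w \in U\Sigma^{*}$; the claim $w \in \Sigma^{*}V$ follows symmetrically. No letter of $w$ lies in $C$, since every letter of $w$ is a factor of $\varphi(w) = m$. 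And if $w$ factored as $x\,a\,y\,b\,z$ with $r = \varphi(y)$ and $(a,r,b)\in G$, then $m = \varphi(w) \in M\varphi(a)\,r\,\varphi(b)M$, contradicting the defining condition of $G$; hence $w \notin \Sigma^{*}W\Sigma^{*}$.

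The reverse inclusion is the heart of the matter. Take $w \in \Phi$. The prefix witnessing $w \in U\Sigma^{*}$ already has image in $mM$, so $\varphi(w) \in mM$; dually $\varphi(w) \in Mm$; and $\varphi(w) \in mM \subseteq MmM$ gives $\varphi(w) \leq_{\mathcal{J}} m$ for free. By stability it then suffices to prove $\varphi(w)\,\mathcal{J}\,m$: this upgrades $\varphi(w)M \subseteq mM$ to $\varphi(w)M = mM$ and $M\varphi(w) \subseteq Mm$ to $M\varphi(w) = Mm$, so $\varphi(w)\,\mathcal{H}\,m$, whence $\varphi(w) = m$ by $\mathcal{H}$-triviality. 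To prove $\varphi(w)\,\mathcal{J}\,m$, suppose it fails, i.e.\ $m \notin M\varphi(w)M$. Scanning prefixes, the property ``$m \leq_{\mathcal{J}} \varphi(\mathrm{prefix})$'' holds at $\varepsilon$, fails at $w$, and persists under shortening the prefix (since $\varphi(\mathrm{longer})$ is a factor of $\varphi(\mathrm{shorter})$), so there is a shortest prefix at which it fails; avoidance of $\Sigma^{*}C\Sigma^{*}$ rules out this happening already at the first letter, so we obtain a letter $a$ and an element $g = \varphi(\mathrm{one\ letter\ shorter})$ with $m \leq_{\mathcal{J}} g$ but $m \not\leq_{\mathcal{J}} g\varphi(a)$, and symmetrically a right-hand witness. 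The main obstacle is then to convert this one-sided prefix/suffix bookkeeping into a genuine two-sided occurrence $w = x\,a\,y\,b\,z \in \Sigma^{*}W\Sigma^{*}$: the $W$-pattern constrains the \emph{overlapping} products $\varphi(a)\varphi(y)$ and $\varphi(y)\varphi(b)$ rather than one-sided prefixes, so one must carefully align the left and right ``drop points'' and exploit aperiodicity---via stability applied \emph{inside} the $\mathcal{J}$-class $J_{m}$, e.g.\ that $x,xy \in J_{m}$ forces $xy\,\mathcal{R}\,x$---to show the drop can be forced to occur within an overlap, producing the contradiction. This combinatorial stitching, essentially Sch\"utzenberger's original argument, is where I expect the real difficulty to lie.

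Finally, the rank bounds follow from stability. For $(r,a)\in E$: $r\varphi(a)M = mM$ gives $m \in MrM$, while $rM \supseteq r\varphi(a)M = mM$ together with $rM \neq mM$ gives $rM \supsetneq mM$, so $r$ cannot be $\mathcal{J}$-equivalent to $m$ (stability would then force $rM = mM$), hence $MmM \subsetneq MrM$ and $\rho(r) < \rho(m)$; the case of $F$ is dual. For $(a,r,b)\in G$: $m \in M\varphi(a)rM$ gives $m \in MrM$, and if $r\,\mathcal{J}\,m$ held then $\varphi(a)r\,\mathcal{J}\,r$ and $r\varphi(b)\,\mathcal{J}\,r$, so stability yields $\varphi(a)r\,\mathcal{L}\,r$ and $r\varphi(b)\,\mathcal{R}\,r$, whence $M\varphi(a)r\varphi(b)M = MrM = MmM$ and $m \in M\varphi(a)r\varphi(b)M$, contradicting membership of $(a,r,b)$ in $G$; so again $MmM \subsetneq MrM$ and $\rho(r) < \rho(m)$.
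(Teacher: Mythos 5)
First, note that the paper does not prove this theorem: just before stating it, the text reads ``To avoid recapitulating all of Sch\"utzenberger's proof, we simply quote the main decomposition theorem,'' and indeed no proof follows. So there is no internal proof to compare yours against; you are supplying one. That said, most of your proposal is correct. The easy inclusion $\varphi^{-1}(m) \subseteq (U\Sigma^{*}\cap\Sigma^{*}V)\setminus(\Sigma^{*}C\Sigma^{*}\cup\Sigma^{*}W\Sigma^{*})$ is right (one small point worth spelling out: you need $mM \neq M$ for $m\neq 1$ so that the prefix right-ideal chain really does drop; this follows since $1\in mM$ would force $m=1$ by Proposition~\ref{prop:unique_rank_zero_element}). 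The rank bounds for $E$, $F$, $G$ via stability are correct. The reduction of the hard inclusion to showing $\varphi(w)\,\mathcal{J}\,m$---using stability to upgrade $\varphi(w)\in mM\cap Mm$ to $\varphi(w)\,\mathcal{H}\,m$ and then $\mathcal{H}$-triviality of finite aperiodic monoids---is also correct and standard.

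The step you flag as the main obstacle is a genuine gap in your writeup, but it is more tractable than you seem to expect, and your ``symmetric'' scanning of prefixes and suffixes of $w$ is not quite the right move (the two witnesses need not align). Instead, do a \emph{nested} scan: let $k$ be minimal with $m \not\leq_{\mathcal{J}} \varphi(w_1\cdots w_k)$ (so $k\geq 2$ by $C$-avoidance and $m\leq_{\mathcal{J}}\varphi(w_1\cdots w_{k-1})$ by minimality); then, within the bad prefix $w_1\cdots w_k$, let $j$ be minimal with $m\leq_{\mathcal{J}}\varphi(w_j\cdots w_k)$ (so $2\leq j\leq k$: $j\leq k$ because $w_k\notin C$, and $j\geq 2$ because $j=1$ would contradict the choice of $k$). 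Put $a=w_{j-1}$, $y=w_j\cdots w_{k-1}$, $b=w_k$, $r=\varphi(y)$. Then $m\leq_{\mathcal{J}}\varphi(a)r=\varphi(w_{j-1}\cdots w_{k-1})$ because the latter divides $\varphi(w_1\cdots w_{k-1})$; $m\leq_{\mathcal{J}} r\varphi(b)=\varphi(w_j\cdots w_k)$ by the choice of $j$; and $m\not\leq_{\mathcal{J}}\varphi(a)r\varphi(b)=\varphi(w_{j-1}\cdots w_k)$ by minimality of $j$. Hence $(a,r,b)\in G$, and $ayb$ is a factor of $w$, contradicting avoidance of $\Sigma^{*}W\Sigma^{*}$. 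Once you fill in this nested-scan argument, your proposal becomes a complete proof.
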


%

To see the decomposition theorem worked out on a small example, we refer the reader to Appendix~\ref{app:schutzen_example}. Although Theorem~\ref{thm:schutzdecomp} is sufficient to prove Sch\"utzenberger's theorem, the same inductive approach does not immediately lead to a quantum algorithm for star-free languages.  For example, it is not clear how to efficiently decide membership in $U \Sigma^{*}$ given an algorithm for membership in $U$.\footnote{We will show this is possible, but it requires that the language is regular. In general, a $\tilde{O}(\sqrt{n})$-query algorithm for a language $L$ does not imply a $\tilde{O}(\sqrt{n})$-query algorithm for $L \Sigma^{*}$. We have a counterexample: consider the language $L$ of strings of the form $\# x_0 \# x_1 \# x_2 \# \cdots \# x_k \#$ such that all $x_i$ are binary strings of the same length and $x_i = x_k$ for some $i < k$. $L$ can be decided in $\tilde{O}(\sqrt{n})$ queries by a Grover search. There is a clear reduction from element distinctness to $L \Sigma^{*}$, therefore $Q(L \Sigma^{*})$ is at least $\Omega(n^{2/3})$. } In the next section, we will strengthen our induction hypothesis such that queries of this type are possible.  Let us conclude this section with a splitting theorem based on Sch\"utzenberger's notion of rank.

\begin{theorem}
\label{thm:monoid_split}
Let $L = \varphi^{-1}(m)$ for monoid element $m \in M$.  Then, $L$ splits as 
$$
\bigcup_{pq = m} \varphi^{-1}(p) \varphi^{-1}(q).
$$
Furthermore, for all elements of the union, $\rho(p), \rho(q) \le \rho(m)$.
\end{theorem}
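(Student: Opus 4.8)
The plan is to verify the two requirements in the definition of splitting directly from the fact that $\varphi \colon \Sigma^{*} \to M$ is a monoid homomorphism, and then read off the rank bound from Proposition~\ref{prop:rproduct}. The index family of the claimed decomposition is the set of pairs $(p,q) \in M \times M$ with $pq = m$; since $M$ is finite this family has size at most $|M|^2$, which supplies the constant $k$ required by the definition of ``splits''. (Terms $\varphi^{-1}(p)\varphi^{-1}(q)$ that happen to be empty are harmless, so I will not worry about surjectivity beyond the fact that $\varphi(\varepsilon) = 1$.)

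First I would establish the set equality $L = \bigcup_{pq = m} \varphi^{-1}(p)\varphi^{-1}(q)$ by two inclusions. For $\supseteq$: if $x = uv$ with $\varphi(u) = p$, $\varphi(v) = q$, and $pq = m$, then $\varphi(x) = \varphi(u)\varphi(v) = pq = m$, so $x \in \varphi^{-1}(m) = L$. For $\subseteq$: given $x \in L$, write $x = \varepsilon \cdot x$; since $\varphi(\varepsilon) = 1$ and $1 \cdot m = m$, we have $x \in \varphi^{-1}(1)\varphi^{-1}(m)$, which is one of the terms of the union.

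Next I would check the splitting property itself. Let $x \in L$ and fix any decomposition $x = uv$. Set $p = \varphi(u)$ and $q = \varphi(v)$. Then $pq = \varphi(u)\varphi(v) = \varphi(uv) = \varphi(x) = m$, so $(p,q)$ is a valid index of the union, and by construction $u \in \varphi^{-1}(p)$ and $v \in \varphi^{-1}(q)$. This is exactly what the definition of splitting demands, with the ``correct'' pair for the decomposition $x=uv$ being $(\varphi(u),\varphi(v))$.

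Finally, the rank bound is immediate: for any pair $(p,q)$ with $pq = m$, Proposition~\ref{prop:rproduct} gives $\rho(p), \rho(q) \le \rho(pq) = \rho(m)$. I do not anticipate a real obstacle in this argument --- the proof is essentially unwinding definitions --- so the only thing warranting care is confirming that the decomposition meets the \emph{formal} notion of splitting (a fixed finite family of products $A_iB_i$, plus the decomposition-covering property), rather than some weaker informal reading.
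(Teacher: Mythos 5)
Your proof is correct and follows essentially the same route as the paper's: verify the set equality by two inclusions, verify the decomposition-covering condition by setting $p = \varphi(u)$, $q = \varphi(v)$, and invoke Proposition~\ref{prop:rproduct} for the rank bound. The only cosmetic difference is that for $\subseteq$ you split $x$ as $\varepsilon \cdot x$ (giving the term $\varphi^{-1}(1)\varphi^{-1}(m)$) whereas the paper splits it as $x \cdot \varepsilon$ (giving $\varphi^{-1}(m)\varphi^{-1}(1)$); either works.
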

\begin{proof}
We first verify equality.  We have that $L \supseteq \cup_{pq = m} \varphi^{-1}(p) \varphi^{-1}(q)$ since 
$$
\varphi(\varphi^{-1}(p) \varphi^{-1}(q)) = \varphi(\varphi^{-1}(p)) \varphi(\varphi^{-1}(q)) = pq = m.
$$
Furthermore, 
$$
\bigcup_{pq = m} \varphi^{-1}(p) \varphi^{-1}(q) \supseteq \varphi^{-1}(m) \varphi^{-1}(1) = L.
$$
Now, suppose $x \in L$.  For any decomposition $x = uv$, we have that $\varphi(x) = \varphi(uv) = \varphi(u) \varphi(v) = m.$  Let $p = \varphi(u)$ and $q = \varphi(v)$.  Therefore, $u \in \varphi^{-1}(p)$ and $v \in \varphi^{-1}(q)$ with $pq = m$.  Finally, by Proposition~\ref{prop:rproduct} we get that $\rho(p), \rho(q) \le \rho(m)$.
\end{proof}

\subsection{\texorpdfstring{$\tilde{O}(\sqrt{n})$}{O(sqrt(n))} algorithm for star-free languages}
\label{sec:starfree_algorithm}

Recall that our objective is to create an $\tilde{O}(\sqrt n)$ algorithm for language $\varphi^{-1}(m)$, where $m \in M$ is an arbitrary monoid element. We mimic Sch\"utzenberger's proof of Theorem~\ref{thm:schutzenbergers_theorem} by constructing algorithms for each $\varphi^{-1}(m)$ in the order of the rank of $m$.  Implicit in such an argument is a procedure that must convert an efficient query algorithm for $\varphi^{-1}(r)$ into an efficient query algorithm for $\varphi^{-1}(r) a \Sigma^*$ for $(r,a) \in E$.




Notice that for $(r, a) \in E$, we have (by definition) that $rM \supsetneq r \varphi(a) M$. That is, the prefix of the input string matching $\varphi^{-1}(r) a$ is not an arbitrary location in the string, but one of finitely many points in the string where the right ideal \emph{strictly} decreases. We use this to our benefit in the following key lemma. 
\begin{lemma}
\label{lem:prefixrightideal}
Let $\varphi \colon \Sigma^{*} \to M$ be a monoid homomorphism. Suppose there exists an $\tilde{O}(\sqrt n)$ membership algorithm for $\varphi^{-1}(m)$ for any $m \in M$ such that $\rho(m) \leq k$. Then, there exists an $\tilde{O}(\sqrt n)$ algorithm to test membership in $L := \varphi^{-1}(r) a \Sigma^{*}$ for any $r \in M$ and $a \in \Sigma$ such that $\rho(r) \leq k$ and $rM \supsetneq r \varphi(a) M$. 
\end{lemma}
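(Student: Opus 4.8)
The plan is to exploit the hypothesis $rM \supsetneq r\varphi(a)M$ to show that a witness for membership in $L = \varphi^{-1}(r)\,a\,\Sigma^*$, if one exists, sits at an essentially unique position that can be located by binary search. Fix an input $x = x_1 \cdots x_n$ and write $r_j = \varphi(x_1 \cdots x_j)$, so that $r_0 = 1$ and the right ideals $I_j := r_j M$ form a weakly decreasing chain $M = I_0 \supseteq I_1 \supseteq \cdots \supseteq I_n$. If $x = uav$ is a witness with $|u| = i-1$, then $r_{i-1} = r$ and $r_i = r\varphi(a)$, so $I_{i-1} = rM$ while $I_i = r\varphi(a)M \subsetneq rM$. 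Since the chain is monotone, once it drops strictly below $rM$ it never again contains $rM$; hence $i-1$ is simultaneously the last index $j$ with $I_j = rM$ and the last index $j$ with $I_j \supseteq rM$. In particular there is at most one candidate witness position, and it is the threshold of the monotone predicate $P(j)$ ``$I_j \supseteq rM$'' (true on an initial segment of $\{0,\ldots,n\}$, false afterwards).

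With this in hand, I would run the following algorithm. First, binary-search over $j \in \{0, \ldots, n\}$ for the largest $j$ with $P(j)$ true — call it $t$ — noting $P(0)$ holds since $I_0 = M$. Each evaluation of $P(j)$ is a membership test, applied to the prefix $x_1 \cdots x_j$, for the finite union $\bigcup_{m' M \supseteq rM} \varphi^{-1}(m')$; and every such $m'$ satisfies $Mm'M \supseteq MrM$, hence $\rho(m') \le \rho(r) \le k$, so each $\varphi^{-1}(m')$ has a $\tilde{O}(\sqrt n)$ algorithm by the induction hypothesis, and so does their union. Over $O(\log n)$ rounds of binary search this costs $\tilde{O}(\sqrt n)$ queries (each prefix test on length $j \le n$ is $\tilde{O}(\sqrt j) \le \tilde{O}(\sqrt n)$). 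If $t = n$, reject. Otherwise query $x_{t+1}$; if $x_{t+1} \ne a$, reject. Finally, run the assumed $\tilde{O}(\sqrt n)$ algorithm for $\varphi^{-1}(r)$ (legitimate since $\rho(r) \le k$) on the prefix $x_1 \cdots x_t$, accepting iff it reports membership.

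For correctness: if $x \in L$ with witness at $i$, then $rM$ is achieved at $j = i-1$, so the threshold is $t = i-1$, giving $x_{t+1} = x_i = a$ and $x_1 \cdots x_t \in \varphi^{-1}(r)$, and the algorithm accepts; conversely, if it accepts then $x_1 \cdots x_t \in \varphi^{-1}(r)$, $x_{t+1} = a$, and $t+1 \le n$, so $x \in \varphi^{-1}(r)\,a\,\Sigma^* = L$. As usual, each of the $O(\log n)$ subroutine calls is first amplified to error $o(1/\log n)$ so that a union bound yields overall bounded error, at the cost of only a further polylogarithmic factor absorbed into $\tilde{O}(\sqrt n)$.

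I expect the only delicate points to be the right-ideal bookkeeping: checking that the witness position is forced to be the threshold of $P$ rather than merely of ``$I_j = rM$'' (so binary search succeeds even when $rM$ is strictly contained in some earlier $I_j$), and verifying that every monoid element whose preimage is queried along the way has rank at most $k$, so the induction hypothesis genuinely applies to each subroutine call. The hypothesis $rM \supsetneq r\varphi(a)M$ is exactly what makes the witness a point of \emph{strict} decrease of the chain, hence unique and locatable; without it this reduction would fail, which is why the main star-free algorithm ultimately needs membership conditions of precisely this ``rank-dropping'' form.
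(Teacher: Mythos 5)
Your proof is correct and matches the paper's argument essentially step for step. The paper also performs binary search on the monotone predicate ``the right ideal $\varphi(x_1 \cdots x_j)M$ still contains $r$'' (equivalently your $I_j \supseteq rM$; note $r \in I_j \iff rM \subseteq I_j$), using the same rank bound $\rho(m') \le \rho(r) \le k$ to justify the recursive calls, and then checks that the threshold prefix lies in $\varphi^{-1}(r)$ and is followed by $a$. Your explicit handling of $t = n$ and the remark about amplifying subroutine error to $o(1/\log n)$ are sensible details the paper leaves implicit, but the underlying reduction is the same.
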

\begin{proof}
Consider a string $x \in \Sigma^*$. The right ideal $\varphi(x_1 \cdots x_i) M$ represents the set of monoid elements we could reach after reading $x_1 \cdots x_i$. These right ideals descend as we read more of the string:
$$
M = \varphi(\eps) M \supseteq \varphi(x_1) M \supseteq \varphi(x_1 x_2) M \supseteq \cdots \supseteq \varphi(x_1 \cdots x_n) M = \varphi(x)M. 
$$
If $x \in L$, then there is some prefix $y$ in $\varphi^{-1}(r)$ followed by an $a$. By assumption, $\varphi(y) M = rM \supsetneq r \varphi(a) M = \varphi(ya) M$, so this is a point in the string where the right ideal strictly descends. 

Notice that $r \in r \varphi(a) M$ implies $rM \subseteq r \varphi(a) M$, and since we have $rM \supsetneq r \varphi(a) M$, we conclude that $r \notin r \varphi(a) M$. In other words, the right ideal descends from something containing $r$ (namely $rM$), to something not containing $r$ (namely $r \varphi(a) M$).

To decide whether $x$ belongs to $L$, it suffices to find the longest prefix $x_1 \cdots x_i$ such that $\varphi(x_1 \cdots x_i) M$ contains $r$. If $x_{i+1} = a$ and $x_1 \cdots x_{i} \in \varphi^{-1}(r)$, then the string is in $L$, otherwise there is no other possible prefix that could match $\varphi^{-1}(r) a$, so the string is not in $L$.
 
Define a new language $K$ where
$$
K := \bigcup_{s : r \in sM} \varphi^{-1}(s).
$$
This is precisely the language of strings/prefixes that could be extended to strings in $\varphi^{-1}(r)$. We can decide membership in $K$ with $O(\sqrt{n})$ queries because $r \in sM$ implies $MrM \subseteq MsM$ and hence $\rho(s) \leq \rho(r) \leq k$. 

It is also clear that $K$ is prefix closed: if $x_1 \cdots x_i \in K$ then $r \in \varphi(x_1 \cdots x_i) M \subseteq \varphi(x_1 \cdots x_{i-1}) M$, so $x_1 \cdots x_{i-1} \in K$ as well. The empty prefix is in $K$, and by binary search we can find the longest prefix in $K$. Then, as discussed above, we complete the algorithm by checking whether the prefix is (i) in $\varphi^{-1}(r)$ and (ii) followed by an $a$. If so, then we report $x \in L$, otherwise $x \notin L$. 
\end{proof}

We are now ready to state and prove our main theorem.
\begin{theorem}
\label{thm:main}
For any star-free language $L \subseteq \Sigma^{*}$, there exists a quantum algorithm which solves membership in $L$ with $\tilde{O}(\sqrt{n})$ queries and $\tilde{O}(\sqrt n)$ time. 
\end{theorem}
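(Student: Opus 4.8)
The plan is to induct on the rank $\rho(m)$ and prove the stronger statement that for every $m \in M$ we have $Q(\varphi^{-1}(m)) = \tilde O(\sqrt n)$, $Q(\Sigma^* \varphi^{-1}(m)) = \tilde O(\sqrt n)$, and $Q(\varphi^{-1}(m) \Sigma^*) = \tilde O(\sqrt n)$ (and by reflection the mirror-image prefix/suffix versions as well). Membership in $L$ itself then follows since $L = \bigcup_{m \in S} \varphi^{-1}(m)$ is a finite union. The base case is $\rho(m) = 0$, which by Corollary~\ref{cor:shitty_corollary} means $m = 1$; the language $\varphi^{-1}(1)$ and its one- and two-sided padded versions are easily seen to be trivial (or handled directly), giving $O(1)$ queries.

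For the inductive step, fix $m$ with $\rho(m) = k > 0$ and assume the three-part hypothesis for all $r$ with $\rho(r) < k$, and also (by a secondary induction within rank $k$, ordering by the size of the right/left ideals) for the lower-ideal elements needed. First I would handle $Q(\Sigma^* \varphi^{-1}(m))$: by Theorem~\ref{thm:monoid_split}, $\varphi^{-1}(m)$ splits as $\bigcup_{pq = m} \varphi^{-1}(p)\varphi^{-1}(q)$ with $\rho(p),\rho(q) \le \rho(m)$. To invoke Infix search (Theorem~\ref{thm:scott_trick}) I need $Q(\Sigma^* \varphi^{-1}(p))$ and $Q(\varphi^{-1}(q)\Sigma^*)$ to be $\tilde O(\sqrt n)$ — but here $p$ and $q$ can have rank exactly $k$, so I cannot simply cite the inductive hypothesis. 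This is where Lemma~\ref{lem:prefixrightideal} enters: when $\rho(p) = k$, the prefixes matching $\varphi^{-1}(p)$ inside a longer string can only occur at the finitely many positions where the right ideal strictly descends below something containing a witness element, and the lemma converts a membership algorithm for all elements of rank $\le k$ into one for $\varphi^{-1}(r)a\Sigma^*$. Combined with the Decomposition Theorem (Theorem~\ref{thm:schutzdecomp}) — which writes $\varphi^{-1}(m) = (U\Sigma^* \cap \Sigma^* V) \setminus (\Sigma^* C \Sigma^* \cup \Sigma^* W \Sigma^*)$ where $U = \bigcup_{(r,a)\in E}\varphi^{-1}(r)a$, $V$, $W$ involve only elements $r$ of rank strictly less than $k$, and $C \subseteq \Sigma$ is a finite set — I get: $U\Sigma^* = \bigcup_{(r,a)\in E} \varphi^{-1}(r)a\Sigma^*$ is decidable via Lemma~\ref{lem:prefixrightideal} applied to rank-$<k$ elements; $\Sigma^* V$ is the mirror; $\Sigma^* C \Sigma^*$ is a Grover search for a forbidden letter; and $\Sigma^* W \Sigma^* = \bigcup_{(a,r,b)\in G} \Sigma^* a\varphi^{-1}(r)b\Sigma^*$ is an infix search whose pieces $a\varphi^{-1}(r)$, $\varphi^{-1}(r)b$ have rank $<k$, so Theorem~\ref{thm:scott_trick} applies with the inductive hypothesis. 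Taking the Boolean combination of these $\tilde O(\sqrt n)$ subroutines gives $Q(\varphi^{-1}(m)) = \tilde O(\sqrt n)$.

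Having established $Q(\varphi^{-1}(m)) = \tilde O(\sqrt n)$ for all $m$, the padded versions at rank $k$ follow: $Q(\varphi^{-1}(m)\Sigma^*)$ is obtained by another application of Lemma~\ref{lem:prefixrightideal} (writing $\varphi^{-1}(m)\Sigma^*$ as a union over $(m',a)$ with $m = m'\varphi(a)$ of the form handled by the lemma, or directly finding the longest prefix whose right ideal still contains $m$), $Q(\Sigma^*\varphi^{-1}(m))$ is its mirror image, and then $Q(\Sigma^*\varphi^{-1}(m)\Sigma^*)$ follows from Infix search on the split of Theorem~\ref{thm:monoid_split} together with the two one-sided padded algorithms just obtained. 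Finally, since time complexity is never asymptotically worse than query complexity in these constructions (every subroutine is a Grover search, binary/exponential search, or finite Boolean combination, all implementable in $\tilde O(\sqrt n)$ time with RAM indexing), the time bound matches. The main obstacle I anticipate is making the within-rank induction airtight: one must order the elements of a fixed rank $k$ so that every recursive call in Lemma~\ref{lem:prefixrightideal} and in the Decomposition Theorem refers either to an element of strictly smaller rank or to one already processed, and verify that the $K$-language trick in Lemma~\ref{lem:prefixrightideal} (decide membership in $\bigcup_{s : r \in sM}\varphi^{-1}(s)$) really does stay within rank $\le k$ — which it does because $r \in sM$ forces $MrM \subseteq MsM$ hence $\rho(s) \le \rho(r)$ — and that the binary search over prefix-closed languages composes correctly with the outer exponential searches without blowing up the $\tilde O$ factors.
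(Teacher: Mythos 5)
Your framework matches the paper's: induct on $\rho(m)$, use the Decomposition Theorem~\ref{thm:schutzdecomp} for the inductive step, handle $U\Sigma^*$ and $\Sigma^*V$ with Lemma~\ref{lem:prefixrightideal}, and handle $\Sigma^*W\Sigma^*$ via Theorem~\ref{thm:monoid_split} plus infix search. However, your decision to strengthen the inductive hypothesis to include $Q(\Sigma^*\varphi^{-1}(m))$ and $Q(\varphi^{-1}(m)\Sigma^*)$ for every $m$ creates a genuine gap that the paper deliberately avoids.

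The one-sided padded claim $Q(\varphi^{-1}(m)\Sigma^*) = \tilde O(\sqrt n)$ does not follow from your two suggested routes. Lemma~\ref{lem:prefixrightideal} applies to $\varphi^{-1}(r)a\Sigma^*$ only under the strict-descent hypothesis $rM \supsetneq r\varphi(a)M$, and when you write $\varphi^{-1}(m)\Sigma^*$ as $\varphi^{-1}(m) \cup \bigcup_{m'\varphi(a)=m}\varphi^{-1}(m')a\Sigma^*$, there is no reason for any term to satisfy it. A concrete counterexample sits inside the paper's worked example (Appendix~\ref{app:schutzen_example}): for $m = \mathbf{ab}$, the only pairs with $m'\varphi(\alpha)=\mathbf{ab}$ are $(\mathbf{a},b)$ and $(\mathbf{ab},c)$, and both have $m'M = m'\varphi(\alpha)M = \mathbf{ab}M$, so the ideal never descends and the lemma is unusable. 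Your fallback, ``directly finding the longest prefix whose right ideal still contains $m$,'' also fails: the set of indices $i$ with $m \in \varphi(x_1\cdots x_i)M$ is prefix-closed, but within that prefix-closed region the image $\varphi(x_1\cdots x_i)$ can oscillate among several elements $s$ with $sM=mM$ (e.g., $\mathbf{a},\mathbf{ab},\mathbf{a},\mathbf{ab},\ldots$) without any structural handle on whether $m$ itself is hit. The strict-descent hypothesis in Lemma~\ref{lem:prefixrightideal} is precisely what guarantees a \emph{unique} candidate index, and without it the binary search does not decide the language. So the strengthened three-part hypothesis is not provable by the argument you sketch, and the ``secondary induction within rank $k$'' cannot rescue it.

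The second omission is the descent-condition verification in the $\Sigma^*W\Sigma^*$ case. After splitting $a\varphi^{-1}(r)b$ into $\bigcup_{pq=r} (a\varphi^{-1}(p))(\varphi^{-1}(q)b)$, infix search (Theorem~\ref{thm:scott_trick}) requires algorithms for $\Sigma^* a\varphi^{-1}(p)$ and $\varphi^{-1}(q)b\Sigma^*$ — note the intervening letters $a$ and $b$, which your inductive hypothesis (concerning $\Sigma^*\varphi^{-1}(p)$ and $\varphi^{-1}(q)\Sigma^*$, no intervening letter) does not supply. The paper obtains these from Lemma~\ref{lem:prefixrightideal}, but only after verifying $M\varphi(a)p \neq Mp$ and $q\varphi(b)M \neq qM$. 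This is a nontrivial step: it is proved by contradiction from the defining condition $m \in (M\varphi(a)rM \cap Mr\varphi(b)M) \setminus M\varphi(a)r\varphi(b)M$ of the set $G$, using $r=pq$. Your proposal cites the inductive hypothesis where this verification should be, so the argument does not close.

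The fix is exactly what the paper does: drop the padded forms from the inductive invariant entirely. Induct only on $Q(\varphi^{-1}(m))$ over $\rho(m)$. The padded languages $\varphi^{-1}(r)a\Sigma^*$, $\Sigma^*a\varphi^{-1}(p)$, $\varphi^{-1}(q)b\Sigma^*$ that actually arise come from $E$, $F$, $G$, and for those — and only for those — the descent condition holds (for $E$ by definition; for the split of $G$-triples by the contradiction argument above). So the padded membership problems are handled locally at the moment they appear, always for elements of rank strictly less than $\rho(m)$, and no stronger invariant is needed.
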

\begin{proof}
Let $L = \cup_{m \in S} \varphi^{-1}(m)$ for some homomorphism $\varphi \colon \Sigma^{*} \to M$ to an aperiodic finite monoid $M$, and $S \subseteq M$. We will show that there is an algorithm for each $\varphi^{-1}(m)$ by induction on the rank of $m$. 


Suppose first that $\rho(m) = 0$, implying that $m$ is the identity by Corollary~\ref{cor:shitty_corollary}. From Proposition~\ref{prop:unique_rank_zero_element}, we know that a string is in $\varphi^{-1}(1)$ if every character is in $\varphi^{-1}(1)$, i.e.,
$$
\varphi^{-1}(1) = \{ a \in \Sigma : \varphi(a) = 1 \}^{*}. 
$$
We can Grover search for a counterexample in $O(\sqrt{n})$ time to decide membership in $\varphi^{-1}(1)$.  

Now suppose $\rho(m)$ is nonzero. Our main tool is Theorem~\ref{thm:schutzdecomp}, which decomposes $\varphi^{-1}(m)$ into a Boolean combination of languages,
$$
\varphi^{-1}(m) = (U \Sigma^{*} \cap \Sigma^{*} V) \backslash (\Sigma^{*} C \Sigma^{*} \cup \Sigma^{*} W \Sigma^{*}), 
$$
where $U, V, C, W \subseteq \Sigma^{*}$ are as they appear in that theorem statement. We will also make reference to sets $E, F, G$ from Theorem~\ref{thm:schutzdecomp}. 

To give an algorithm for $\varphi^{-1}(m)$, it suffices to give an algorithm for each component of this Boolean combination: $U \Sigma^{*}$, $\Sigma^{*} V$, $\Sigma^{*} C \Sigma^{*}$ and $\Sigma^{*} W \Sigma^{*}$. Since $U$, $V$, and $W$ are finite unions of simpler languages, it suffices to consider each language in the union separately. 

The first component is $U \Sigma^{*}$, but we have already done most of the work for $U \Sigma^{*}$ in Lemma~\ref{lem:prefixrightideal}. Recall 
$$
U \Sigma^{*} = \bigcup_{(r,a) \in E} \varphi^{-1}(r) a \Sigma^{*}
$$
where $E = \{ (r,a) \in M \times \Sigma : r \varphi(a) M = mM, rM \neq mM \}$. This gives us an $\tilde{O}(\sqrt{n})$-time algorithm for $U \Sigma^{*}$. By symmetry, there also exists an algorithm for $\Sigma^{*} V$. Recall that $C = \{ a \in \Sigma : m \notin M \varphi(a) M \}$ is a finite set of characters, so membership in $\Sigma^{*} C \Sigma^{*}$ is decided by a Grover search for any of those characters.

The last component is $\Sigma^* W \Sigma^*$, which consists of a union of languages of the form $a \varphi^{-1}(r) b$ where $(a,r,b) \in G$. That is, $m \in M \varphi(a) r M$ and $m \in M r \varphi(b) M$ but $m \notin M \varphi(a) r \varphi(b) M$. We can use Theorem~\ref{thm:monoid_split} to split $W$ into 
$$
\bigcup_{pq = r} a \varphi^{-1}(p) \varphi^{-1}(q) b.
$$
We hope to apply Lemma~\ref{lem:prefixrightideal} to $\varphi^{-1}(q) b \Sigma^{*}$ and (in reverse) $\Sigma^{*} a \varphi^{-1}(p)$, then use infix search (i.e., Theorem~\ref{thm:scott_trick}) to try to find a substring in $W$, but first we need to verify that all the preconditions of these theorems are met---namely, that the rank of $p$ and $q$ are small, and $a$ and $b$ cause the ideal to descend.

First, the decomposition theorem (Theorem~\ref{thm:schutzdecomp}) gives that $\rho(r) < \rho(m)$, and by Proposition~\ref{prop:rproduct}, $\rho(p), \rho(q) \leq \rho(r)$. Next, suppose that $q \varphi(b) M = q M$. It follows that 
$$
M \varphi(a) r M = M \varphi(a) p q M = M \varphi(a) p q \varphi(b) M = M \varphi(a) r \varphi(b) M,
$$
but we know $m$ is in $M \varphi(a) r M$ and not in $M \varphi(a) r \varphi(b) M$, so we have a contradiction from the definition of $G$. Hence, $q \varphi(b) M \neq q M$, and by a symmetric argument $M \varphi(a) p \neq M p$, so we have $\tilde{O}(\sqrt{n})$-query algorithms for $\Sigma^{*} a \varphi^{-1}(p)$ and $\varphi^{-1}(q) b \Sigma^{*}$ from  Lemma~\ref{lem:prefixrightideal}. It follows that there is a $\tilde{O}(\sqrt{n})$ algorithm for $\Sigma^{*} W \Sigma^{*}$ as well.  
\end{proof}

This finishes the main theorem for this section. See Algorithm~\ref{alg:sf} for pseudocode.

\begin{algorithm}
	\caption{Star Free Language Algorithm}\label{alg:sf}
	\begin{algorithmic}
		\State{$\triangleright$ The monoid $M$, alphabet $\Sigma$, and homomorphism $\varphi \colon \Sigma^{*} \to M$ are fixed and known.}
		
		
				
		\Function{InfixSearch}{$x = x[1..n]$,$pred$}
		\State $\triangleright$ Searches for a substring matching the predicate \textit{pred}. See Theorem~\ref{thm:scott_trick}.
		\For{$\ell = 1,2,4,\ldots,n$}
		\State{Grover search for $i$ such that $\mathit{pred}(x[\min(1,i-\ell+1)..i], x[i+1..\max(i+\ell,n)]$ is true}
		\If{$i$ found}
		\State{\Return{\textsc{True}}}
		\EndIf
		\EndFor
		\State{\Return{\textsc{False}}}
		\EndFunction
		\Statex
		
		\Function{PrefixCheck}{$x$, $r$, $a$}
			\State $\triangleright$ \text{This function decides whether $x \in \varphi^{-1}(r) a \Sigma^{*}$ as described in Lemma~\ref{lem:prefixrightideal}.}
			\State $H \gets \{ s \in M : r \in sM \}$
			\State{Binary search for largest $1 \leq i < n$ satisfying $\bigvee_{s \in H} \textsc{Main}(x[1..i], s)$}
			\State \Return{$(x[i+1] = a) \wedge \textsc{Main}(x[1..i], r)$}
		\EndFunction
		\Statex
		
		\Function{RightIdeal}{$x$, $m$}
			\State{$\triangleright$ Checks if $x$ is in $U \Sigma^{*}$.}
			\State $E \gets \{ (r,a) \in M \times \Sigma : r \varphi(a) M = mM, rM \neq mM \}$
			\For{$(r,a) \in E$}
				\If{\Call{PrefixCheck}{$x$, $r$, $a$}}
					\State{\Return \textsc{True}}
				\EndIf
			\EndFor
			\State{\Return \textsc{False}}
		\EndFunction
		\Statex
				
		\State{$\triangleright$ Define \textsc{SuffixCheck} and \textsc{LeftIdeal} likewise. Details omitted.}
		\Statex

		\Function{Ideal}{$x$,$m$}
			\State$\triangleright$ Checks if $x$ is in $\Sigma^{*} W \Sigma^{*}$. 
			\State $G \gets \{ (a,r,b) \in \Sigma \times M \times \Sigma : m \in (M \varphi(a) r M \cap M r \varphi(b) M) \backslash M \varphi(a) r \varphi(b) M \}$
			\For{$(a,r,b) \in G$}
			
				\If{\Call{InfixSearch}{$x$, $(x_1, x_2) \mapsto \bigvee_{pq = r} \textsc{SuffixCheck}(x_1, p, a) \wedge\textsc{PrefixCheck}(x_2, q, b)$}}
					\State{\Return{\textsc{True}}}
				\EndIf
			\EndFor
			\State{\Return{\textsc{False}}}
		\EndFunction	
		\Statex
		

		\Function{\textsc{Main}}{$x = x[1..n]$, $m$}
		\State{$\triangleright$ Decides whether $x$ is in $\varphi^{-1}(m)$.}
		\If{$m = 1$}
			\State \Return{$\neg$ \Call{GroverSearch}{$\{1,\ldots,n\}$, $i \mapsto \varphi(x[i]) \neq 1$}}
		\Else
			\State \Return{$\left\{ \; \begin{matrix*}[l]
					\textsc{LeftIdeal}(x,m) &\wedge \\
					\textsc{RightIdeal}(x,m) &\wedge \\
					\neg \textsc{Ideal}(x,m) &\wedge \\
					\neg \textsc{GroverSearch}(\{1,\ldots,n\}, i \mapsto m \notin M \varphi(x[i]) M) &
				\end{matrix*}\right.$}

			\EndIf
		\EndFunction

	\end{algorithmic}
\end{algorithm}


\section{Dichotomy Theorems}
\label{sec:dichotomies}

In this section, we prove a dichotomy result for block sensitivity. This will be important for the next logical step in the trichotomy theorem: proving lower bounds to match our upper bounds in Section~\ref{sec:lowerbounds}. The core of this section is a dichotomy theorem for sensitivity, namely that the sensitivity is either $O(1)$ or $\Omega(n)$.  This implies an identical dichotomy for block sensitivity, from which the $\Omega(\sqrt{n})$ lower bound on approximate degree follows for all nontrivial languages. 

%

Regular languages are closed under an astonishing variety of natural operations. Our $\Omega(\sqrt{n})$ lower bound begins with one such closure property. Recall that a symbol in a string is \emph{sensitive} with respect to some input $x$ if changing only that symbol changes the value of the function. 
\begin{theorem}
\label{thm:sensitivity_mask_is_regular}
Let $L \subseteq \Sigma^{*}$ be a regular language. Define the language $S_L \subseteq \{ 0, 1 \}^{*}$ of all \emph{sensitivity masks} as follows.
$$
S_L := \{ y \in \{ 0, 1 \}^{*} : \text{there exists $x \in \Sigma^{*}$ such that $|x| = |y|$ and $x_i$ is sensitive in $L$ if and only if $y_i = 1$} \}
$$
Then, $S_L$ is regular.
\end{theorem}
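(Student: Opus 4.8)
The plan is to build a DFA (or equivalently, a monoid homomorphism with an accepting set) for $S_L$ directly out of the syntactic monoid $M_L$ of $L$. Let $\varphi \colon \Sigma^* \to M_L$ be the syntactic morphism and $S \subseteq M_L$ the accepting set, so $L = \varphi^{-1}(S)$. Whether position $i$ of a string $x = x_1 \cdots x_n$ is sensitive depends only on the congruence class $\varphi(x_1 \cdots x_{i-1})$ of the prefix, the symbol $x_i$, and the congruence class $\varphi(x_{i+1} \cdots x_n)$ of the suffix: position $i$ is sensitive iff there exists $a \in \Sigma$ with
\[
\bigl[\varphi(x_1\cdots x_{i-1})\,\varphi(x_i)\,\varphi(x_{i+1}\cdots x_n) \in S\bigr]
\;\neq\;
\bigl[\varphi(x_1\cdots x_{i-1})\,\varphi(a)\,\varphi(x_{i+1}\cdots x_n) \in S\bigr].
\]
So I would introduce, for each pair $(p, q) \in M_L \times M_L$, the set $D_{p,q} \subseteq \Sigma$ of symbols $c$ for which $p\,\varphi(c)\,q \in S$ does \emph{not} match $p\,\varphi(c')\,q \in S$ for some $c'$ — i.e.\ the set of ``locally sensitive'' symbols given left context class $p$ and right context class $q$. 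This $D_{p,q}$ is a finite, explicitly computable object.

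The technical obstacle — and I expect this to be the main one — is that membership in $S_L$ is an \emph{existential} statement over $x$: $y \in S_L$ iff there \emph{exists} an $x$ of the same length whose sensitivity mask is exactly $y$. A single left-to-right pass can track $\varphi(x_1\cdots x_{i-1})$, but it cannot know the suffix class $q = \varphi(x_{i+1}\cdots x_n)$, which is needed to decide sensitivity at position $i$. The standard fix is a product/guessing construction: nondeterministically \emph{guess} the sequence of suffix-class values and verify consistency. Concretely, I would run an automaton whose state, after reading $y_1 \cdots y_i$, records a pair $(p, T)$ where $p = \varphi(x_1 \cdots x_i)$ is the (guessed) prefix class built up so far and $T \subseteq M_L$ is the set of monoid elements $q$ such that "all remaining sensitivity constraints can still be satisfied with suffix class $q$". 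Transitions on reading $y_{i+1} \in \{0,1\}$: nondeterministically pick the next symbol $x_{i+1} = c$, update $p \mapsto p\,\varphi(c)$, and update $T$ by keeping only those $q \in M_L$ that can be written as $\varphi(c)\,q'$ for some $q' $ still in the (forward-compatible) set, while also enforcing that $y_{i+1}$ correctly records whether $c$ is sensitive given left class (the $p$ before this step) and right class (some $q'$ consistent with $T$'s successor). Since $M_L$ is finite, there are finitely many states $(p, T)$, so this is a genuine NFA and hence $S_L$ is regular.

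An alternative, cleaner route that avoids juggling forward/backward consistency by hand: observe that the map $x \mapsto (\text{sensitivity mask of } x)$ is a \emph{letter-to-letter (length-preserving) finite-state transduction} — each output bit $y_i$ is a function of the prefix class, $x_i$, and suffix class, which is exactly the form computed by a (two-way-free, one-pass-with-lookahead) rational transducer, or more simply a composition of a right-to-left sweep computing suffix classes with a left-to-right sweep. Regular languages are closed under images of rational transductions, and $S_L$ is precisely the image of $\Sigma^*$ under this transduction, hence regular. Either way, the key steps are: (1) reduce sensitivity-at-$i$ to the finite data $(\varphi(\text{prefix}), x_i, \varphi(\text{suffix}))$ and the finite sets $D_{p,q}$; (2) handle the existential quantifier over $x$ via a subset/product construction tracking compatible suffix classes; (3) conclude finiteness of the state set from finiteness of $M_L$. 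I would present step (2) carefully since that is where all the real content lies; steps (1) and (3) are routine once the definitions are in place.
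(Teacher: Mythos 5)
Your proposal is correct in outline but takes a genuinely different route from the paper's. The paper does not go through the syntactic monoid at all: it introduces the intermediate language $S'_L$ over $(\Sigma \times \{0,1\})^*$, shows $\overline{S'_L}$ is accepted by an NFA that nondeterministically guesses a single violating position $i$ (either $y_i=0$ yet some single-symbol change flips membership, or $y_i=1$ yet no change flips it) and simulates the DFA for $L$ along the relevant branches, then obtains $S_L$ from $S'_L$ by nondeterministic projection. Your approach instead reduces sensitivity at $i$ to the triple $(\varphi(\text{prefix}), x_i, \varphi(\text{suffix}))$ in the syntactic monoid and builds a recognizer for $S_L$ from that local criterion; this is a sound idea, and your second route (view $x \mapsto \mathrm{sens}(x)$ as a letter-to-letter rational transduction obtained by composing a right-to-left pass annotating suffix classes with a left-to-right pass, then take the image of $\Sigma^*$) is clean and correct. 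Your first route is correct in spirit but the state description $(p, T)$ with $T$ a \emph{set} of suffix classes is more complicated than needed and, as written, somewhat muddled: since you are already allowing nondeterminism, you should simply have the NFA guess a single suffix class $q_i \in M_L$ at each position, verify backward consistency $q_{i} = \varphi(x_{i+1})\, q_{i+1}$ on each transition, check $q_n = 1$ at the end, and check $y_i$ against $D_{p_{i-1}, q_i}$ for the guessed $x_i$; the subset $T$ is a determinization artifact that is neither needed nor obviously sound for enforcing the exact-mask condition, because the constraint at each step depends on the \emph{actual} suffix class, not a set of possible ones. What each approach buys: the paper's argument is shorter and more elementary (no monoid machinery, just DFA simulation and a one-position guess), while yours makes the algebraic structure explicit, which fits the rest of the paper's monoid-centric development and the local-decidability intuition behind the trichotomy.
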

\begin{proof}
This is an exercise in using non-determinism, but since there are a few levels, let us spell out the details. First, let us show that the following language is regular:
$$
S'_L := \{ (x_1, y_1) \cdots (x_n, y_n) \in (\Sigma \times \{0,1\})^{*} : \text{$y_1 \cdots y_n$ indicate the sensitive bits of $x_1 \cdots x_n$} \}.
$$

How do we go about proving a string is \emph{not} in $S'_L$? There are two possibilities:
\begin{itemize} [itemsep=0pt]
\item Find some $i$ such that $y_i = 0$, but changing $x_i$ flips membership in the language.
\item Find some $i$ such that $y_i = 1$, but all possible changes to $x_i$ fail to flip membership in the language.
\end{itemize}
Each of these can be checked by a co-non-deterministic finite automaton. In the first case, we guess a position $i$ where $y_i = 0$, guess the new value of $x_i$, simulate the DFA on both paths and verify that they produce different outcomes. In the second case, we also guess a position $i$ where $y_i = 1$, but now simulate the original DFA for all possible values of $x_i$ and ensure that they are the same. Since there is a coNFA for $S'_L$, we get that $S_L$ is regular.

Now use non-determinism to reduce $S'_L$ to $S_L$: a string $y_1 \cdots y_n \in \{ 0, 1 \}^{*}$ is in $S_L$ if we can guess the accompanying $x_1 \cdots x_n \in \Sigma^{*}$ that puts it in $S'_L$. We conclude that there is an NFA accepting $S_L$, and therefore $S_L$ is regular.
\end{proof}

\begin{corollary}
\label{cor:sensitivity}
Let $L$ be a flat regular language. The sensitivity of $L$ is either $O(1)$ or $\Omega(n)$. 
\end{corollary}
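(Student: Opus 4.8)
Suppose $s(L) \neq O(1)$; I will show $s(L) = \Omega(n)$, which together with the trivial lower bound gives the dichotomy. The starting point is Theorem~\ref{thm:sensitivity_mask_is_regular}: the language $S_L$ of sensitivity masks is regular. Since the sensitivity of an input $x$ of length $n$ equals the number of $1$'s in the (unique) mask of $x$, the assumption $s(L) \neq O(1)$ means precisely that $S_L$ contains strings with arbitrarily many $1$'s. Fix a DFA $B$ for $S_L$ and take $y \in S_L$ whose accepting run reads a $1$ more than $|Q_B|$ times; by pigeonhole, two of those $1$-transitions start from the same state, so $y$ factors as $y=\alpha\beta\gamma$ with $\beta$ labelling a loop of $B$, $|\beta|\ge 1$, and $\beta$ containing at least one $1$. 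Hence $\alpha\beta^k\gamma\in S_L$ for all $k\ge 0$, and writing $b=|\beta|$, $c=(\#\,1\text{'s in }\beta)\ge 1$, $L_0=|\alpha|+|\gamma|$, we get a string in $S_L$ of length $L_0+kb$ with at least $kc$ ones. For $k$ past the threshold $k_1=\lceil L_0/b\rceil$ we have $L_0+kb\le 2kb$, so $kc \ge \tfrac{c}{2b}(L_0+kb)$, i.e. $s(L)(L_0+kb)\ge \delta(L_0+kb)$ with $\delta:=c/(2b)$. This already gives linear sensitivity along the arithmetic progression $\{L_0+kb\}_{k\ge k_1}$.

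The remaining task is to upgrade this to \emph{all} sufficiently large lengths, and this is exactly where flatness is used. The key observation is a ``one-character padding'' trick based on Property~\ref{property:flat}: given an input $x=x_1x_2\cdots x_n\in\Sigma^n$ and any target length $N\ge n$, replace the single character $x_1$ by a string $w\in\Sigma^{\,N-n+1}$ in the same $\sim_L$-congruence class (which exists by Property~\ref{property:flat}, since $x_1$ is a nonempty string and $N-n+1\ge 1$), obtaining $x'=w\,x_2\cdots x_n\in\Sigma^N$. I claim every position $\ell\in\{2,\dots,n\}$ that is sensitive in $x$ remains sensitive in $x'$: if $x_1\sim_L w$ then for all $u,v$ we have $u x_1 v\in L \iff u w v\in L$, and applying this with $u=\varepsilon$ and $v$ ranging over $x_2\cdots x_n$ and over $x_2\cdots x_n$ with position $\ell$ altered shows that the membership flip witnessing sensitivity of $\ell$ in $x$ transfers verbatim to $x'$. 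Thus $x'$ has at least $s(x)-1$ sensitive positions.

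Combining the two parts: given any large $N$, choose the largest $k\ge k_1$ with $n:=L_0+kb\le N$; then $n> N-b$, so $n\ge N/2$ once $N$ is large enough, and we have an input of length $n$ with at least $\delta n\ge \delta N/2$ sensitive positions. Padding it up to length $N$ by the trick above yields an input of length $N$ with at least $\delta N/2-1$ sensitive positions, so $s(L)(N)=\Omega(N)$, as desired. I expect the only delicate points to be (i) arranging in the pumping step that the pumped block genuinely contains a $1$ (handled by the pigeonhole on the states immediately preceding $1$-transitions, rather than a blind application of the pumping lemma), and (ii) verifying that the flatness padding step loses only a single sensitive bit; everything else is bookkeeping with the constants $\delta$, $b$, $L_0$.
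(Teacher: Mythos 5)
Your proof is correct and follows the paper's strategy: regularity of $S_L$ via Theorem~\ref{thm:sensitivity_mask_is_regular}, pumping to produce sensitivity masks with linearly many $1$'s, then flatness to pad out to all lengths. The only difference is tactical---the paper pumps \emph{down} blocks of consecutive zeros to raise the density of $1$'s (and also pumps down arbitrary blocks to hit more lengths), whereas you use pigeonhole on the states preceding $1$-transitions in a DFA for $S_L$ to isolate a $1$-containing loop and pump it \emph{up}---and your version makes the bookkeeping somewhat more explicit, including the precise ``lose at most one sensitive bit'' accounting in the flatness padding step.
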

\begin{proof}
Consider the language of sensitivity masks $S_L$ as defined in Theorem~\ref{thm:sensitivity_mask_is_regular}.  Notice that for a given length $n$, the sensitivity of $L$ is exactly the weight of the maximum Hamming weight string in $S_L$. Suppose the sensitivity is not $O(1)$.  Therefore, for any $k$, there exists a string $y_k \in S_L$ with Hamming weight at least $k$.

Since $S_L$ is a regular language, it has some pumping length\footnote{Let $L \subseteq \Sigma^*$ be a regular language.  There exists a finite pumping length $p > 0$ such that for all strings $w \in \Sigma^*$ with $|w| \ge p$ there exists a decomposition $w = xyz$ for $x,y,z \in \Sigma^*$ and $|y| > 0$, $w \in L \iff (\forall i \ge 0, xy^iz \in L)$.  This (or a similar statement) is called the ``pumping lemma'' since the substring $y$ may be repeated (``pumped") arbitrarily many times.} $p$. We can pump down any block of $p$ consecutive zero bits in $y_k$ such that at least $\frac{1}{p}$ fraction of the remaining bits are sensitive (or $n \leq kp$). This implies that sensitivity is $\Omega(n)$ for infinitely many $n$. We can also pump down arbitrary blocks of $p$ bits to decrease the length, so we can make sure sensitivity is $\Omega(n)$ for at least $\frac{1}{p}$ fraction of $n$. Finally, since $L$ is flat, congruence classes contain strings of all length, which allows us to replace some substring of a $\Omega(n)$ sensitive string with a slightly longer or shorter string. In this way, we can construct strings of sensitivity $\Omega(n)$ for all $n$.
\end{proof}

\begin{corollary}
\label{cor:block_sensitivity}
Let $L$ be a flat regular language. The block sensitivity of $L$ is either $O(1)$ or $\Omega(n)$.  
\end{corollary}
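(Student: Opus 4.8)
The plan is to read this off the sensitivity dichotomy together with the standard comparisons between sensitivity and block sensitivity. Since $L$ is a flat regular language, Corollary~\ref{cor:sensitivity} tells us that $s(L)$ is either $O(1)$ or $\Omega(n)$, and I would split the argument into exactly these two cases.

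In the first case, $s(L) = O(1)$. Here I would invoke Simon's theorem (Theorem~\ref{thm:simonsensitivity}), which gives $bs(L) = O\!\big(s(L)\cdot 4^{s(L)}\big)$; because $s(L)$ is bounded by an absolute constant, the right-hand side is bounded by a constant as well, so $bs(L) = O(1)$. (As discussed in Section~\ref{sec:alphabet_size} and justified for regular languages by Theorem~\ref{thm:alphabetsize}, the comparison between $s$ and $bs$ changes by at most a constant factor when passing to the binary encoding, so applying Simon's theorem over the constant-size alphabet $\Sigma$ is legitimate.) In the second case, $s(L) = \Omega(n)$, and here I would simply note that any sensitive symbol is in particular a sensitive block of size one, so $s(L)(n) \le bs(L)(n)$ for every $n$, whence $bs(L) = \Omega(n)$.

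I do not expect a real obstacle: essentially all the content lives in Corollary~\ref{cor:sensitivity} and Theorem~\ref{thm:simonsensitivity}. The only point requiring a moment's care is that the dichotomy must be read as a statement about the asymptotic behavior of $s(L)$ as a whole, not as a pointwise claim at each length $n$; but the proof of Corollary~\ref{cor:sensitivity} already produces strings of sensitivity $\Omega(n)$ at every length $n$ whenever the sensitivity is not $O(1)$, so the two cases above are genuinely exhaustive and the case split goes through cleanly.
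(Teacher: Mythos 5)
Your argument is essentially identical to the paper's: both invoke Corollary~\ref{cor:sensitivity} to split into the $s(L)=O(1)$ and $s(L)=\Omega(n)$ cases, both use Simon's theorem for the first case (the paper packages it as Corollary~\ref{cor:qcallconstant}), and both use $s(f)\le bs(f)$ for the second. Your additional remarks on alphabet size and on the pointwise-versus-asymptotic reading of the dichotomy are correct and careful, but they do not change the substance of the argument.
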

\begin{proof}
By Corollary~\ref{cor:sensitivity}, sensitivity is either $O(1)$ or $\Omega(n)$. If sensitivity is $O(1)$ then block sensitivity and all other measures are $O(1)$ by Corollary~\ref{cor:qcallconstant}. However, $s(f) \leq bs(f)$, so if sensitivity is $\Omega(n)$ then block sensitivity is $\Omega(n)$. It follows that block sensitivity is either $O(1)$ or $\Omega(n)$. 
\end{proof}
It follows that the certificate complexity, deterministic complexity, randomized zero-error complexity, randomized complexity are also $O(1)$ or $\Omega(n)$.

\begin{theorem}
\label{thm:approxdichotomy}
Let $L$ be a flat regular language. The approximate degree of $L$ is either $O(1)$ or $\Omega(\sqrt{n})$.\end{theorem}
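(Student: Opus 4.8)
The plan is to reduce everything to the block-sensitivity dichotomy that has just been established. By Corollary~\ref{cor:block_sensitivity}, any flat regular language $L$ has $bs(L) = O(1)$ or $bs(L) = \Omega(n)$, so it suffices to handle these two cases separately and show that $\approxdeg(L)$ lands in the claimed range in each.

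In the first case, $bs(L) = O(1)$, I would simply invoke Corollary~\ref{cor:qcallconstant}: since one of the measures in $\{s, bs, C, D, R_0, R, Q, \approxdeg\}$ is $O(1)$, all of them are, and in particular $\approxdeg(L) = O(1)$. In the second case, $bs(L) = \Omega(n)$, I would use the lower-bound relation $\approxdeg(f) = \Omega(\sqrt{bs(f)})$ from Theorem~\ref{thm:bs_relations}, which immediately gives $\approxdeg(L) = \Omega(\sqrt{bs(L)}) = \Omega(\sqrt{n})$. Combining the two cases yields the dichotomy.

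There is essentially no obstacle here; the only point worth a sentence is that $\approxdeg$ and $bs$ are a priori defined for Boolean functions while $L$ is over a general finite alphabet $\Sigma$. But this was already addressed: by Theorem~\ref{thm:alphabetsize} the block sensitivity of a regular language agrees with that of its binary encoding up to a constant factor, and $\approxdeg$ changes by at most a constant factor under the encoding, so (as the paper notes after Theorem~\ref{thm:alphabetsize}) the Boolean-setting relationships in Theorem~\ref{thm:bs_relations} and Corollary~\ref{cor:qcallconstant} apply verbatim to $L$. Hence no extra work beyond the case split is needed.
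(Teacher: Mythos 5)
Your proof is correct and follows the paper's own argument exactly: case-split on the block-sensitivity dichotomy (Corollary~\ref{cor:block_sensitivity}), then apply Corollary~\ref{cor:qcallconstant} in the $O(1)$ case and $\approxdeg(f) = \Omega(\sqrt{bs(f)})$ from Theorem~\ref{thm:bs_relations} in the $\Omega(n)$ case. Your remark about the alphabet-size reduction via Theorem~\ref{thm:alphabetsize} is a fair point to flag, though the paper dispenses with it once and for all in Section~\ref{sec:alphabet_size} rather than re-noting it here.
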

\begin{proof}
Consider block sensitivity. If block sensitivity is $O(1)$, then so are approximate degree and quantum query complexity by Corollary~\ref{cor:qcallconstant}. If block sensitivity is $\Omega(n)$, then we recall that $\approxdeg(L) = \Omega(\sqrt{bs(L)}) = \Omega(\sqrt{n})$ by Theorem~\ref{thm:bs_relations}. Furthermore, $\frac{1}{2} \approxdeg(L) \leq Q(L)$ by Proposition~\ref{prop:qcinclusions}, so quantum query complexity is also $\Omega(\sqrt{n})$. 
\end{proof}

It follows that $Q(L)$ is either $O(1)$ or $\Omega(\sqrt{n})$.


\section{Lower Bounds}
\label{sec:lowerbounds}

In this section, we will show matching lower bounds for the algorithms described in Section~\ref{sec:upperbounds}.  In fact, since approximate degree is a lower bound for quantum query complexity, it suffices to prove lower bounds for approximate degree, which is what we will do.  Let us start with simplest case---lower bounds on non-degenerate languages.

\begin{proposition}
Let $L$ be a flat regular language.  If $L$ is not degenerate, then $\approxdeg(L) \geq 1$.
\end{proposition}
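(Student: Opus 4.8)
The plan is to show that a non-degenerate flat regular language must have at least one sensitive input symbol on some input, since $\approxdeg(L) \geq s(L)$ is not quite what we want directly—rather, $\approxdeg(f) \geq 1$ for any non-constant $f$, because a constant polynomial cannot approximate a function taking both values $0$ and $1$ to within $1/3$. So the real content is: a non-degenerate flat regular language is non-constant on inputs of length $n$ for every $n$ (or at least infinitely many $n$, but flatness should give us every $n$).

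First I would recall the monoid characterization from Theorem~\ref{thm:monoid_characterizations}: $L$ is degenerate iff it is recognized by a morphism $\varphi$ with $|\varphi(\Sigma^+)| = 1$. So if $L$ is \emph{not} degenerate, then for the syntactic morphism $\varphi \colon \Sigma^* \to M_L$ we have $|\varphi(\Sigma^+)| \geq 2$; that is, there exist strings $u, v \in \Sigma^+$ with $\varphi(u) \neq \varphi(v)$, hence $u \not\sim_L v$, so there are $s, t \in \Sigma^*$ with $sut \in L \iff svt \notin L$. This already shows $L$ is non-constant, but the length of $sut$ and $svt$ may differ, and we need non-constancy at a \emph{fixed} length $n$. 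Here flatness enters: by Property~\ref{property:flat}, for the non-empty strings $u$ and $v$ we can find strings $u', v'$ of any common desired length in the same congruence classes as $u$ and $v$ respectively, and then further pad $s$ and $t$ (or use flatness again on them if non-empty, or handle the $\varepsilon$ cases separately) to reach total length exactly $n$ for all sufficiently large $n$. The upshot is that for every large enough $n$ there are two length-$n$ strings differing in membership, so $f_L$ restricted to length $n$ is non-constant, and therefore $\approxdeg(L)(n) \geq 1$.

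The key steps in order: (1) invoke non-degeneracy $\Rightarrow |\varphi(\Sigma^+)| \geq 2$ $\Rightarrow$ two congruence classes among non-empty strings; (2) extract a witness $sut \in L$, $svt \notin L$ with $u,v \in \Sigma^+$; (3) use Property~\ref{property:flat} to adjust $u$ (and $v$) to a common length and pad out to total length $n$, obtaining length-$n$ strings of differing membership for all large $n$; (4) conclude no constant polynomial can $1/3$-approximate $f_L$ on length $n$, so $\approxdeg(L) \geq 1$. For the small finitely many lengths not covered, note $\approxdeg$ of a function on a finite domain is at least $1$ whenever it is non-constant there, and handle (or ignore as lower-order) those cases—flatness in fact forces non-constancy at \emph{every} length since congruence classes of a flat language contain strings of all lengths.

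The main obstacle is the bookkeeping in step (3): making sure the total length can be made \emph{exactly} $n$ for all $n$ (not just infinitely many), which requires care when $s$ or $t$ is empty and when $u,v$ have different lengths—one must apply Property~\ref{property:flat} to replace $u$ and $v$ by strings $u'$ and $v'$ of the \emph{same} chosen length, and then absorb the remaining length into a non-empty part of the string (again via Property~\ref{property:flat}) or argue directly. This is routine given flatness but is the only place where a false step could creep in; everything else is immediate from the definitions.
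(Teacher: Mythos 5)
Your proposal is correct and takes essentially the same route as the paper: non-degeneracy gives two non-empty strings in distinct congruence classes, hence (via the syntactic congruence) non-empty witnesses of differing membership, and flatness levels their lengths to any $n \geq 1$, so $f_L$ is non-constant at every length and no constant polynomial can $1/3$-approximate it. The bookkeeping you flag in step (3) is unnecessary---rather than replacing $u,v$ by same-length strings and then padding $s,t$ separately, just apply Property~\ref{property:flat} once to each of the full non-empty witness strings $sut$ and $svt$ to obtain length-$n$ representatives in the same congruence classes, which is exactly what the paper does.
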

\begin{proof}
Let $\varphi \colon \Sigma^* \to M_L$ be the homomorphism onto the syntactic monoid of $L$ such that $L = \{ \varphi^{-1}(s) : s \in S \subseteq M_L \}$.  Since $L$ is not degenerate, there exists $x, y \in \Sigma^+$ such that $\varphi(x) \neq \varphi(y)$.  By the definition of the syntactic congruence, there exist strings $u, v \in \Sigma^+$ such that $u \in L$ but $v \not \in L$.  Since $L$ is flat, each set $\varphi^{-1}(\varphi(u))$ and $\varphi^{-1}(\varphi(v))$ contains strings of all positive lengths.  Therefore, any polynomial approximating the membership function for $L$ cannot be constant.
\end{proof}

For the trivial languages, we first prove a theorem about their deterministic complexity.  Recall that a deterministic query algorithm is a decision tree:  on input $x \in \Sigma^n$, the algorithm queries a particular index of the input.  Based on the value of $x$ at that index (one of finitely many possible choices), the algorithm either deduces the membership of $x$ in $L$ or decides to query a different index.  The process is repeated until the algorithm can decide membership.  The height of the decision tree is the deterministic query complexity of $L$.  In particular, if the deterministic query complexity of $L$ is constant, then the height of the decision tree is constant, which implies that the entire tree has constant size (since each node in the tree has constant fan-out). 

\begin{theorem}
\label{thm:nontrivial_implies_nonconstant_d}
Let $L$ be a flat regular language. If $L$ is not trivial, then $D(L) = \omega(1)$. 
\end{theorem}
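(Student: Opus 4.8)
The plan is to prove the contrapositive in spirit: if $D(L) = O(1)$ then $L$ is trivial. Since $L$ is flat and regular, Proposition~\ref{prop:easy_upper_bounds} and the surrounding machinery let me pass to the syntactic monoid $M_L$ with surjective homomorphism $\varphi\colon \Sigma^* \to M_L$ and accepting set $S$. By the monoid characterization (Theorem~\ref{thm:monoid_characterizations}), triviality of $L$ is equivalent to $\varphi(\Sigma^+)$ being a finite rectangular band, i.e.\ every nonempty word's image is idempotent and $\varphi(ust) = \varphi(ut)$ for all nonempty $u,s,t$. So the goal becomes: a constant-depth decision tree for $L$ forces $\varphi(\Sigma^+)$ to satisfy the rectangular-band identities.

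First I would exploit the key structural fact established earlier: decision-tree height $O(1)$ means the whole tree has constant size, so there is a constant bound $d$ on the number of positions that membership in $L$ can ever depend on. In particular, for \emph{every} input length $n$, membership is determined by a fixed set of at most $d$ queried positions (depending on the answers seen so far, but always within a tree of constant size). The crucial consequence I want to extract is that, for all sufficiently long inputs, the query algorithm never "reaches into the middle" of the string in a way that distinguishes two long words with the same first and last symbols. Concretely, I would argue: suppose $L$ is not trivial. Then $\varphi(\Sigma^+)$ is not a rectangular band, so one of the defining identities fails — either some $\varphi(w)$ with $w\in\Sigma^+$ is not idempotent, or there exist nonempty $u,s,t$ with $\varphi(ust)\neq\varphi(ut)$. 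Using the syntactic congruence, a failure of $\varphi(ust)=\varphi(ut)$ gives contexts $p,q$ with $p\,ust\,q\in L \iff p\,ut\,q\notin L$. Now I pump: replace $s$ by $s^i$ (or, using flatness, by congruence-equivalent strings of many different lengths) to manufacture, for infinitely many lengths $n$, two inputs of length exactly $n$ that agree outside a long, movable middle block yet differ in membership.

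The point of that construction is an adversary argument against any bounded-size decision tree. Given the tree, which has only constantly many nodes, consider the positions it can possibly query on inputs of length $n$; because the "distinguishing" middle block $s^i$ can be slid to a range of positions and stretched (flatness lets me adjust lengths freely inside a congruence class so that the two competing strings have the same total length $n$), I can choose a placement of the block that avoids every position the tree would ever query along the relevant root-to-leaf paths. Then the tree gives the same answer on two inputs with different membership — contradiction. So $D(L)=O(1)$ forces all rectangular-band identities, hence $L$ is trivial; equivalently, $L$ not trivial implies $D(L)=\omega(1)$.

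The main obstacle is making the adversary/placement argument fully rigorous: I need that a constant-size decision tree, run on length-$n$ inputs, can only "touch" positions lying in a bounded number of short windows near the ends (or that its query positions, across all branches, form a set that a sliding block of length $\omega(1)$ can dodge). Flatness is exactly what rules out the tree using the length $n$ to adaptively home in on the middle — a flat language has no length dependence, so I should be able to argue that the set of positions queryable by the constant-size tree on inputs of length $n$ is, up to renaming, "the same" for all large $n$ and hence cannot cover a growing middle region. Pinning down this uniformity — probably by formalizing "touches only $O(1)$ positions" together with the fact that the two pumped strings are indistinguishable outside the block — is where the real care is needed; the monoid-identity bookkeeping and the pumping itself are routine given Theorem~\ref{thm:monoid_characterizations}, Property~\ref{property:flat}, and the pumping lemma.
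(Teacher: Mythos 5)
Your argument is a fooling-pair/adversary argument, whereas the paper proves the same contrapositive by a \emph{normalization} of the decision tree: because the tree has constant size and $L$ is flat, one can slide every queried index toward the endpoints (preserving which gaps are nonempty) so that membership ends up depending only on the first $c$ and last $c$ symbols, and then one reads off the rectangular-band identities $\varphi(uwv)=\varphi(uw'v)$ and $\varphi(x)=\varphi(x^2)$ directly. The two routes are roughly dual — the paper moves the queries out of the middle, you move a distinguishing middle block out of the way of the queries — and both hinge on the same ingredients (constant tree size, Property~\ref{property:flat}, and the characterization in Theorem~\ref{thm:monoid_characterizations}), but the paper never commits to a particular witness of non-triviality, which is exactly what makes its bookkeeping cleaner.

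The gap you flagged is real and it is not merely ``details.'' To make the sliding-block argument go through, you must exhibit, for \emph{any} $O(1)$-sized set $P \subseteq [n]$, strings $x = p\alpha q$ and $y = p\beta q$ of length $n$ that agree outside a single $P$-avoiding interval and disagree in $L$. That forces the contexts $p,q$ to have \emph{prescribed} lengths determined by where the large $P$-gap happens to sit — possibly deep in the middle of the string. The syntactic congruence only gives you \emph{some} distinguishing pair $(p_0,q_0)$ for $\varphi(\alpha)\neq\varphi(\beta)$, and flatness lets you stretch a nonempty context to any positive length, but if (say) the only way to separate the two monoid elements is with $p_0=\varepsilon$, you cannot relocate the block away from position $1$. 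You would need an additional lemma showing the witness can always be chosen with contexts that stretch to where the gap lives, and that lemma is essentially the paper's relocation step in disguise. The most economical fix is to adopt the paper's ``move queries to the endpoints'' step as preprocessing; once you know the tree effectively reads only a length-$c$ prefix and suffix, the largest invisible interval is always flanked by long, nonempty, stretchable contexts, and your adversary construction becomes immediate. Also note the paper's proof establishes \emph{both} the rectangular identity and idempotence by explicit substitutions ($w'=\varepsilon$ and $w'=ba$); your sketch folds these into ``some identity fails,'' which is fine logically but means the case analysis still has to be done somewhere.
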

\begin{proof}
We will argue the contrapositive.  Suppose $D(L) = O(1)$.  That is, for any input $x \in \Sigma^n$, the deterministic algorithm queries a constant-size set of indices to determine membership.  Clearly, as $n$ increases, there will be large gaps between the indices which are queried.  Since $L$ is flat we have two important consequences: first, any nonempty string which is not queried can correspond to any non-identity element of the syntactic monoid; second, we may assume that any gap of nonzero size can be expanded or contracted to any other nonzero size. It follows that we can move the queries made by the deterministic algorithm (provided that we do not create or destroy any gaps) without changing its correctness.

Therefore, let us move all the queries as close to the start or end of the input as possible, maintaining $1$-symbol gaps where necessary. Since there are only constantly many queries, there exists a deterministic algorithm which determines membership of $x$ by querying $c$ symbols from the start and end of $x$ for some constant $c$.  

Let $\varphi$ be the homomorphism from $\Sigma^*$ onto the syntactic monoid $M_L$ such that $L = \{ \varphi^{-1}(s) : s \in S \subseteq M_L \}$.  For $x \in \Sigma^*$ of length greater than $2c$, write $x = u w v$ such that $|u| = |v| = c$.  We have that membership of $x$ in $L$ is determined completely by prefix $u$ and suffix $v$.  

We claim that this implies that $\varphi(u w v) = \varphi(u w' v)$ for all $w \in \Sigma^*$.  For suppose that $\varphi(u w v) \neq \varphi(u w' v)$.  By the definition of the syntactic congruence, there exists strings $a \in \Sigma^*$ and $b \in \Sigma^*$ such that $a u w v b \in L$ and $a u w' v b \not \in L$ (or vice versa).  Since $|au| > 0$ and $|bv| > 0$, there exists strings $a_u, b_v \in \Sigma^c$ such that $\varphi(a_u) = \varphi(au)$ and $\varphi(b_v) = \varphi(bv)$.  However, $a_u w b_v \in L$ and $a_u w' b_v \not \in L$ contradicts the fact that membership in $L$ is determined by a prefix and suffix of length at most $c$.  In particular, this holds when $w' = \varepsilon$.

Let us now show that  $\varphi(\Sigma^+)$ is a rectangular band.  Let $x,y,z \in \Sigma^+$ be nonempty strings, and let $x',z'$ be strings of length $c$ such that $\varphi(x) = \varphi(x')$ and $\varphi(z) = \varphi(z')$.
We have that 
$$
\varphi(xyz) = \varphi(x' y z') = \varphi (x' z') = \varphi(xz).
$$
Finally, we show that $\varphi(\Sigma^+)$ is idempotent.  Let $x \in \Sigma^+$.  By flatness, we have that $\varphi(x) = \varphi(a w b)$ for strings $a, b \in \Sigma^c$.  Therefore, we have
$$
\varphi(x) = \varphi(a w b) = \varphi(a b) = \varphi(a b a b) = \varphi(xx),
$$
where the middle two equalities come from substituting $w = \varepsilon$ and $w = ba$, respectively.
\end{proof}

\begin{corollary}
Let $L$ be a flat regular language.  If $L$ is not trivial, then $\approxdeg(L) = \Omega(\sqrt n)$.
\end{corollary}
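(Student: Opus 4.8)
The plan is to chain together the deterministic lower bound just established with the dichotomy for approximate degree from Section~\ref{sec:dichotomies}. Since $L$ is flat and not trivial, Theorem~\ref{thm:nontrivial_implies_nonconstant_d} gives $D(L) = \omega(1)$; in particular $D(L)$ is not $O(1)$. By the contrapositive of Corollary~\ref{cor:qcallconstant}, if $D(L) \neq O(1)$ then $\approxdeg(L) \neq O(1)$ as well---otherwise every measure in $\{ s, bs, C, D, R_0, R, Q, \approxdeg \}$, including $D$, would be $O(1)$.

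Now I would invoke Theorem~\ref{thm:approxdichotomy}: a flat regular language has approximate degree either $O(1)$ or $\Omega(\sqrt{n})$. Having ruled out the first case, we conclude $\approxdeg(L) = \Omega(\sqrt{n})$, as desired.

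There is essentially no obstacle at this step: all the real work has already been carried out---Theorem~\ref{thm:nontrivial_implies_nonconstant_d} supplies ``non-trivial $\Rightarrow$ non-constant deterministic complexity'' via the rectangular band characterization together with flatness, and Corollary~\ref{cor:block_sensitivity}/Theorem~\ref{thm:approxdichotomy} supply the sensitivity-to-approximate-degree dichotomy via the pumping lemma applied to the regular language of sensitivity masks. The corollary is just the composition of these two facts. Alternatively, one could argue more directly: non-triviality forces $bs(L) = \omega(1)$ since $D(L) \le bs(L)^3$ by Theorem~\ref{thm:bs_relations}, hence $bs(L) = \Omega(n)$ by Corollary~\ref{cor:block_sensitivity}, and then $\approxdeg(L) = \Omega(\sqrt{bs(L)}) = \Omega(\sqrt{n})$ again by Theorem~\ref{thm:bs_relations}.
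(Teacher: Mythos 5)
Your proposal is correct and takes essentially the same route as the paper: chain Theorem~\ref{thm:nontrivial_implies_nonconstant_d}, Corollary~\ref{cor:qcallconstant}, and Theorem~\ref{thm:approxdichotomy}, which the paper does as a contrapositive argument (assume $\approxdeg(L) = o(\sqrt n)$ and conclude triviality) while you run the implications forward. The alternative route you sketch through block sensitivity is also valid and relies on the same underlying dichotomy.
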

\begin{proof}
The corollary follows almost immediately from Theorems~\ref{thm:approxdichotomy} and \ref{thm:nontrivial_implies_nonconstant_d}. Suppose $\approxdeg(L) = o(\sqrt n)$.  We wish to show that $L$ is trivial.  If $D(L) = O(1)$, then we are done by Theorem~\ref{thm:nontrivial_implies_nonconstant_d}. If $D(L) = \omega(1)$, then approximate degree is also non-constant by Corollary~\ref{cor:qcallconstant}. But if $\approxdeg(L)$ is non-constant, then we must have $\approxdeg(L) = \Omega(\sqrt n)$ by Theorem~\ref{thm:approxdichotomy}.
\end{proof}

Finally, we turn our attention to the star-free languages.  Let $\MOD_p$ be the language of bit strings whose Hamming weight is 0 modulo some fixed $p \geq 2$. We need the following theorem:
\begin{theorem}[Beals et al.\ \cite{beals:2001}]
\label{thm:parityishard}
$\approxdeg(\MOD_p) = \Omega(n)$ for any $p \geq 2$. 
\end{theorem}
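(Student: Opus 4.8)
The plan is to use the Minsky--Papert symmetrization argument to reduce from multivariate to univariate polynomials, and then to observe that the ``comb'' function $k \mapsto [\,k \equiv 0 \pmod p\,]$ forces any univariate polynomial approximating it to change sign $\Omega(n/p)$ times, which pins down the degree. (Beals et al.\ derive this from their general formula for the approximate degree of symmetric functions, but the direct sign-counting argument below is self-contained.)

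First I would invoke symmetrization. Suppose $P(x_1,\dots,x_n)$ is a real polynomial of degree $d$ approximating the characteristic function of $\MOD_p$ to error $1/3$ at every point of $\{0,1\}^n$. Averaging $P$ over all $n!$ coordinate permutations yields a polynomial of degree at most $d$ whose value depends only on the Hamming weight $|x| = x_1 + \dots + x_n$; hence there is a single-variable polynomial $q$ with $\deg q \le d$ whose value at $|x|$ equals this average. Because $q(k)$ is a convex combination of the values $P(x)$ over strings $x$ of weight exactly $k$, and $\MOD_p$ is a symmetric language, we get $|q(k) - [\,k \equiv 0 \pmod p\,]| \le 1/3$ for every integer $k \in \{0,1,\dots,n\}$. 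So it suffices to lower bound $\deg q$ for any such $q$.

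Next I would count sign changes. Set $M = \lfloor (n-1)/p\rfloor$. For each $m \in \{0,1,\dots,M\}$ both $mp$ and $mp+1$ lie in $\{0,\dots,n\}$, so $q(mp) \ge 2/3$ while $q(mp+1) \le 1/3$. Evaluating $q - \tfrac12$ along the increasing sequence $0,1,p,p+1,\dots,Mp,Mp+1$ of $2M+2$ points therefore produces a strictly alternating sign pattern (positive at each $mp$, negative at each $mp+1$), so $q - \tfrac12$ changes sign at least $2M+1$ times and hence has at least $2M+1$ distinct real roots. Since $q - \tfrac12$ is not identically zero ($q(0) \ge 2/3 \ne \tfrac12$), its degree---which equals $\deg q$---is at least the number of its roots, giving $d \ge \deg q \ge 2M+1 = 2\lfloor (n-1)/p\rfloor + 1$. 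As $p$ is a fixed constant, this is $\Omega(n)$, proving the theorem.

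There is no serious obstacle here; the two points needing care are (i) checking that the symmetrized polynomial genuinely inherits the $\tfrac13$-approximation at every integer Hamming weight, which holds because it is a convex combination of values of $P$ on inputs of that weight, and (ii) confirming that the comb has $\Omega(n/p)$ ``teeth'' separated widely enough to force the crossings, which is immediate from $1/3 < 1/2 < 2/3$. I would also note that the hidden constant in $\Omega(n)$ depends on $p$: for $p$ allowed to grow with $n$ the identical argument still yields $\Omega(n/p)$, but since $p$ is fixed here we obtain $\Omega(n)$ outright.
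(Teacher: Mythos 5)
Your proof is correct. The paper itself does not prove this statement---it cites it to Beals et al., who obtain it as a corollary of a general tight characterization (going back to Paturi) of the approximate degree of symmetric Boolean functions: $\approxdeg(f) = \Theta(\sqrt{n(n-\Gamma(f))})$, where $\Gamma(f)$ measures how far from Hamming weight $n/2$ the function first changes value. For $\MOD_p$ one has $\Gamma \le p$, so the formula immediately yields $\Omega(n)$. Your route is genuinely different and more elementary: you go through Minsky--Papert symmetrization directly and then count sign crossings of $q - \tfrac12$ against the ``comb'' function, obtaining $2\lfloor (n-1)/p\rfloor + 1$ roots and hence the same $\Omega(n)$ bound for fixed $p$. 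What the Paturi/Beals et al.\ approach buys is a matching \emph{upper} bound and a uniform treatment of all symmetric functions (including those whose hardness is concentrated near weight $n/2$ and for which a naive sign-counting bound would lose a factor); what your approach buys is a short self-contained proof for this particular family without needing the general machinery. Both small points you flagged---the convex-combination inheritance of the $\tfrac13$ approximation at each weight, and the strict monotonicity of the interleaved sequence $0,1,p,p+1,\dots$ (which uses $p \ge 2$)---are handled correctly.
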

Recall that star-free languages are aperiodic.  Therefore, if a language is not star free, then it should exhibits some periodicity in which we can embed some $\MOD_p$ language.   We appeal to this intuition in the following theorem.
\begin{theorem}
Let $L$ be a flat regular language. If $L$ is not star free, then $\approxdeg(L) = \Omega(n)$. 
\end{theorem}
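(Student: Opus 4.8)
The plan is to embed a $\MOD$-type problem into $L$ by exploiting the algebra. By Sch\"utzenberger's theorem (Theorem~\ref{thm:schutzenbergers_theorem}), a non-star-free language has a \emph{non-aperiodic} syntactic monoid $M = M_L$, which therefore contains a nontrivial cyclic subgroup, and I want to run $\MOD_q$ through it. Concretely, first I would extract the group: non-aperiodicity gives $x \in M$ with $x^n \neq x^{n+1}$ for all $n$; since $M$ is finite the sequence $x, x^2, x^3, \dots$ is eventually periodic with some period $q$, and $q \geq 2$ because $x^n \neq x^{n+1}$ always. Choosing $s$ large enough that $e := x^s$ is idempotent and lies in the periodic part, the set $H = \{x^s, x^{s+1}, \dots, x^{s+q-1}\}$ is a cyclic group of order $q$ with identity $e$ and generator $g := x^{s+1}$ (so $g^a = x^{s+a}$ and $g^a = e$ iff $q \mid a$). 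Taking a word $w$ with $\varphi(w) = x$ for the syntactic morphism $\varphi \colon \Sigma^* \to M$ (necessarily $w \in \Sigma^+$, since $\varphi(\varepsilon) = 1$ is aperiodic by Proposition~\ref{prop:unique_rank_zero_element}), the words $w^s, w^{s+1} \in \Sigma^+$ realize $e$ and $g$.

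Next I would build the reduction. Since $e \neq g$ and $\varphi$ is the \emph{syntactic} morphism, the syntactic congruence gives a context $(\alpha,\beta)$ separating $w^s$ from $w^{s+1}$: writing $\mu = \varphi(\alpha)$, $\nu = \varphi(\beta)$, and $L = \varphi^{-1}(S)$, we have $[\mu e \nu \in S] \neq [\mu g \nu \in S]$. Fix a block length $N$ and, invoking flatness (Property~\ref{property:flat}), fix $c_0, c_1 \in \Sigma^N$ with $\varphi(c_0) = e$ and $\varphi(c_1) = g$. For $y \in \{0,1\}^m$ set $z(y) = \alpha\, c_{y_1} c_{y_2} \cdots c_{y_m}\, \beta$, of length $n = |\alpha| + mN + |\beta|$. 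Because every $\varphi(c_{y_j})$ lies in the group $H$ and $e$ is its identity, the product telescopes to $g^{|y|}$, so $\varphi(z(y)) = \mu\, g^{|y| \bmod q}\, \nu$, whence $z(y) \in L$ iff $\chi(|y| \bmod q) = 1$, where $\chi(a) := [\mu g^a \nu \in S]$ is a $q$-periodic $0/1$ pattern with $\chi(0) \neq \chi(1)$, hence non-constant.

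Finally I would convert this into an approximate-degree lower bound. Each symbol of $z(y)$ is either a constant (inside $\alpha$ or $\beta$) or a value determined by a single bit $y_j$ (inside block $j$), hence an affine function of $y_j$; substituting these affine maps into a degree-$d$ polynomial approximating $f_L$ on length-$n$ inputs and multilinearizing yields a degree-$\leq d$ polynomial approximating the symmetric function $f(y) := \chi(|y| \bmod q)$, so $\approxdeg(L)(n) \geq \approxdeg(f)$. Then symmetrization (averaging over each Hamming weight, \`a la Minsky--Papert) turns this into a univariate polynomial $P$ of degree $\leq d$ with $|P(k) - \chi(k \bmod q)| \leq 1/3$ for $k = 0, \dots, m$; since $\chi$ is non-constant with constant period $q$, the sequence $\chi(0 \bmod q), \dots, \chi(m \bmod q)$ switches value $\Omega(m/q) = \Omega(m)$ times, and each switch forces $P - 1/2$ to have a real root in the corresponding unit interval, so $d = \Omega(m) = \Omega(n)$. (This subsumes Theorem~\ref{thm:parityishard}; alternatively one can cite a symmetric-function bound directly.) A last appeal to flatness --- absorbing the slack $n - |\alpha| - |\beta| - mN$ into one extra constant $e$-block or into longer representatives of $c_0$, $\alpha$, $\beta$ --- makes the bound hold for every sufficiently large $n$, giving $\approxdeg(L) = \Omega(n)$.

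The main obstacle is the algebra in the middle steps: isolating an honest cyclic group inside $M$ \emph{with word representatives}, and then checking that the $0$-blocks act as the identity so that the product collapses to $g^{|y|}$. This hinges on all factors landing in the subgroup $H$, which is exactly why one routes through the idempotent power $e = x^s$ rather than the monoid identity, and why flatness is needed to get equal-length block encodings. Once the encoding is in place, the affine-substitution degree bookkeeping and the symmetrization/oscillation count are routine.
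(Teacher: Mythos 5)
Your proof is correct and uses the same core strategy as the paper's: exploit non-aperiodicity of the syntactic monoid to embed a $\MOD$-type symmetric function into $L$, via the periodic part of a single element's powers together with a separating context from the syntactic congruence. The differences are in execution. The paper maps $0 \mapsto s^p$, $1 \mapsto s$, appends a block mapping to $s^n$ to force the product into the periodic regime, then uses finitely many distinguishing contexts to determine the resulting monoid element (hence the Hamming weight mod $p$) by a constant number of queries to $L$, and finally cites $\approxdeg(\MOD_p) = \Omega(n)$ (Theorem~\ref{thm:parityishard}) as a black box. You instead isolate the honest cyclic subgroup $H$, with idempotent identity $e = x^s$ and generator $g = x^{s+1}$, map $0 \mapsto e$, $1 \mapsto g$ so that the product collapses cleanly to $g^{|y| \bmod q}$ with no trailing normalizer, use a \emph{single} $(\alpha,\beta)$ context to produce a non-constant $q$-periodic pattern $\chi$, and re-derive the $\Omega(m)$ degree bound directly by Minsky--Papert symmetrization and sign-change counting. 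Both routes are sound; yours is algebraically a bit cleaner (one context instead of several, a genuine group in place of the $s^n$ bookkeeping) and more self-contained (it subsumes rather than invokes Theorem~\ref{thm:parityishard}), at the cost of an explicit paragraph of slack/padding bookkeeping to make the bound hold at every input length, which the paper elides.
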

\begin{proof}
Let $M_L$ be the syntactic monoid of $L$, and let $\varphi \colon \Sigma^{*} \to M_L$ be the accompanying surjection onto $M_L$. We assume $M_L$ is not aperiodic, so there exists an element $s \in M_L$ such that $s^{n} \neq s^{n+1}$ for any $n$. Since $M_L$ is finite, we have $s^{n} = s^{n+p}$ for some $p$ and $n$, and therefore for all sufficiently large $n$. Let us take the minimal $p$ so that $s^{n} \neq s^{n+i}$ for $0 < i < p$. 

Since the language is flat, there exist $a_0, a_1, b \in \Sigma$ such that $\varphi(a_0) = s^{p}$, $\varphi(a_1) = s$ and $\varphi(b) = s^{n}$. One might worry that if $s^{n}$ is equal to the identity, its only preimage is the empty string, as is sometimes true for flat languages. However, because $\varphi(a_1^{n}) = \varphi(a_1)^{n} = s^n$, this is not the case. Given string $x \in \{ 0, 1 \}^{m}$, observe that
$$
\varphi( a_{x_1} a_{x_2} \cdots a_{x_m} b) = s^{x_1 + x_2 + \cdots + x_m + n},
$$
since $s^{n+p} = s^{n}$.  In other words, the monoid element associated with $a_{x_1} \cdots a_{x_m} b$ is determined by the Hamming weight of $x$ modulo $p$.  Therefore, to decide membership of $x$ in $\MOD_p$, it suffices to compute the monoid element for $a_{x_1} \cdots a_{x_m} b$ in $M_L$. 

Finally, by the definition of syntactic congruence, any two monoid elements may be distinguished by prepending and appending fixed strings to the input, then testing membership in $L$. By flatness, we may take those strings to be length zero or one. Thus, we can determine the monoid element by a constant number of queries to $L$, and therefore compute the Hamming weight modulo $p$. It follows that membership in $L$ has approximate degree $\Omega(n)$ by Theorem~\ref{thm:parityishard}. 
\end{proof}


\section{Context-Free Languages}
\label{sec:cfl}

In this section we will prove that the context-free languages---a slightly larger class of languages containing the regular languages---have query complexities outside the trichotomy.  The context-free languages are most often defined either through context-free grammars or through pushdown automata (PDA).  It will be easier for us to work with the PDA definition in this section.

One can think of a PDA as a nondeterministic Turing machine which has a read-once input tape and read-write stack.  Although the addition of the stack allows PDA to recognize many languages which are not regular, they are still limited in many senses.  For instance, context-free languages exhibit a pumping lemma much like the regular languages, and the membership problem is decidable.  For a more formal definition we refer the reader to introductory texts \cite{sipser:2006}.

As a simple example, consider the Dyck language over alphabet $\Sigma = \{ \lparen, \rparen \}$, which consists of all words with balanced parentheses.  We can show this language is context free by constructing a PDA for it.  The idea is that the stack contains all of the unmatched left parentheses.  When a new parenthesis is read from the input tape, the PDA pushes it onto the stack if it is a left parenthesis or pops an item from the stack if there is a right parenthesis.  The PDA accepts if the stack is empty when the input is read entirely.  

\subsection{Context-free languages do not obey the trichotomy}
In general, the easiest way to construct a language with arbitrary query complexity is by padding a hard language. The procedure is simple:  take a problem with $\Omega(n)$ query complexity, e.g., parity, and make the input string longer by adding (or \emph{padding}) irrelevant symbols to the end. For instance, computing the parity of the first $\Theta(n^{2/3})$ bits and ignoring the rest will require $\Theta(n^{2/3})$ queries. 

Unfortunately, to create a \emph{context-free} language with arbitrary query complexity, we cannot take such a direct approach.  Context-free languages cannot simply count out some fraction of their input as the above example suggests.  Instead, let us consider a general procedure for constructing a context-free language $L \subseteq \Sigma^*$ which has quantum query complexity $\Theta(n^c)$ for some $c \in [1/2, 1]$.  We construct $L$ from the union of two context-free languages $A$ and $B$.  To test membership of some $x \in \Sigma^*$ in $L$, we first test whether or not $x$ belongs to $A$.  We always construct $A$ in such a way that membership in $A$ can be decided in $O(\sqrt n)$ queries, usually through a simple Grover search.\footnote{In fact, the reason we cannot extend this procedure to other exponents $c \in (0,1/2)$ is due to the fact that we will always incur this cost of Grover search.}  If $x \in A$, then we are done.  Otherwise, we can assume that $x \not\in A$ when testing membership in $B$.  However, $A$ is constructed such that $x \not \in A$ will imply that $x$ has been ``padded''---there is some special symbol in $x$ such that the distance from that symbol to the beginning of the string is approximately $n^c$.  Therefore, if $B$ is the language of all strings such that the prefix before the special symbol has even parity, then the query complexity of $L = A \cup B$ is $\Theta(n^c)$.

Let us consider an example of such a language $A \subseteq (\Sigma \cup \{\hashtag\})^*$.  First, we enforce that every word in $A$ begins and ends with $\hashtag$.  Next, we say that $x \in A$ iff there is some substring $\hashtag y \hashtag$ of $x$ such that $y \in \Sigma^*$ and the length of $y$ is \emph{not} equal to the total number of $\hashtag$ symbols in $x$.  Notice that  $x \not \in A$ implies that $x = \hashtag x_1 \hashtag x_2 \hashtag \ldots \hashtag x_k \hashtag$ where $|x_i| \approx \sqrt n$.  Furthermore, $A$ is context free and the quantum query complexity of $A$ is $\Theta(\sqrt n)$ by Grover search.

We will prove a theorem vastly generalizing this approach to create substrings of length $n^c$ for any $c \in [1/2,1]$ which is limit computable.\footnote{Since the theorem constructs a very contrived language, we note that natural problems can also be embedded into context-free languages, e.g., the element distinctness problem.  Given a list of integers $x_1, \ldots, x_n$ such that each $x_i \in \{1, \ldots, m\}$, the element distinctness problem asks if there exists $i \neq j$ such that $x_i = x_j$. Since $m \ge n$, we write each $x_i$ as a string over $\{ 0, 1 \}$, and delimit the $x_i$s by $2$'s. The language $\CFED$ consists of grammar rules: $
S \to A 2 B 2 A,
B \to 0 B 0 \mid 1 B 1 \mid 2 A 2,
A \to 0 \mid 1 \mid 2 \mid \varepsilon \mid AA
$. $\CFED$ accepts strings where some $x_i$ is the \emph{reverse} of some $x_j$. Thus, if all $x_i$ are represented by palindromes, $\CFED$ is at least as hard as element distinctness. On the other hand, it is possible to adapt the algorithm $O(n^{2/3})$ quantum query algorithm for element distinctness to $\CFED$ (with a log factor loss)\cite{ambainis:2005, kutin:2005}.}  A number $c \in \mathbb{R}$ is \emph{limit computable} if there exists a Turing machine which on input $n$ outputs some rational number $T(n)$ such that $\lim_{n \to \infty} T(n) = c$.  

We will need two main technical lemmas, both of which define a language similar to $A$ above.  The first ensures that the input contains (as a substring) the total length of the input written in binary, and the second simulates arbitrary computation by a Turing machine.

\begin{lemma}[Proof in Appendix~\ref{sec:contextfreeproofs}]
\label{lem:cflcounting}
	Let $K \subseteq \{ 0, 1, \hashtag_1, \hashtag_2, \$ \}^*$ be the language such that
	\begin{itemize}[itemsep=0pt]
		\item if $x \in K$, then $x$ ends with $\$ y \hashtag_1$, and
		\item for all $n \geq 6$, there is an $x \in K$ ending in $\$ y \hashtag_1$,
	\end{itemize}
	where $y$ is the binary representation of $|x|$. Then, $\overline{K}$ is context free, and $Q(K) = O(\sqrt{n})$.  
\end{lemma}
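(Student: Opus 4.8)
~\textbf{Proof proposal for Lemma~\ref{lem:cflcounting}.}
The plan is to build $K$ (or rather its complement) out of a collection of ``local consistency'' checks, each of which is context free, so that $\overline{K}$ is a finite union of context-free languages. I would fix the format of a string destined to be in $K$ as
$$
x \;=\; z_0\,\hashtag_2\,z_1\,\hashtag_2 \cdots \hashtag_2\, z_{m}\,\$\,y\,\hashtag_1,
$$
where $y \in \{0,1\}^*$ is meant to be the binary representation of $|x|$ and the blocks $z_i$ are ``padding'' over $\{0,1\}$. The idea is the usual one for forcing a context-free language to talk about the length of its input: a single PDA cannot count to $|x|$, but it \emph{can} verify a pair of lengths are equal or compare a length against a binary string, so I spread the work across many blocks whose total length is $|x|$. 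Concretely, I would require that each block $z_i$ has length exactly equal to the number encoded by $y$ divided by something, or more robustly: force consecutive blocks to have related lengths via a PDA, and force the final block to encode (in unary, matched against $y$ read in binary) the residual. The exact bookkeeping is a design choice; what matters is that every individual constraint is a context-free property of $x$.

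Then $\overline{K}$ is the union, over all the constraints, of the set of strings \emph{violating} that constraint, plus the set of strings not even matching the gross syntactic format $\{0,1\}^*(\hashtag_2\{0,1\}^*)^*\$\{0,1\}^*\hashtag_1$. The format check is regular, hence context free. For each length/comparison constraint, the violation is ``there exist two blocks (or a block and $y$) whose lengths are \emph{not} in the prescribed relation,'' and inequality of two lengths is context free by a standard pushdown construction (push on one, pop on the other, accept if the stack is nonempty at the crossover or if it empties early). A finite union of context-free languages is context free, so $\overline{K}$ is context free. I would also double-check the two bullet-point requirements: every $x\in K$ ends in $\$ y\hashtag_1$ by the format constraint, and for every $n\ge 6$ there is \emph{some} $x\in K$ of length $n$ ending in $\$ y\hashtag_1$ with $y$ the binary representation of $n$ — this is where I pick the block lengths to be a valid solution of the constraint system for that $n$, which is why a lower bound like $n\ge 6$ is needed (there must be enough room for $\$$, the $\hashtag_1$, the bits of $y$, and at least one block).

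For the query bound $Q(K)=O(\sqrt n)$: a string is in $K$ iff it has the right format \emph{and} the (constantly many, union of simpler) length constraints all hold. Most of these are ``for all pairs of blocks, some local relation holds,'' i.e.\ the complement is an existential (Grover-searchable) condition; and the comparison of $|x|$ against the suffix $y$ read in binary is a $O(\log n)$-bit arithmetic check once we know where $\$$, the $z_i$'s, and $\hashtag_1$ sit, which can be located with Grover search (they are sparse special symbols). So the algorithm is: Grover search for format violations among the special symbols; Grover search for violated block-length relations; then do the $O(\log n)$-bit consistency check with $y$. Each phase is $O(\sqrt n)$ queries, and there are constantly many phases, giving $Q(K)=O(\sqrt n)$. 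Note only membership in $K$ — not $\overline K$ — needs a fast algorithm, which is convenient since the ``for all pairs'' conditions are exactly the ones Grover handles well on the ``reject'' side.

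The main obstacle I expect is getting the constraint system \emph{exactly right}: it must be (a) enforceable by context-free checks on the complement, (b) satisfiable for every $n\ge 6$ (so that the "for all $n$" clause holds), and (c) rigid enough that the only way to satisfy it is to have $y$ equal the binary representation of $|x|$ — otherwise $K$ would contain strings whose suffix lies about the length, breaking the ``where $y$ is the binary representation of $|x|$'' clause in the statement. Balancing (b) and (c) is the delicate part: with a single block one loses satisfiability for large $n$ (a PDA can't check $|z_0| = n$ directly because $n$ itself depends on $|z_0|$), so the block structure and the choice of which relations to impose between consecutive blocks and the binary suffix is where the real work lies. I would handle this by making the blocks encode the digits of $|x|$ positionally (e.g.\ block $i$ has length tied to the $i$-th bit of $y$ times a power of two, realized via nested block structure), so that ``$\sum$ of block lengths $+$ overhead $= $ value of $y$'' becomes a context-free-checkable identity and is automatically satisfiable by construction for each $n$.
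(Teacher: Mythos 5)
Your high-level plan is the same as the paper's: encode the length positionally by forcing the special symbols to sit at power-of-two spacings, make $\overline K$ a finite union of context-free ``violation'' languages, and get the query bound from Grover search plus an $O(\log n)$-bit arithmetic check. But you repeatedly defer the concrete constraint system (``the exact bookkeeping is a design choice,'' ``where the real work lies,'' ``I would handle this by...''), and that deferred piece is precisely where the mathematical content of the lemma lives, so as written this is a gap rather than a proof.

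Here is what has to be nailed down. The paper takes a string of the form $\hashtag_{a_0}\$^{*}\hashtag_{a_1}\$^{*}\cdots\$^{*}\hashtag_{a_{k-1}}\$^{*} b_{\ell-1}\cdots b_0 \hashtag_1$ and imposes, as the \emph{only} nontrivial length constraint, a relation between each consecutive pair of gaps: if $d_i$ is the distance from $\hashtag_{a_i}$ to $\hashtag_{a_{i+1}}$, then $2 d_i/a_i = d_{i+1}/a_{i+1}$ with $a_i\in\{1,2\}$ (plus a base case $d_0 = a_0$). The three properties you correctly flag as needing to hold simultaneously then come out as follows. (a) CF-checkability: the constraint relates only two adjacent gaps at a fixed $2$-to-$1$ ratio, so the complement is recognized by a PDA that guesses a violating triple $\hashtag_a\cdots\hashtag_b\cdots\hashtag_c$, pushes on one gap and pops at the fixed ratio on the next. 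This sidesteps the step in your sketch where you suggest matching a unary residual block against $y$ read in binary --- a PDA cannot do that for a residual of length $\Theta(n)$, and it is not needed. (b) Rigidity: a one-line induction gives $d_i = a_i 2^i$, so $|z| = 1 + \sum_i a_i 2^i$; since $a_i\in\{1,2\}$, this rewrites as $|z| = 2^k + \sum_i (a_i-1)2^i$, forcing the bits of $|z|$ to be $b_k=1$ and $b_i = a_i-1$. The agreement between the $a_i$ sequence and the bit suffix $y$ is then a comparison of two $O(\log n)$-symbol strings, which is the second CF check. (c) Satisfiability for every $n\ge 6$: the map $(a_0,\dots,a_{k-1})\mapsto 1+\sum a_i 2^i$ with $a_i\in\{1,2\}$ hits every integer in $[2^k, 2^{k+1})$, so for each sufficiently large $n$ there is exactly one valid choice of $k$ and the $a_i$'s. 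You do not get to treat this as a free ``design choice'': if you pick relations that are not pairwise-adjacent with a constant ratio, (a) fails; if you pick weights outside $\{1,2\}$ (or drop the $+1$), (c) fails for some $n$; if you under-constrain, (c) holds but (b) fails.

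One smaller point on the query algorithm. You suggest locating the special symbols with Grover search because ``they are sparse.'' That is the wrong primitive: Grover will find \emph{a} special symbol, but a hostile input can scatter $\Theta(n)$ stray $\hashtag$s and you cannot afford to enumerate them. The right observation --- which the rigidity in (b) gives you for free --- is that in any string of $K$ the $\hashtag$ positions are \emph{determined} by $n$: they are at the prefix sums $1, 1+d_0, 1+d_0+d_1,\dots$, and those are computable from $n$ alone. So the algorithm queries those $O(\log n)$ positions directly, queries the $O(\log n)$ trailing bits, and then runs Grover over all remaining positions to search for a symbol that is not $\$$. That is $O(\sqrt n)$ queries. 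Your phrasing inverts the use of Grover: it is for verifying the padding, not for discovering where the structure is.
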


\begin{lemma}[Proof in Appendix~\ref{sec:contextfreeproofs}, folklore \cite{sipser:2006}]
\label{lem:cflhistory}
Let $N$ be a $k$-tape nondeterministic Turing machine.  Define language $K_N$ which contains strings of the form
$$
C_1 \hashtag C_2^R \hashtag C_3 \ldots C_{n-1}^R \hashtag C_n
$$
where $C_1$ is a valid start configuration of $N$, $C_n$ is a valid accepting configuration, and $C_i$ to $C_{i+1}$ is a valid transition. Then, $\overline{K_N}$ is context free, and $Q(K_N) = O(\sqrt n)$.
\end{lemma}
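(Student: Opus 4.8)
This is the classical ``the complement of a valid‑computation‑history language is context‑free'' fact (folklore, see \cite{sipser:2006}), combined with the observation that each ingredient of that argument also yields a cheap quantum algorithm. The plan is to first fix a convenient encoding: reduce $N$ (if necessary) to a single‑tape machine, or equivalently encode the $k$ tapes on one track‑product tape, so that the one‑step transition relation is \emph{cell‑local} --- cell $j$ of $C_{i+1}$ is determined by cells $j-1,j,j+1$ of $C_i$ --- pad every configuration to a common length so that offsets line up across blocks, and write the history as $C_1 \hashtag C_2^R \hashtag C_3 \hashtag \cdots$ over $\Gamma \cup \{\hashtag\}$ with $\hashtag \notin \Gamma$ and a distinguished head/state marker inside $\Gamma$.

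For context‑freeness of $\overline{K_N}$: a string fails to lie in $K_N$ iff it exhibits at least one of finitely many \emph{defects}: (i) a wrong $\hashtag$‑pattern, or some configuration block is syntactically invalid (e.g.\ lacks a unique head marker), (ii) $C_1$ is not the valid start configuration, (iii) $C_n$ is not an accepting configuration, or (iv) some adjacent pair $(C_i,C_{i+1})$ violates the cell‑local transition rule at some offset. Defects (i)--(iii) define regular languages, hence context‑free. For (iv) I would build a PDA that nondeterministically skips to some adjacent pair of blocks, reads the first block while pushing it onto the stack, then reads the second block --- which, thanks to the forward/reverse alternation, comes off the stack in \emph{aligned} order --- popping symbol by symbol while holding a constant‑size window of the first block in its finite control, nondeterministically guessing an offset $j$, and verifying that cell $j$ of the second block is inconsistent with the rule applied to the corresponding window of the first; it handles both adjacency orientations $C \hashtag C^R$ and $C^R \hashtag C$ symmetrically. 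Since regular $\subseteq$ context‑free and context‑free languages are closed under finite union, $\overline{K_N}$ is context‑free.

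For $Q(K_N) = O(\sqrt n)$, equivalently $Q(\overline{K_N}) = O(\sqrt n)$: I would detect each defect type by a Grover‑type search (Theorem~\ref{thm:grover}) and take the OR of the $O(1)$ resulting subroutines (amplifying each to error $O(1)$, a constant‑factor overhead). First recover the common block length $\ell$ by quantum minimum‑finding on the positions of $\hashtag$ ($O(\sqrt n)$ queries) and check the consistency $n \equiv -1 \pmod{\ell+1}$ classically; then the block index, in‑block offset, and orientation of any position $p$ are computable in $O(1)$ RAM steps, so a misplaced‑or‑missing $\hashtag$ (part of defect (i)) and a transition violation (iv) are each found by Grover search over positions with $O(1)$ queries per position. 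The remaining parts of defect (i) --- ``some block has $\geq 2$ head markers'' or ``some block has $0$ head markers'' --- are found by Grover search over blocks with an $O(\sqrt\ell)$ inner search inside a guessed block; since $(\text{number of blocks})\cdot\ell = \Theta(n)$, this costs $\tilde O(\sqrt n)$. Defects (ii) and (iii) are a few local checks plus an $O(\sqrt\ell)$ scan of the first/last block. Combining yields $Q(\overline{K_N}) = \tilde O(\sqrt n)$, hence $Q(K_N) = \tilde O(\sqrt n)$.

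The main obstacle is entirely the \emph{alignment} bookkeeping: for the PDA it is the standard reversal trick --- which is precisely why the history is written with alternating orientation --- and for the quantum algorithm it is recovering the block structure cheaply (handled by quantum minimum‑finding) so that the per‑position transition checks become $O(1)$‑query local checks via RAM arithmetic on indices. The only wrinkle relative to the stated bound is that the uniqueness‑of‑head check nests a Grover search inside a Grover search, costing a logarithmic factor; one can remove it by observing that those conditions are in fact star‑free (the relevant ``count, saturating at $2$'' syntactic monoid is aperiodic) and folding them in, or simply by absorbing the $\operatorname{polylog}$, since $\tilde O(\sqrt n)$ is all the application in Theorem~\ref{thm:cfl_computable} needs.
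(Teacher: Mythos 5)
Your proposal is essentially the same route as the paper's sketch for context-freeness (nondeterministically guess a defect location, use the stack and the alternating forward/reverse layout for alignment, check a constant-size local window), and you supply the quantum argument in detail, which the paper's appendix sketch does not actually spell out. Two small technical points are worth tightening.

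First, you assert that reducing $N$ to a single-tape machine and encoding the $k$ tapes on one track-product tape are ``equivalent'' ways to get the transition relation to be cell-local, with cell $j$ of $C_{i+1}$ determined by cells $j-1,j,j+1$ of $C_i$. That is not quite right: under the track-product encoding of a genuine $k$-tape machine, the new symbol written at the head of track $t$ depends on the symbols under \emph{all} $k$ heads, which can sit at arbitrary and mutually distant positions, so the update at cell $j$ is not determined by a radius-$1$ window. What saves the PDA argument (and what the paper implicitly relies on) is a weaker form of locality: only $O(1)$ cells change per step, and the PDA can cache the state and the $k$ head/symbol pairs in its finite control while it scans $C_i$, then verify consistency against $C_{i+1}^R$ cell by cell via the stack. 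For your quantum algorithm, either perform the single-tape reduction up front (your stated fallback, which does yield true radius-$1$ locality), or accept the weaker locality and note that for a guessed pair of adjacent blocks you can first locate the $O(1)$ head positions by Grover search inside the block in $O(\sqrt{\ell})$ queries, after which the per-position transition check is again $O(1)$ queries; this keeps the overall bound at $\tilde O(\sqrt{n})$ via the same nested-Grover accounting you already use for the head-count defect.

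Second, your padding of every configuration to a common length is a genuine (and necessary) modification of the encoding rather than a cosmetic convenience: without uniform block length, the index arithmetic that turns a global position into a (block, offset) pair is unavailable, and recovering the full $\hashtag$ pattern would cost $\Theta(n)$ queries. The lemma statement leaves the encoding of $C_i$ unspecified, and the only consumer of the lemma (Theorem~\ref{thm:cfl_computable}) simulates time-bounded machines whose configurations it is free to pad, so this choice is legitimate — but it should be stated as part of the definition of $K_N$ rather than slipped in as an encoding detail, since $K_N$ without padding has a different set of positive instances. With those two clarifications, your proof is correct and in fact more complete than the paper's own sketch.
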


We are now ready to construct context-free languages that have quantum query complexities corresponding to limit computable exponents.  Although there are several technical details to check in the proof, the central idea is straightforward:  Let $x \in \Sigma^n$ be the input.  If $x$ is not in the language defined in Lemma~\ref{lem:cflcounting}, then the $n$ will be written in binary on the string.  If $x$ is also not in the language defined in Lemma~\ref{lem:cflhistory}, then the input will contain a correct simulation of a Turing machine limit computing some query exponent $c \in [1/2,1]$ and verifying that a $\hashtag$ symbol has been placed at position $n^c$.  Using Grover search, we can verify that the membership in these languages in $O(\sqrt n)$ time.  If $x$ is in neither language, then computing parity on the prefix of the input (up to the $\hashtag$ symbol) takes time $\Theta(n^c)$, from which the theorem follows. 

\begin{reptheorem}{thm:cfl_computable}
For all limit computable $c \in [1/2,1]$, there exists a context-free language $L$ such that $Q(L) = O(n^{c + \epsilon})$ and $Q(L) = \Omega(n^{c - \epsilon})$ for all $\epsilon > 0$.  Furthermore, if an additive $\epsilon$-approximation to $c$ is computable in $2^{O(1/\epsilon)}$ time, then $Q(L) = \Theta(n^{c})$. In particular, any algebraic $c \in [1/2, 1]$ has this property. 
\end{reptheorem}
\begin{proof}
Let $M$ be the Turing machine computing $c$ in the limit. That is, on input $1^{k}$ it outputs a rational approximation $c_k$ such that $\lim_{k \to \infty} c_k = c$. Let $n_k$ be the size of the computation history when computing $c_k$. Without loss of generality, we assume $n_k$ is strictly increasing with $k$. We also assume that the computation history for computing $n^{c_k}$ from $n$ and $c_k$ (both written in binary) is of size at most $n$ for all $n \geq n_k$.\footnote{Exponentiation will run in polynomial time, which is actually $\polylog(n)$ since $n$ is written in binary, plus the description size of $c_k$. In fact, we can be even sloppier and approximate $n^{c_k}$ with the first $c_k$ fraction of the bits of $n$, and still be accurate up to constant factors.}

Our goal is to construct a context-free language which accepts any input not satisfying the following array of conditions. Note that each condition may require a complicated witness to verify, perhaps as long as the input itself.  Therefore, we let the alphabet be tuples $\Sigma = \Sigma_1 \times \Sigma_2 \times \cdots \times \Sigma_m$ so that there are $m$ independent \emph{tracks} to work with. Suppose the input has length $n$, and consider the following six tracks.
\begin{enumerate}[itemsep=0pt]
	\item The first track contains bits and a $\$$ symbol, hopefully at position $\lceil n^{c_k} \rceil$ in the string.
	\item Some Turing machine $M$ limit computes $c \in [1/2, 1]$. The second track holds a valid computation history of $M$ computing some $c_k$ from input $1^k$. 
	\item The third track contains an \emph{incomplete} execution of $M$ on $1^{k+1}$. If the string is long enough to complete the computation of $c_{k+1}$, then $c_k$ is obsolete and should not be used. 
	\item The fourth track contains a binary number (and associated machinery, see Lemma~\ref{lem:cflcounting}) matching the length of the input.
	\item The fifth track is the same as the fourth, except the number is the position of $\$$ on track one.  
	\item The sixth track holds a Turing machine computation history which verifies that the position of $\$$ is $\lceil n^{c_k} \rceil$, based on the numbers from tracks $2$ ($c_k$), $4$ ($|z|$), and $5$ (position of $\$$). 
\end{enumerate}
We enforce these conditions with a corresponding array of context-free languages, which \emph{reject} satisfying strings. The final language will be a union of these languages, so that it rejects a string if and only if all the conditions are satisfied. 

We have already seen most of the languages we need. For example, we want to accept if track two is not the computation history of the Turing machine $M$ which computes $c_k$, but we have seen how to construct such a language in Lemma~\ref{lem:cflhistory}. Similarly for tracks three and six, we can tweak this construction to accept on \emph{incomplete} computation histories. On track four we want precisely the binary counter language in Lemma~\ref{lem:cflcounting}. Track five is the same thing concatenated with $\$ \Sigma^{*}$ to ignore symbols after $\$$. Track one is actually just a regular language, $\Sigma_1^{*} \backslash (0|1)^{*} \$ (0|1)^{*}$.

Each language so far focuses on just one of the tracks, and we need a few ``glue" languages to ensure the various tracks interact correctly. The first verifies that $c_k$, $n$, and the position of $\$$ (appearing on track two, track four, and track five respectively) match the strings in the input configuration on track six. A second glue language checks that if the starting configuration on track two was $1^{k}$, then the starting configuration on track three is $1^{k+1}$, so that it computes $c_{k+1}$. The third and final glue language checks that the $\$$ on track one matches the $\$$ on track five. We arrange for all of the glue languages to accept strings which fail these checks, in keeping with the complemented behavior of the other languages.

Suppose we have a string of length $n_k \leq n < n_{k+1}$ which is \emph{rejected} by all of the languages. It follows that all of the conditions are satisfied, so we can say a lot about the string. First, track two must compute $c_{k'}$ for some $k' \leq k$, since there is not enough space for $c_{k+1}$. Similarly, track three stops in the middle of computing $c_{k'+1}$ for some $k' \geq k$, since for small $k'$ it would have finished. But $k'$ is the same in both cases (due to a glue language), and hence $k' = k$. Track four and five generate binary numbers for the length and position of the $\$$ symbol which, by another glue language, are written on the input of track six, along with $c_k$. Finally, the sixth track verifies that the position is indeed $\lceil n^{c_k} \rceil$.

We have one final language, which accepts depending on the parity of the bits on the first track up to the $\$$. If all of the other languages reject, then we have argued that $\$$ is at position $\lceil n^{c_k} \rceil$, so computing the parity takes $n^{c_k}$ queries, up to constant factors. Checking all the other conditions takes $O(\sqrt{n})$ queries, so the cost of computing the parity dominates because $c_k \geq \frac{1}{2}$. It follows that the quantum query complexity is within a constant factor of $n^{c_k}$ for $n$ between $n_k$ and $n_{k+1}$. For any $\epsilon > 0$ we have $| c - c_k | \leq \epsilon$ for sufficiently large $k$, and hence for sufficiently large $n$, the query complexity is $Q(L) = O(n^{c + \epsilon})$ and $\Omega(n^{c - \epsilon})$. 

Finally, we note that if the $c_k$s converge sufficiently quickly (with respect to computation time, not $k$) then the query complexity is truly $Q(L) = \Theta(n^{c})$. For example, suppose we have a Turing machine which spends $2^{O(1/\epsilon)}$ time to output an approximation $c'$ to $c$ such that $|c - c'| \leq \epsilon$, for any $\epsilon > 0$. It does not matter whether the machine outputs a stream of better and better approximations, or takes $\epsilon$ as input and outputs a sufficiently good approximation. Either way, we can construct a machine which maps $1^{k}$ to $c_k$ with a similar guarantee: the time to compute $c_{k+1}$ is at most $2^{O(1/\epsilon_k)}$ where $\epsilon_k = |c - c_k|$. We claim this is enough to show $Q(L) = \Theta(n^{c})$. 

Our construction of $L$ is such that the query complexity is (up to constant multiplicative factors) $n^{c_k}$ on the entire interval $[n_k,n_{k+1})$. For convenience, define a function $c(n) \colon \mathbb{N} \to \mathbb{R}$ such that $c(n) = c_k$ iff $n \in [n_k,n_{k+1})$. This means $Q(L) = \Theta(n^{c(n)})$, and taking logs gives $\left|\frac{\log Q(L)}{\log n} - c(n) \right| = O(1/\log n)$. Recall $n_{k+1} \le 2^{b/\epsilon_k}$ for some $b$ (and all sufficiently large $k$) so we have $|c - c_k| = \epsilon_k \leq b/\log n_{k+1}$. It follows that $|c - c(n)| \leq O(1/\log n)$. Together, this implies 
$$
\left| \frac{\log Q(L)}{\log n} - c \right| \leq \left|\frac{\log Q(L)}{\log n} - c(n) \right| + |c(n) - c| \leq \frac{a}{\log n}
$$
for some $a$ and for all sufficiently large $n$. It follows $2^{-a} n^{c} \leq Q(L) \leq 2^{a} n^{c}$, so we conclude that for sequences converging sufficiently quickly, $Q(L) = \Theta(n^{c})$. For example, any algebraic number can be computed to $1/\epsilon$ precision in $\polylog(1/\epsilon)$ time using Newton's method from a suitable starting point.
\end{proof}

The converse of this theorem also holds.  

\begin{reptheorem}{thm:cfl_converse}
Let $L$ be a context-free language such that $\lim_{n \to \infty} \frac{\log Q(L)}{\log n} = c$.  Then, $c$ is limit computable.
\end{reptheorem}
\begin{proof}
Suppose $L$ is context free.  Recall that given $w \in \Sigma^*$, the problem of computing membership of $w$ in $L$ is decidable \cite{sipser:2006}.  Next, we observe that the quantum query complexity can be expressed as the solution (up to logarithmic factors) to a large semi-definite program \cite{reichardt:2009}.  That is, there exists a Turing machine which outputs $\adv(L)$ such that $Q(L) = \tilde{\Theta}(\adv(L))$.  Therefore, we can construct a Turing machine which outputs $\log(\adv(L))/(\log n)$, and
$$
\lim_{n \to \infty} \frac{\log(\adv(L))}{\log n} = \lim_{n \to \infty} \frac{\log \tilde{\Theta}{(Q(L))}}{\log n} = \lim_{n \to \infty} \frac{\log Q(L)}{\log n} = c.
$$
Therefore, $c$ is limit computable.
\end{proof}


\section{Future Work}
\label{sec:futurework}

Recall that the $\tilde{O}(\sqrt n)$ algorithm for star-free languages incurs \emph{many} log factors.  This suggests a natural question:  what is the exact upper bound for the query complexity for star-free languages?  Proving that even one log factor is necessary seems challenging.

Next, we are interested in extending the hierarchy to other languages and settings. The context-free languages, for example, seem like a natural step. We know (see Section~\ref{sec:cfl}) that there is no longer a trichotomy; for every limit computable number $c \in [1/2, 1]$, there exists a context-free language with quantum query complexity approaching $\Theta(n^c)$. We also conjecture that no context-free language has quantum query complexity $\omega(1)$ but also $o(\sqrt{n})$.

Another setting to consider is promise problems. In this work, we required the query algorithm to decide membership on all strings. If we restrict the input strings to some promise set, it may affect the query complexity.
Allowing for an arbitrary promise trivially leads to languages with quantum query complexity $\Theta(f(n))$ for an arbitrary function $f$ between $0$ and $n$.  For example, consider the parity function with the promise that only the first $f(n)$ bits are nonzero. 
 Instead, let us take the promise to be a regular language.  In this model, we can construct a binary search language with query complexity $\Theta(\log n)$.  Formally, the problem is to decide whether there is an occurrence of $01$ at an even position (i.e., membership in $(\Sigma \Sigma)^{*} 01 \Sigma^{*}$) promised that the input is sorted (i.e., belongs to $0^{*} 1^{*}$).  We conjecture that the trichotomy becomes $\Theta(\polylog(n))$, $\Theta(\sqrt{n} \cdot \polylog(n))$, or $\Theta(n)$.
 
We are interested in more applications of the star-free algorithm. For example, in the classical world, linear-time algorithms for the string matching problem have been derived from finite automata. Quantum algorithms for string matching with quadratic speedup are known \cite{ramesh:2003}, but can one derive a quadratic speedup by applying our algorithm for star-free languages as a black box? As a toy example, notice that for any fixed $w$, the language $\Sigma^{*} w \Sigma^{*}$ is star free, so we obtain $\tilde{O}(\sqrt n)$ string search for fixed queries. 

Finally, consider the restricted Dyck language introduced in Section~\ref{sec:applications}---the language of nested parentheses where the parentheses are only allowed to nest $k$ levels deep.  When $k$ is constant, this language is star free and therefore has quantum query complexity $\tilde{\Theta}(\sqrt n)$.  When $k$ is unbounded, consider the set of inputs $w_x = \lparen \lparen \ldots \lparen x  \rparen \ldots \rparen \rparen$ where $x \in \{\lparen, \rparen\}^{n/3}$ and there are exactly $n/3$ leading left parentheses and $n/3$ trailing right parentheses.  Notice that $w_x$ is the Dyck language iff the number of left parentheses in $x$ is equal to the number of right parenthesis in $x$.  Therefore, the quantum query complexity for $k = \Omega(n)$ is $\Omega(n)$.  We now ask the question: what is the quantum query complexity of the restricted Dyck language when $k$ is sublinear but superconstant?

 


\section{Acknowledgements}

We thank Andris Ambainis, Shalev Ben-David, Robin Kothari, Han-Hsuan Lin, and Ronald de Wolf for useful discussions.

\bibliographystyle{plain}
\bibliography{bibliography}

\appendix

\section{Flattening Details}
\label{sec:flattening}

Let us start with a precise definition of what we mean in this paper by a flat regular language.

\begin{definition}
	Let $\varphi \colon \Sigma^{*} \to M$ be a monoid homomorphism onto a finite monoid. Let $\Sigma_k$ denote the non-empty strings of length divisible by $k$. The \emph{conductor} is the least integer $K$ such that $\varphi(\Sigma_K) = \varphi(\Sigma^{nK})$ for all $n \geq 1$.
	A regular language $L \subseteq \Sigma^{*}$ recognized by the morphism $\varphi \colon \Sigma^* \to M_L$ onto its syntactic monoid is \emph{flat} if its conductor is $1$. 
\end{definition}
Once we convert the language to blocks of size $K$ (i.e., alphabet $\Sigma^{K}$), any congruence class of the monoid containing a non-empty string contains strings of all (non-zero) lengths. We refer to this as Property~\ref{property:flat} in Section~\ref{sec:formalstatement}. However, we still need to show $K$, and therefore flat regular languages, exist. 
\begin{theorem}
	For any homomorphism $\varphi \colon \Sigma^{*} \to M$ onto a finite monoid, the conductor is finite and computable.
\end{theorem}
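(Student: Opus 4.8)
The plan is to reduce the statement to the eventual periodicity of a sequence of subsets of $M$. For $n \geq 1$ write $U_n := \varphi(\Sigma^n) \subseteq M$ for the set of monoid elements arising as images of length-$n$ words. Since $\varphi$ is a homomorphism, $\varphi$ of a union is the union of the images and $\varphi(AB) = \varphi(A)\varphi(B)$ (setwise product), so in particular $U_{m+n} = U_m U_n$. Because $\Sigma_K = \bigcup_{j \geq 1} \Sigma^{jK}$, we get $\varphi(\Sigma_K) = \bigcup_{j \geq 1} U_{jK}$, and therefore the defining condition of the conductor — that $\varphi(\Sigma_K) = \varphi(\Sigma^{jK})$ for every $j \geq 1$ — is equivalent to the single statement that the sequence $(U_{jK})_{j \geq 1}$ is \emph{constant}.

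First I would prove finiteness. The subsets of $M$ form a finite monoid under setwise multiplication (of size at most $2^{|M|}$), and the sequence $(U_n)_{n \geq 1}$ is obtained by iterating the map $X \mapsto X U_1$ on this finite set, hence it is eventually periodic: there are $i_0 \geq 1$ and $p \geq 1$ with $U_n = U_{n+p}$ for all $n \geq i_0$. Now take any $K$ that is a multiple of $p$ with $K \geq i_0$. Then for every $j \geq 1$ we have $jK \geq i_0$ and $jK \equiv 0 \pmod p$, so eventual periodicity gives $U_{jK} = U_K$ independently of $j$. Thus this $K$ satisfies the conductor condition, so the conductor — the least such $K$ — exists and is at most $i_0$ rounded up to a multiple of $p$. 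The point here is that passing to a multiple of the period that also clears the pre-period forces $jK$ into the periodic zone at the common residue $0$ for \emph{all} $j$ simultaneously, which is exactly what upgrades "eventually constant" to "constant".

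For computability, observe that $U_1 = \{\varphi(a) : a \in \Sigma\}$ is read off directly from the finite table defining $\varphi$, and each $U_{n+1} = U_n U_1$ is then computable; iterating while watching for a repeated subset produces $i_0$ and $p$ explicitly (both bounded by $2^{|M|}$), together with the finite list $U_1, \dots, U_{i_0 + p - 1}$, which determines $U_n$ for every $n$. Given these data, for any fixed candidate $K$ the sequence $(U_{jK})_{j \geq 1}$ is itself eventually periodic in $j$ with pre-period at most $\lceil i_0 / K \rceil$ and period dividing $p$, since for $jK \geq i_0$ the value $U_{jK}$ depends only on $jK \bmod p$, which is periodic in $j$. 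Hence constancy of $(U_{jK})_{j}$ is decided by inspecting the finitely many terms $j = 1, \dots, \lceil i_0/K \rceil + p$. Enumerating $K = 1, 2, 3, \dots$ and running this test, the search is guaranteed to halt by the $K$ exhibited in the previous paragraph, and it returns the conductor.

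I do not expect a substantive obstacle; the only care needed is the bookkeeping point flagged above — distinguishing "the tail $(U_n)$ is periodic" from "the subsequence $(U_{jK})_j$ is outright constant" — and the routine verification that the eventual-periodicity parameters, and hence the per-$K$ constancy test, are effectively computable from $\varphi$.
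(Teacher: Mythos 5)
Your proof is correct, and it takes a genuinely different (and arguably cleaner) route than the paper. The paper works with the ``transpose'' data: for each $r \in M$ it considers $A_r := \lambda(\varphi^{-1}(r))$, the set of lengths of strings mapping to $r$, and invokes eventual periodicity of these unary regular languages (via DFAs). It then chooses $K$ to be a common multiple of the periods that also clears the pre-periodic exceptions. You instead work directly with the forward data $U_n := \varphi(\Sigma^n) \subseteq M$, observe that $U_{n+1} = U_n U_1$ is an iteration inside the finite monoid of subsets of $M$, and get eventual periodicity by pigeonhole with an explicit bound $2^{|M|}$ on both pre-period and period. Note that $r \in U_n \iff n \in A_r$, so the two approaches encode the same information; your formulation is more self-contained (no appeal to automaton theory), gives explicit bounds, and---unlike the paper's proof, which proves only finiteness and leaves computability implicit---spells out an effective decision procedure for each candidate $K$ and an explicit termination witness. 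The one cosmetic slip: you write that choosing $K$ ``upgrades `eventually constant' to `constant' ''; you mean it upgrades the \emph{eventual periodicity} of $(U_n)_n$ to \emph{constancy} of the subsequence $(U_{jK})_j$, which is what your argument actually establishes.
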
 
\begin{proof}
	Let $\lambda \colon \Sigma^{*} \to \mathbb N$ be the homomorphism mapping strings to their lengths. The set $A_r := \lambda(\varphi^{-1}(r))$ is ultimately periodic, i.e., there exists $p$ such that $A_r$ and $A_r+p$ differ at finitely many points. This may be easier to see by mapping $\varphi^{-1}(r)$ to unary, and since the language is still regular, considering the DFA. Let $K'$ be the least common multiple of the period of $A_r$ for all $r \in M$. We will take $K$ to be a multiple of $K'$, so we may as well assume without loss of generality that the period of $A_r$ is $1$.
	
	When a set of natural numbers has period $1$, it is either finite or cofinite. Take $K$ larger than all the finite exceptions in either case. That is, for all $r$, take $K$ larger than the maximum element in $A_r$ (if finite) and the maximum element not in $A_r$ (if cofinite). The result is that each $A_r \cap K \mathbb N$ is one of $\varnothing$, $\{ 0 \}$, $K \mathbb N$, or $K \mathbb N \backslash \{ 0 \}$. Only the identity class, $A_1$, can contain $0$, so all other $A_r$ are either $\varnothing$ or $K \mathbb N \backslash \{ 0 \}$. We throw away $r \in M$ such that $A_r = \varnothing$, and the remaining elements have the property, by construction, that they are the images of strings of all lengths divisible by $K$. 
\end{proof}

We are finally ready to restate and prove Theorem~\ref{thm:flattening}, which states that any regular language can be divided into a collection of flat languages.

\begin{reptheorem}{thm:flattening}
Let $L \subseteq \Sigma^{*}$ be a regular language recognized by a monoid $M$. There exists an integer $p \geq 2$ and a finite family of flat regular languages $\{ L_i \}_{i \in I}$ over alphabet $\Sigma^{p}$ such that testing membership in $L$ reduces (in fewer than $p$ queries) to testing membership in some $L_i$. Furthermore, the same monoid $M$ recognizes $L$ and every $L_i$.
\end{reptheorem}
\begin{proof}
Let $p$ be the conductor of $L$.  Consider an input $x \in \Sigma^{*}$ of length $n$. Clearly we can divide $x$ into a string $x' \in (\Sigma^p)^{*}$ of length $\lfloor n/p \rfloor$, and a remainder $r \in \Sigma^{*}$ of length less than $p$. For each such $r$, we define the language 
	$$
	L_r := \{ y \in (\Sigma^{p})^{*} : yr \in L \},
	$$
	slightly abusing notation so that $y$ denotes both a string over $\Sigma^{p}$ and a string over $\Sigma$. We leave it as an exercise to show that $L_r$ is regular. By construction, $x$ is in $L$ if and only if $x'$ is in $L_r$, so by looking at length of the input and the last $|r|$ symbols, we have reduced testing membership in $L$ to membership in $L_r$. 
	
	Finally, let $\varphi^{p} \colon (\Sigma^{p})^{*} \to M$ denote the extension of $\varphi$ to strings over $\Sigma^{p}$. Note that we can write $L_r$ as 
	\begin{align*}
	L_r &= \{ y \in (\Sigma^{p})^{*} : \varphi^{p}(y) \varphi(r) \in S \} \\
	&= \{ y \in (\Sigma^{p})^{*} : \varphi^{p}(y) \in \{ q \in M : q \varphi(r) \in S \} \} \\
	&= (\varphi^{p})^{-1}( \{ q \in M : q \varphi(r) \in S \}).
	\end{align*}
	It follows that $L_r$ is recognized by $M$. By construction, the conductor of $L_r$ is 1, so $L_r$ is flat.
\end{proof}

\subsection{Monotonic query complexity}
\label{sec:monotone_query_complexity}

Let us now consider an alternative to flattening---namely, modifying the definition of query complexity so that it is nondecreasing.  For this section only, define the quantum query complexity $Q(f)(n)$ of function $f$ to be the minimum number of quantum oracles calls needed to determine the value of $f$ on all strings of length \emph{up to} $n$.  When query complexity is defined in this way, we can prove a quantum query complexity trichotomy theorem for \emph{all} regular languages as a corollary of our trichotomy theorem for flat languages.


\begin{theorem}
Let $L \subseteq \Sigma^*$ be any regular language.  The quantum query complexity of $L$ is either $0, \Theta(1), \tilde{\Theta}(\sqrt n),$ or $\Theta(n)$.
\end{theorem}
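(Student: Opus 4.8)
The plan is to obtain the result as a corollary of the flattening theorem (Theorem~\ref{thm:flattening}) together with the trichotomy for flat regular languages (Theorem~\ref{thm:classification}). First I would apply Theorem~\ref{thm:flattening} to $L$, producing an integer $p \ge 2$ and, for each remainder string $r \in \Sigma^{<p}$, a flat regular language $L_r = \{\, y \in (\Sigma^p)^* : yr \in L \,\}$. By construction, an input $x$ of length $n$ over $\Sigma$ lies in $L$ if and only if the prefix $x'$ obtained by deleting the last $n \bmod p$ symbols, viewed as a string of length $\lfloor n/p \rfloor$ over $\Sigma^p$, lies in $L_r$, where $r$ is those last $n \bmod p$ symbols. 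Since each $L_r$ is flat, Theorem~\ref{thm:classification} gives that its ordinary quantum query complexity $Q(L_r)(m)$ is $0$, $\Theta(1)$, $\tilde{\Theta}(\sqrt m)$, or $\Theta(m)$. These four rates are linearly ordered, so I let $\mathcal{C}$ be the largest one appearing among the finitely many $L_r$ and $f_{\mathcal{C}}$ the corresponding representative function ($0$, $1$, $\sqrt n$, or $n$).

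For the upper bound, observe that under the monotone convention $Q(L)(n)$ equals the maximum, over $n' \le n$, of the ordinary query complexity of $L$ on length-$n'$ inputs, since the algorithm is told the length and may branch on it. Given a length-$n'$ input I would read its last $n' \bmod p < p$ symbols to identify $r$ and then run an optimal ordinary algorithm for $L_r$ on $x'$; a query to a single $\Sigma^p$-symbol of $x'$ is answered with $O(1)$ queries to $x$ by the alphabet-size bounds of Section~\ref{sec:alphabet_size}. Hence the cost on length-$n'$ inputs is $O(1) + O(1)\cdot Q(L_r)(\lfloor n'/p \rfloor)$, and since each $Q(L_r)$ is (up to polylog factors) nondecreasing and $p$ is constant, taking the maximum over $n' \le n$ yields $Q(L)(n) = \tilde{O}(f_{\mathcal{C}}(n))$.

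For the matching lower bound I would run the reduction backwards. Fix $r^*$ with $Q(L_{r^*})$ of rate $\mathcal{C}$. As $y \in L_{r^*} \iff y r^* \in L$ and $y r^*$ has length $p|y| + |r^*|$, any algorithm deciding $L$ on all lengths up to $n$ yields, with $O(1)$ query overhead (again via the alphabet-size bounds), an algorithm for $L_{r^*}$ on all lengths up to $\lfloor (n - p)/p \rfloor$; thus $Q(L)(n) = \Omega\big(Q(L_{r^*})(\Theta(n))\big)$. If $\mathcal{C}$ is $\Theta(m)$ this is $\Omega(n)$; if $\mathcal{C}$ is $\tilde{\Theta}(\sqrt m)$, then $L_{r^*}$ is flat, star free, and non-trivial, so $\approxdeg(L_{r^*}) = \Omega(\sqrt m)$ by the lower bounds of Section~\ref{sec:lowerbounds}, hence $Q(L_{r^*})(m) = \Omega(\sqrt m)$ and $Q(L)(n) = \Omega(\sqrt n)$; and if $\mathcal{C}$ is $\Theta(1)$ then $Q(L_{r^*})(m) \ge 1$ for large $m$, so $Q(L)(n) \ge 1$ for large $n$, which with the $O(1)$ upper bound gives $\Theta(1)$. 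The remaining case is $\mathcal{C} = 0$: then every $L_r$ is decided with no queries, so $Q(L)(n) = O(1)$, and since $Q(L)$ is monotone it is either identically $0$ (when membership in $L$ depends only on the length) or eventually at least $1$, hence $\Theta(1)$. In every case $Q(L)$ lands in one of the four classes.

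The argument has no deep step; the main thing to handle carefully is the bookkeeping of constants and logarithmic factors as one passes between the alphabets $\Sigma$ and $\Sigma^p$, runs the flattening reduction in both directions, and replaces the ordinary complexity by its monotone envelope. In particular I would need to make precise that the $\tilde{\Theta}(\sqrt m)$ growth of $Q(L_r)$ (respectively $\Theta(m)$ or $\Theta(1)$) is preserved, up to constants and polylogs, under taking the running maximum over shorter lengths, so that neither the upper nor the lower bound drifts out of its class.
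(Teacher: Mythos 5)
Your proposal is correct and follows essentially the same route as the paper's proof: flatten $L$ via Theorem~\ref{thm:flattening} into the family $\{L_r\}$, invoke Theorem~\ref{thm:classification} for each flat $L_r$, and then run the reduction between $L$ and the $L_r$'s in both directions (reading the last $<p$ symbols to pick $r$ for the upper bound, and padding by $r^*$ and using monotonicity for the lower bound) to conclude $Q(L) = \Theta(\max_r Q(L_r))$. The paper phrases the conclusion as the single identity $Q(L) = \Theta(\max_r Q(L_r))$ (after disposing of the all-constant case), whereas you make the four-way case split on $\mathcal{C}$ explicit, but the content is the same.
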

\begin{proof}
By Theorem~\ref{thm:flattening}, we have that $L$ is a finite disjoint union of languages $L_r r$ where each $r \in \Sigma^{*}$ has length less than $p$.  Technically, $L_r$ is over the alphabet $\Sigma^p$, but we extend strings in $L_r$ to strings over alphabet $\Sigma$ in the obvious way.  If all $L_r$ have constant query complexity, then $Q(L) = 0$ or $Q(L) = \Theta(1)$.  Therefore, assume there is some $L_r$ such that $Q(L_r) = \omega(1)$.  We will show that $Q(L) = \Theta(\max_r Q(L_r))$.

Let us consider one algorithm for $L$ on strings of length $np + i$ where $i < p$: query the last $i$ characters of the string to determine $r$, and then use at most $p Q(L_r)(n)$ queries to test the rest.  Therefore, we have\footnote{We now see the need to separate the constant and non-constant cases.  The additive $p$ factor would technically take a 0-query algorithm to an $\Theta(1)$-query algorithm, which we want to avoid.}
$$
Q(L)(np + i) \le \max_{r} p Q(L_r)(n) + p.
$$

In the other direction, notice that by decreasing the length of the string by at most $p$, we can have any remainder string $r$.  By the modified definition of query complexity, shortening the length must decrease the query complexity.  Since we can force the query algorithm to solve any smaller instance of a flat language $L_r$, we have
$$
Q(L)(np+i) \ge \max_{r} Q(L_r)(n-1).
$$
That is, $Q(L) = \Theta(\max_r Q(L_r))$ from which the theorem follows.
\end{proof}


%
%
%

\section{Equivalence of algebraic and regular expression definitions}
\label{sec:regex_parallels}

This appendix is devoted to proving Theorem~\ref{thm:monoid_characterizations}, which gives algebraic definitions for each class of regular languages defined by a regular expression.  Since Theorems~\ref{thm:reg_lang_finite_monoid} and \ref{thm:schutzenbergers_theorem} give characterizations for the regular and star-free languages, respectively, we focus on the degenerate and trivial languages.
\begin{proposition}
A language is recognized by morphism $\varphi$ such that $|\varphi(\Sigma^+)| = 1$ iff it is degenerate. 
\end{proposition}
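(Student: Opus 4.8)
The plan is to prove both directions directly from the definition of degenerate languages as the four languages $\varnothing$, $\{\varepsilon\}$, $\Sigma^+$, and $\Sigma^*$.

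For the $(\Leftarrow)$ direction, I would exhibit a single universal morphism that handles all four cases at once. Take the two-element monoid $M = \{1, a\}$ with $a$ idempotent ($a^2 = a$), and define $\varphi \colon \Sigma^* \to M$ by sending every letter of $\Sigma$ to $a$. Then $\varphi(\varepsilon) = 1$ while $\varphi(w) = a$ for every nonempty word $w$, so $|\varphi(\Sigma^+)| = 1$. Choosing the accepting set $S$ to be $\varnothing$, $\{1\}$, $\{a\}$, or $M$ respectively recovers exactly $\varnothing$, $\{\varepsilon\}$, $\Sigma^+$, and $\Sigma^*$, which establishes that each degenerate language is recognized by such a morphism.

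For the $(\Rightarrow)$ direction, suppose $L = \varphi^{-1}(S)$ for a morphism with $\varphi(\Sigma^+) = \{m\}$ a singleton, and write $e = \varphi(\varepsilon) = 1_M$. The key observation is that $\varphi(\Sigma^*) \subseteq \{e, m\}$, so membership of any word in $L$ is governed entirely by the two facts ``$e \in S$?'' and ``$m \in S$?''. I would split into two cases. If $m = e$, then $\varphi(\Sigma^*) = \{e\}$ is itself a singleton, so $L$ is either $\varnothing$ (if $e \notin S$) or $\Sigma^*$ (if $e \in S$). If $m \neq e$, then $\varphi^{-1}(e) = \{\varepsilon\}$ and $\varphi^{-1}(m) = \Sigma^+$ exactly (every nonempty word maps to $m$, and $\varepsilon$ maps to $e \neq m$), so the four possibilities for which of $e, m$ lie in $S$ yield precisely $\varnothing$, $\{\varepsilon\}$, $\Sigma^+$, and $\Sigma^*$. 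In either case $L$ is degenerate.

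There is essentially no obstacle here; the only point needing care is not presuming $m \neq e$, i.e.\ remembering that the image of $\Sigma^+$ could coincide with the monoid identity, which is exactly why the case split above is made. (Incidentally $m$ is automatically idempotent, since $m^2 = \varphi(\sigma)\varphi(\sigma) = \varphi(\sigma\sigma) = m$ for any letter $\sigma$, but this fact is not needed for the argument.)
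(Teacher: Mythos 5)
Your proof is correct, and the $(\Rightarrow)$ direction matches the paper's argument almost exactly: since $|\varphi(\Sigma^+)|=1$, all nonempty words map to the same monoid element, so membership in $L$ can only depend on whether the word is empty, giving at most four possibilities (the paper phrases this as ``whether or not $\Sigma^+ \subseteq L$ and whether or not $\varepsilon \in L$''; your $m=e$ vs.\ $m\neq e$ split is a slightly more careful way of saying the same thing). The only place you diverge is the $(\Leftarrow)$ direction: the paper asserts, and leaves as an exercise, that the syntactic morphism of each of the four languages satisfies $|\varphi(\Sigma^+)|=1$, whereas you instead exhibit a single explicit two-element monoid $\{1,a\}$ with $a^2=a$ and choose the accepting set $S$ to pick out each of the four languages. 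Your version is self-contained and arguably cleaner, since the statement only requires \emph{some} recognizing morphism, not specifically the syntactic one; the paper's version is a bit more informative in that it tells you the condition holds for the canonical (syntactic) morphism as well, which is the form used elsewhere in Theorem~\ref{thm:monoid_characterizations}. Both routes are valid and essentially the same amount of work.
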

\begin{proof}
Recall that there are only four degenerate languages: $\varnothing$, $\varepsilon$, $\Sigma^{*}$, or $\Sigma^+$. First, we claim that the morphism $\varphi \colon \Sigma^* \to M_L$ onto the syntactic monoid of each language is such that $|\varphi(\Sigma^+)| = 1$.  This calculation is straightforward, and we leave it as an exercise.  

Let language $L \subseteq \Sigma^*$ be recognized by morphism $\varphi$ such that $|\varphi(\Sigma^+)| = 1$.  For any $x, y \in \Sigma^+$, we have that $\varphi(x) = \varphi(y)$.  Therefore, $x \in L$ iff $y \in L$.  This only leaves four possible choices of languages based on whether or not $\Sigma^+ \in L$ and whether or not $\varepsilon \in L$.  These are exactly the degenerate languages.
\end{proof}

\begin{theorem}
A language is recognized by morphism $\varphi$ such that $\varphi(\Sigma^+)$ is a rectangular band iff it is trivial. 
\end{theorem}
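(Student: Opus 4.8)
The plan is to route both directions through a single ``collapsing'' identity for rectangular bands: in any rectangular band a nonempty product satisfies $r_1 r_2 \cdots r_k = r_1 r_k$. This follows from the rectangular law $rst = rt$ when $k \ge 3$ (take $s = r_2 \cdots r_{k-1}$, which again lies in the band), is vacuous when $k = 2$, and reduces to idempotence $r^2 = r$ when $k = 1$. Hence, whenever $\varphi(\Sigma^+)$ is a rectangular band, for every $w = a_1 \cdots a_k \in \Sigma^+$ we get $\varphi(w) = \varphi(a_1)\varphi(a_2)\cdots\varphi(a_k) = \varphi(a_1)\varphi(a_k)$; that is, $\varphi(w)$ depends only on the first and last letters of $w$. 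I would state and prove this little lemma first, since everything else is bookkeeping.

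\emph{Trivial $\Rightarrow$ recognized by a rectangular-band morphism.} I would exhibit one morphism that handles all trivial languages simultaneously. Let $M = (\Sigma \times \Sigma) \cup \{1_M\}$, with $1_M$ an adjoined identity and multiplication $(a,b)(c,d) = (a,d)$ on $\Sigma \times \Sigma$, and define $\varphi \colon \Sigma^* \to M$ by $\varphi(\varepsilon) = 1_M$ and $\varphi(w) = (\operatorname{first}(w), \operatorname{last}(w))$ for $w \in \Sigma^+$. A direct check shows $\varphi$ is a homomorphism and $\varphi(\Sigma^+) = \Sigma \times \Sigma$ is a (necessarily finite) rectangular band, since $(a,b)^2 = (a,b)$ and $(a,b)(c,d)(e,f) = (a,f) = (a,b)(e,f)$. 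Moreover each generator allowed in a trivial regular expression is a $\varphi$-preimage of a subset of $M$: $\varepsilon = \varphi^{-1}(\{1_M\})$, \ $a \mid a\Sigma^* a = \varphi^{-1}(\{(a,a)\})$, and (for $a \neq b$) \ $a\Sigma^* b = \varphi^{-1}(\{(a,b)\})$, using that a word mapping to $(a,b)$ with $a \neq b$ automatically has length $\ge 2$. Since the family of languages recognized by a fixed morphism is closed under Boolean combinations (take the corresponding Boolean combination of the accepting subsets of $M$), every trivial language is recognized by $\varphi$.

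\emph{Recognized by a rectangular-band morphism $\Rightarrow$ trivial.} Let $L = \varphi^{-1}(S)$ with $\varphi(\Sigma^+)$ a rectangular band. By the collapsing lemma, for $w \in \Sigma^+$ we have $w \in L \iff \varphi(\operatorname{first}(w))\,\varphi(\operatorname{last}(w)) \in S$, so membership of a nonempty word in $L$ depends only on its first and last letters; separately, $\varepsilon \in L \iff 1_M \in S$. Therefore
$$
L = \Bigl(\bigcup_{(a,b)\,:\,\varphi(a)\varphi(b) \in S} \{\, w \in \Sigma^+ : \operatorname{first}(w) = a,\ \operatorname{last}(w) = b \,\}\Bigr) \ \cup\ \bigl(L \cap \{\varepsilon\}\bigr),
$$
and each set in the union equals $a \mid a\Sigma^* a$ when $a = b$ and $a\Sigma^* b$ when $a \neq b$, while $L \cap \{\varepsilon\}$ is $\varnothing$ or $\{\varepsilon\}$. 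Thus $L$ is a finite union of the permitted generators, hence trivial.

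The verifications ($\varphi$ is a homomorphism, the listed preimages are as claimed, Boolean-combination closure) are all routine. The only genuine content — and the step I expect to need the most care — is the collapsing identity $r_1 \cdots r_k = r_1 r_k$ and its corollary that the image of a word is pinned down by its endpoints; in particular I would treat the edge cases $k = 1$ (pure idempotence) and $k = 2$ (nothing to collapse) explicitly so that the corollary is valid uniformly over all word lengths. Everything downstream then follows by matching the three generator types against preimages of singletons in $\Sigma \times \Sigma \cup \{1_M\}$.
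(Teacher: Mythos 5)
Your proof is correct and takes essentially the same route as the paper: the same monoid $(\Sigma\times\Sigma)\cup\{1_M\}$ with product $(a,b)(c,d)=(a,d)$ for the ``trivial $\Rightarrow$ rectangular band'' direction, and the same collapse of $\varphi(w)$ to $\varphi(\operatorname{first}(w))\varphi(\operatorname{last}(w))$ via idempotence and the rectangular law for the converse. The only difference is presentational --- you package the collapse as an explicit lemma (with the $k=1$ and $k=2$ edge cases handled separately), whereas the paper applies the two identities directly to strings of the form $a$, $awa$, and $awb$.
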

\begin{proof}
Suppose first that $L$ is a regular language recognized by homomorphism $\varphi : \Sigma^* \to M$ such that $\varphi(\Sigma^+)$ is a rectangular band.  Suppose $a \in \Sigma$ belongs to $L$.  We want to show that $a \Sigma^* a$ is also in $L$.  For any $w \in \Sigma^+$, we have that $\varphi(a) = \varphi(aa) = \varphi(a w a)$, where the first equality comes from idempotence of $M$ and the second equality comes from the rectangular band property.  Therefore, if $a \in L$, then so is $a \Sigma^* a$.  Similarly, this implies that if $a w a \in L$ for $a \in \Sigma$ and $w \in \Sigma^*$, then $a \in L$ and $a \Sigma^*a \in L$.  A similar argument shows that if $a \neq b \in \Sigma$ and $a w b \in L$ for some $w \in \Sigma^*$, then $a \Sigma^* b \in L$.  Finally, membership of $\varepsilon$ is independent of $\varphi$, so it may either be in the language or not in the language.

Now suppose that $L$ is a trivial language.  Define monoid $M = (\Sigma \times \Sigma) \cup \{ (\varepsilon, \varepsilon) \}$ with operation $(a,b) \cdot (c,d) = (a, d)$ for all $a, b, c, d \in \Sigma$, and $(a,b) = (\varepsilon, \varepsilon) \cdot (a,b) = (\varepsilon, \varepsilon) \cdot (a,b)$.  Define morphism $\varphi : \Sigma^* \to M$ such that $\varphi(a) = (a,a)$ for $a \in \Sigma \cup \{ \varepsilon \}$.  Therefore, $\varphi(a w b) = (a,b)$ for $a,b \in \Sigma$ and $w \in \Sigma^*$.  Define $S \subseteq M$, such that $(a, a) \in S$ if $a \in L$, $(a,b) \in S$ if $a \Sigma^* b \in L$, and $(\varepsilon, \varepsilon) \in S$ if $\varepsilon \in L$.  By construction, we claim that $L = \varphi^{-1}(S)$, which completes the proof.
\end{proof}


One might wonder why we needed to reference the homomorphism $\varphi$ explicitly in the definition of the degenerate and trivial languages, when the other classes only needed a characterization of the monoid itself.  In that case, each class of languages would be a \emph{variety}.  Unfortunately, such a characterization does not exist due the following theorem of Eilenberg:
\begin{theorem}[Eilenberg's Variety Theorem \cite{eilenberg:1974}]
If $V$ is a class of monoids and $\mathcal{L}$ is the class of regular languages whose syntactic monoids lie in $V$, then $V$ is a monoid variety only if $\mathcal{L}$ is a language variety.\footnote{A class of regular languages is an \emph{language variety} if it is closed under Boolean operations, left and right quotients, and inverse morphisms.  For $x \in \Sigma^*$, the \emph{left quotient} of language $L$ by $x$ is the language $x^{-1} L = \{z : xz \in L \}$.  Let $\chi \colon \Sigma_1^* \to \Sigma_2^*$ be a homomorphism, and let $L \subseteq \Sigma_1^*$ be equal to $\sum_{m \in S} \varphi^{-1}(m)$ for some subset $S$ of the syntactic monoid.  The \emph{inverse morphism} of $L$ by $\chi$ is the language $\chi^{-1} L = \sum_{m \in S} \chi^{-1} \circ \varphi^{-1} (m) \subseteq \Sigma_2^*$. } 
\end{theorem}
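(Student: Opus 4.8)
The plan is to reduce all four closure conditions defining a language variety to a single structural fact — the syntactic monoid of a regular language is the least recognizer in the \emph{division} order — combined with the closure properties that define a monoid variety. Recall that $V$ being a monoid variety means it is closed under submonoids, homomorphic images, and finite direct products, so in particular it is closed under division, where $M$ \emph{divides} $N$ when $M$ is a homomorphic image of a submonoid of $N$.

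The first and essentially only substantive step is the \textbf{division lemma}: if a morphism $\varphi \colon \Sigma^{*} \to M$ recognizes $L$ via $S \subseteq M$, then $M_L$ divides $M$. To prove it I would take the submonoid $N := \varphi(\Sigma^{*})$ of $M$ together with the relation $\equiv$ on $N$ under which $m \equiv m'$ precisely when $amb \in S \Leftrightarrow am'b \in S$ for all $a,b \in N$. Since $L = \varphi^{-1}(S)$ and $\varphi$ surjects onto $N$, the surjective morphism $N \to M_L$ sending $\varphi(w)$ to the $\sim_L$-class of $w$ is well defined and has kernel congruence exactly $\equiv$, so $M_L \cong N/{\equiv}$ is a quotient of a submonoid of $M$. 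Hence $M \in V$ forces $M_L \in V$; in particular $M_L$ is finite, so the language under consideration is automatically regular by Theorem~\ref{thm:reg_lang_finite_monoid} and really does lie in $\mathcal{L}$.

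Granting the division lemma, I would dispatch the four operations in turn, in each case merely exhibiting one monoid of $V$ that recognizes the new language. For \emph{complement}, $\overline{L}$ has the same syntactic congruence as $L$, so $M_{\overline{L}} = M_L \in V$. For \emph{union} and \emph{intersection} of $L_1, L_2$ with syntactic data $(\varphi_i, M_i, S_i)$, the product map $(\varphi_1, \varphi_2) \colon \Sigma^{*} \to M_1 \times M_2$ recognizes $L_1 \cap L_2$ via $S_1 \times S_2$ and $L_1 \cup L_2$ via $(S_1 \times M_2) \cup (M_1 \times S_2)$; closure of $V$ under finite products gives $M_1 \times M_2 \in V$, and the division lemma then puts the two syntactic monoids in $V$. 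For the \emph{left quotient} $x^{-1}L$, the \emph{same} morphism $\varphi$ recognizes it, via the accepting set $\{ m \in M : \varphi(x)\,m \in S \}$, and right quotients are symmetric. For \emph{inverse morphisms} along $\chi \colon \Sigma_1^{*} \to \Sigma_2^{*}$, the composite $\psi \circ \chi$ recognizes $\chi^{-1}(L)$ via the same accepting set $\psi$ uses for $L$. In all cases the new syntactic monoid divides a monoid already known to lie in $V$.

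I expect the one nonroutine point to be making the division lemma fully airtight — verifying that $\equiv$ is a bona fide monoid congruence on $N$ and that the induced quotient is \emph{exactly} $M_L$, neither a proper quotient nor an extension; everything afterwards is bookkeeping. A secondary thing to watch is the finite/infinite distinction: because we are only interested in regular languages, ``variety'' should be read as a pseudovariety — finite monoids, closed under \emph{finite} products — and I would note explicitly that every monoid produced above remains finite, so we never leave that setting.
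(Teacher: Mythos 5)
The paper states this theorem without proof, citing Eilenberg directly, so there is no in-text argument to compare yours against. On its own merits, your sketch is correct and is the standard proof of the stated (forward) direction. The division lemma is indeed the crux: the map from $N=\varphi(\Sigma^*)$ to $M_L$ sending $\varphi(w)$ to the $\sim_L$-class of $w$ is well defined (since $\varphi(w)=\varphi(w')$ forces $uwv\in L\Leftrightarrow uw'v\in L$ for all $u,v$) and surjective (since $\Sigma^*\to M_L$ is surjective), which already exhibits $M_L$ as a quotient of a submonoid of $M$; the precise identification of the kernel with your $\equiv$ is true but not needed for the conclusion. The four closure clauses then each reduce to producing one recognizer lying in $V$ --- $M_L$ itself for complement and for left/right quotients, $M_1\times M_2$ for Boolean combinations, and $\varphi\circ\chi$ for inverse morphisms --- and invoking the lemma together with closure of a pseudovariety under division and finite products; your caveat that ``variety'' must be read as a pseudovariety of finite monoids (finite products only) is exactly right and matches how the paper applies the theorem. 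One small point worth flagging: the footnote's definition of inverse morphism mixes up the alphabets (it declares $L\subseteq\Sigma_1^*$ and $\chi\colon\Sigma_1^*\to\Sigma_2^*$ yet places $\chi^{-1}L\subseteq\Sigma_2^*$), but the reading you adopt, $\chi^{-1}L=\{w:\chi(w)\in L\}$ recognized by $\varphi\circ\chi$, is the intended one and is the reading the paper's own example with $\chi(0)=\varepsilon$, $\chi(1)=1$ actually uses.
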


Consider the degenerate language $A = \Sigma^+$ and star-free language $B = \Sigma^* 1 \Sigma^*$ over alphabet $\Sigma = \{0,1\}$.  We claim that $B$ is the inverse morphism of $A$ by $\chi \colon \Sigma^* \to \Sigma^*$ such that $\chi(0) = \epsilon$, $\chi(1) = 1$.  Since $B$ is clearly nontrivial, the trivial languages are not closed under inverse morphism.  Therefore, by the Variety Theorem, the class of trivial languages is not a variety.


\section{Worked Example}
\label{app:schutzen_example}

Let us consider the language $L \subseteq \Sigma^{*}$ (where $\Sigma = \{ a, b, c \}$) recognized by the following automaton. 
\begin{center}
\begin{tikzpicture}[->,>=stealth',shorten >=1pt,auto,node distance=2cm,
                    semithick]
  \node[initial,state,accepting] (A)                    {$q_0$};
  \node (MID)[right of=A]{};
  \node[state]         (B) [right of=MID] {$q_1$};
  \node[state]         (C) [below of=MID] {$q_2$};

  \path (A) edge [loop above] node {$c$} (A)
            edge [bend left]  node {$a$} (B)
	    edge [below] node {$b$} (C)
        (B) edge [loop above] node {$c$} (B)
            edge [bend left, above]  node {$b$} (A)
            edge [below] node {$a$} (C)
        (C) edge [loop right] node {$a,b,c$} (C);
\end{tikzpicture}
\end{center}

As this automaton is minimal, we may compute the syntactic monoid $M$ and the associated morphism $\varphi \colon \Sigma^{*} \to M$. There are six monoids elements, $M = \{ \mathbf{1}, \mathbf{a}, \mathbf{b}, \mathbf{ab}, \mathbf{ba}, \mathbf{0} \}$, corresponding to the following equivalence classes of strings. 
\begin{align*}
\varphi^{-1}(\mathbf{1}) &= \{ \varepsilon, c, cc, ccc, \ldots \} \\
\varphi^{-1}(\mathbf{a}) &= \{ a, ac, ca, aba, acc, cac, cca, \ldots \} \\
\varphi^{-1}(\mathbf{b}) &= \{ b, bc, cb, bab, bcc, cbc, ccb, \ldots \} \\
\varphi^{-1}(\mathbf{ab}) &= \{ ab, abc, acb, cab, \ldots \} \\
\varphi^{-1}(\mathbf{ba}) &= \{ ba, bac, bca, cba, \ldots \} \\ 
\varphi^{-1}(\mathbf{0}) &= \{ aa, bb, aaa, aab, aac, abb, aca,baa, \ldots \} 
\end{align*}
For convenience, we also present the complete monoid multiplication table (Table~\ref{table:multtable}) and the set of all two-sided, left, and right ideals (Table~\ref{table:ideals}).  
\begin{table}
\begin{center}
\begin{tabular}{l|llllll}
$\cdot$ & $\mathbf{1}$ & $\mathbf{a}$ & $\mathbf{b}$ & $\mathbf{ab}$ & $\mathbf{ba}$ & $\mathbf{0}$ \\
\hline
$\mathbf{1}$ & $\mathbf{1}$ & $\mathbf{a}$ & $\mathbf{b}$ & $\mathbf{ab}$ & $\mathbf{ba}$ & $\mathbf{0}$ \\
$\mathbf{a}$ & $\mathbf{a}$ & $\mathbf{0}$ & $\mathbf{ab}$ & $\mathbf{0}$ & $\mathbf{aba}$ & $\mathbf{0}$ \\ 
$\mathbf{b}$ & $\mathbf{b}$ & $\mathbf{ba}$ & $\mathbf{0}$ & $\mathbf{b}$ & $\mathbf{0}$ & $\mathbf{0}$ \\
$\mathbf{ab}$ & $\mathbf{ab}$ & $\mathbf{a}$ & $\mathbf{0}$ & $\mathbf{ab}$ & $\mathbf{0}$ & $\mathbf{0}$ \\
$\mathbf{ba}$ & $\mathbf{ba}$ & $\mathbf{0}$ & $\mathbf{b}$ & $\mathbf{0}$ & $\mathbf{ba}$ & $\mathbf{0}$ \\
$\mathbf{0}$ & $\mathbf{0}$ & $\mathbf{0}$ & $\mathbf{0}$ & $\mathbf{0}$ & $\mathbf{0}$ & $\mathbf{0}$ \\
\end{tabular} 
\end{center}
\caption{Monoid multiplication table for the example.}
\label{table:multtable}
\end{table}

\begin{table} 
Two-sided ideals:
\begin{align*}
M \mathbf{1} M &= \{ \mathbf{1}, \mathbf{a}, \mathbf{b}, \mathbf{ab}, \mathbf{ba}, \mathbf{0} \} \\
M \mathbf{a} M = M \mathbf{b} M = M \mathbf{ab} M = M \mathbf{ba} M &= \{ \mathbf{a}, \mathbf{b}, \mathbf{ab}, \mathbf{ba}, \mathbf{0} \} \\
M \mathbf{0} M &= \{ \mathbf{0} \}
\end{align*}
Left ideals:
\begin{align*}
M \mathbf{1} &= \{ \mathbf{1}, \mathbf{a}, \mathbf{b}, \mathbf{ab}, \mathbf{ba}, \mathbf{0} \} \\
M \mathbf{a} = M \mathbf{ba} &= \{ \mathbf{a}, \mathbf{ba}, \mathbf{0} \} \\
M \mathbf{b} = M \mathbf{ab} &= \{ \mathbf{b}, \mathbf{ab}, \mathbf{0} \} \\
M \mathbf{0} &= \{ \mathbf{0} \}
\end{align*}
Right ideals:
\begin{align*}
\mathbf{1} M &= \{ \mathbf{1}, \mathbf{a}, \mathbf{b}, \mathbf{ab}, \mathbf{ba}, \mathbf{0} \} \\
\mathbf{a} M = \mathbf{ab} M &= \{ \mathbf{a}, \mathbf{ab}, \mathbf{0} \} \\
\mathbf{b} M = \mathbf{ba} M &= \{ \mathbf{b}, \mathbf{ba}, \mathbf{0} \} \\
\mathbf{0} M &= \{ \mathbf{0} \}
\end{align*}
\caption{Ideals of the example monoid $M$.}
\label{table:ideals}
\end{table}

Next, observe that $L = \varphi^{-1}(\mathbf{1}) \cup \varphi^{-1}(\mathbf{ab})$, so we need to consider both $\varphi^{-1}(\mathbf{1})$ and $\varphi^{-1}(\mathbf{ab})$, plus any languages that arise from recursion. The first language, $\varphi^{-1}(\mathbf{1})$, is actually a base case in our induction. We see that $\{ a \in \Sigma : \varphi(a) = \mathbf{1} \} = \{ c \}$, so $\varphi^{-1}(\mathbf{1}) = c^{*}$.

For the second language, $\varphi^{-1}(\mathbf{ab})$, we need to apply Theorem~\ref{thm:schutzdecomp}. Let's compute the relevant sets $E$, $F$, and $G$.
\begin{align*}
E &= \{ (r,a) \in M \times \Sigma : r \varphi(a) M = \mathbf{ab}M, rM \neq \mathbf{ab}M \} \\
&= \{ (\mathbf{1},a) \}, \\
F &= \{ (a,r) \in \Sigma \times M : M \varphi(a) r = M\mathbf{ab}, Mr \neq M\mathbf{ab} \} \\
&= \{ (b, \mathbf{1}) \}, \\
G &= \{ (a,r,b) \in \Sigma \times M \times \Sigma : \mathbf{ab} \in (M \varphi(a) r M \cap M r \varphi(b) M) \backslash M \varphi(a) r \varphi(b) M \} \\
&= \{ (a, \mathbf{1}, a), (b, \mathbf{1}, b) \}.
\end{align*}
Let us argue that we have correctly identified all elements of $E$, $F$, and $G$.  For the set $E$, we must find $r$ such that $\mathbf{ab}M \subsetneq rM$.  However, there is only one right ideal that strictly contains $\mathbf{ab} M$, namely $\mathbf{1} M = M$. Since $\varphi(c)M \neq \mathbf{ab} M$ and $\varphi(b)M \neq \mathbf{ab} M$, the only choice left is $\varphi(a)M = \mathbf{a}M = \mathbf{ab}M$. The argument is identical for $F$, but with the left ideals. Finally, $\mathbf{ab}$ is in all (two-sided) ideals except $\mathbf{0}M = \{ \mathbf{0} \}$, so it must be that $\varphi(a) r \varphi(b) = \mathbf{0}$ for any $(a,r,b) \in G$. Brute force analysis of the remaining options will verify that only the two triples listed above belong in $G$.

Now that we have $E$, $F$, and $G$, the next step is to write down the relevant languages. 
\begin{align*}
U \Sigma^{*} &= \varphi^{-1}(1) a \Sigma^{*} = c^{*} a \Sigma^{*} \\
\Sigma^{*} V &= \Sigma^{*} b \varphi^{-1}(1) = \Sigma^{*} b c^{*} \\
\Sigma^{*} W \Sigma^{*} &= \Sigma^{*} a \varphi^{-1}(1) a \Sigma^{*} \cup \Sigma^{*} b \varphi^{-1}(1) b \Sigma^{*} \\
&= \Sigma^{*} a c^{*} a \Sigma^{*} \cup \Sigma^{*} b c^{*} b \Sigma^{*} \\
C &= \{ a \in \Sigma : \mathbf{ab} \notin M \varphi(a) M \} = \varnothing 
\end{align*}
In other words, $U \Sigma^{*}$ checks that there is an $a$ before the first occurrence of $b$. Similarly, $\Sigma^{*} V $ checks that there is a $b$ after all occurrences of $a$. The language $C$ is trivial, because every letter puts us in an ideal containing $\mathbf{ab}$. If there were some unused letter $d \in \Sigma$, then it would appear in $C$. Finally, $W$ checks for either two $a$s or two $b$s in a row, discounting the $c$s.

The query algorithm given in Theorem~\ref{thm:main} or in pseudocode as Algorithm~\ref{alg:sf} is not necessarily the most natural or efficient algorithm, but in this case it works quite well. First, the language $C$ is empty (there are no symbols which prevent $\varphi^{-1}(\mathbf{ab})$ from being in the monoid), so there is no work to do there. To recognize $U \Sigma^{*} = \varphi^{-1}(\mathbf{1}) a \Sigma^{*}$, the algorithm computes the longest prefix in 
$$
\bigcup_{s \colon \mathbf{1} \in sM} \varphi^{-1}(s) = \varphi^{-1}(\mathbf{1}) = c^{*},
$$
by binary search. This requires recursively solving $\varphi^{-1}(\mathbf{1})$, but fortunately this is the base case (i.e., $\rho(\mathbf{1}) = 0$). Hence, we Grover search for a symbol which does not map to $\mathbf{1}$, i.e., a symbol other than $c$, and reject if any such symbol is found. Having found the end of the longest prefix in $\varphi^{-1}(\mathbf{1})$, we check again (in this case, unnecessarily duplicating work) that the prefix is in $\varphi^{-1}(\mathbf{1})$ and that the next symbol is $a$. If so, then there is a prefix in $U$, otherwise there is not. 

Clearly $\Sigma^{*} V$ is symmetric to $U \Sigma^{*}$ and the algorithms are identical except for a reversal, so we move on to $\Sigma^{*} W \Sigma^{*}$. Its two components, $\Sigma^{*} a \varphi^{-1}(\mathbf{1}) a \Sigma^{*}$ and $\Sigma^{*} b \varphi^{-1}(\mathbf{1}) b \Sigma^{*}$ have very similar algorithms, so we discuss only the former. The algorithm applies infix search to find an index which is preceded by $a \varphi^{-1}(\mathbf{1})$ and followed by $\varphi^{-1}(\mathbf{1}) a$, since the only way to split  $\varphi^{-1}(\mathbf{1})$ is into $\varphi^{-1}(\mathbf{1}) \varphi^{-1}(\mathbf{1})$. Checking that the index is followed by $\varphi^{-1}(\mathbf{1}) a \Sigma^{*}$ is exactly the same algorithm as for $U \Sigma^{*}$ above, and $\Sigma^{*} a \varphi^{-1}(\mathbf{1})$ is just the reverse of that, so there is nothing new here.


\section{Applications}
\label{sec:more_applications}

We present a few concrete instances where our main result implies surprising or novel quantum query algorithms.

\subsection{Addition}

Chandra, Fortune, and Lipton \cite{chandra:1983} observed that binary addition can be described by a monoid product. Specifically, a product over a monoid $M$ with elements $\{ S, R, P \}$ (\emph{set}, \emph{reset}, \emph{propagate}) satisfying 
\begin{align*}
xS &= S, & xR &= R, & xP &= x,
\end{align*}
for all $x \in M$. The idea is that given two $n$-bit numbers, we map each column to a monoid element (i.e., $00 \mapsto R$, $01, 10 \mapsto P$, $11 \mapsto S$) and then the prefix product to a particular column (starting from the least significant column, so perhaps \emph{suffix} product is more appropriate) indicates whether there is a carry in the next column ($R, P \implies \textrm{no carry}$, $S \implies \textrm{carry}$). Chandra et al.\ show that there are $\AC^{0}$ circuits for computing all prefix products, and thus binary addition can be computed in $\AC^{0}$.

Since the monoid is aperiodic, our result implies that the product of any prefix can be computed with $\tilde{O}(\sqrt{n})$ queries to the input, and therefore any particular output bit of a binary addition can be computed in the same number of queries. Similarly, the regular language accepting triples of binary numbers (represented a column at a time) such that the first two sum to the third is star free (the monoid is essentially $M$, adjoin a \emph{zero element} $\bot$ which arises when the string is inconsistent with any valid addition). This implies that addition can be checked in $\tilde{O}(\sqrt{n})$ quantum queries. Unfortunately, we cannot \emph{construct} the sum in $\tilde{O}(\sqrt{n})$ queries for information theoretic reasons: if one of the summands is zero then the sum is exactly the other summand, which we should not be able to reconstruct in fewer than $\Omega(n)$ queries.

Furthermore, we can extend these results to the addition over any base $k$, for an integer $k \geq 2$. In fact, we use the exact same monoid. For example, in decimal, if sum of the digits in a column is more than $9$, then a carry will be created. If the sum of the digits is less than $9$, then even if there is an incoming carry, there will be no outgoing carry. And if the sum of digits is exactly $9$, then a carry will propagate.

\subsection{Length-2 Word Break}

\begin{problem}[Word Break Problem]
Given a finite dictionary of strings $D \subseteq \Sigma^{*}$ and a string $w \in \Sigma^{*}$, decide whether $w \in D^{*}$. That is, can $w$ be written as a concatenation of words in $D$?
\end{problem}
There exists a straightforward dynamic program (DP) which solves this problem in polynomial time.  Faster solutions exist (e.g., \cite{backurs16}), but still heavily rely on DP.  Since DP is sometimes claimed to be incompatible with quantum speedups \cite{ambainis18}, we find it surprising that our result gives a speedup on the following (limited) special case of the word break problem.
\begin{theorem}
Fix a dictionary $D \subseteq \Sigma \cup \Sigma^{2}$ containing strings of length $1$ or $2$. Given a string $w \in \Sigma^{*}$, there is a $\tilde{O}(\sqrt{n})$ query algorithm to decide whether $w \in D^{*}$.
\end{theorem}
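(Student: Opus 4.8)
The plan is to show that $D^{*}$ \emph{flattens} to a finite family of star-free languages; the quadratic speedup then follows immediately from Theorem~\ref{thm:main}, the $\tilde{O}(\sqrt{n})$ algorithm for star-free languages. Concretely: since $D$ is finite, $D^{*}$ is regular, so Theorem~\ref{thm:flattening} applies and reduces membership in $D^{*}$, in fewer than $p$ queries (for $p$ the conductor), to membership in one of finitely many \emph{flat} regular languages $L_i$ over $\Sigma^{p}$. Encoding $\Sigma^{p}$ in binary and invoking Theorem~\ref{thm:alphabetsize}, it suffices to prove that every $L_i$ is star free, i.e.\ by Sch\"utzenberger's theorem (Theorem~\ref{thm:schutzenbergers_theorem}) that its syntactic monoid is aperiodic. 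We genuinely need flattening here: already $D = \{\texttt{aa}, \texttt{b}\}$ gives $D^{*} = \{w : \text{every maximal } \texttt{a}\text{-run has even length}\}$, whose syntactic monoid contains a copy of $\mathbb{Z}/2\mathbb{Z}$, so $D^{*}$ itself is not star free.

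The engine is an $O(1)$-state ``boundary-tracking'' automaton for $D^{*}$. After reading a prefix $w_1 \cdots w_i$, its state records only: whether $w_1 \cdots w_i \in D^{*}$, whether $w_1 \cdots w_{i-1} \in D^{*}$, and the symbol $w_i$ --- together with a ``dead'' sink for when no parse can be continued. Because every word of $D$ has length at most $2$, this bounded window determines all future transitions, so this is a correct constant-size DFA for $D^{*}$, and the syntactic monoid of each $L_i$ divides the transition monoid of its length-$p$-blocked version. The crucial observation is that the ``liveness'' bit --- whether a prefix lies in $D^{*}$ --- is a \emph{monotone} function of the liveness bits of the two preceding positions, via $L_i = (L_{i-1} \wedge [w_i \in D]) \vee (L_{i-2} \wedge [w_{i-1}w_i \in D])$. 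Consequently the two-bit component $(L_{i-1}, L_i) \in \{0,1\}^{2}$ of the transition monoid evolves under \emph{monotone} maps fixing $(0,0)$; since $\{0,1\}^{2}$ has no antichain of size $3$, any nontrivial eventually-periodic behavior must be the order-$2$ swap of the two incomparable states $(0,1) \leftrightarrow (1,0)$. This swap is exactly the modular-counting phenomenon ``a stretch $c^{k}$ with $c \notin D$ but $cc \in D$ is tileable iff $2 \mid k$'', which has period $2$. (The ``last symbol'' component of the state is constant under iteration, and the ``dead'' state is absorbing, so neither contributes periodicity.)

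So the whole argument reduces to: once characters are blocked in pairs --- the flattening length $p$ will be a common multiple of $2$ and of the naive conductor --- no surviving transition realizes this swap. I would prove this using the standard fact that in a finite monoid a one-sided-invertible element is a unit, combined with the observation that the only monotone permutations of $\{0,1\}^{2}$ are the identity and the swap: a product of length-$2$ blocked transitions equal to a swap would force one of its factors to be a unit, hence the swap itself, and one checks by direct case analysis on the transition formula above that no single length-$2$ block does this. Hence the blocked transition monoid, and therefore each $M_{L_i}$, is aperiodic, so each $L_i$ is star free and Theorem~\ref{thm:main} finishes the proof. The step I expect to be the main obstacle is exactly this last exclusion --- pinning down the correct $p$ and ruling out every surviving order-$2$ element of the blocked monoid; everything else is bookkeeping and citations. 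One could instead observe that length-$\le 2$ word break is a constant-height grid-reachability problem and appeal to the Grid Path application, but the monoid analysis above is more self-contained.
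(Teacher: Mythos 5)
Your proof is correct in outline but takes a genuinely different route from the paper's. The paper works directly with the flattened syntactic monoid $M$ and proves $m^2 = m^3$ for every $m \in M$ by a combinatorial pumping argument: pick $y \in \varphi^{-1}(m)$ of even length $n$, observe that any concatenation of words from $D$ (all of length at most $2$) covering $y^2$ must, by a pigeonhole on word-break positions modulo $n$, produce two breaks in the same residue class, giving a length-$n$ substring aligned to word breaks that can be pumped up or down; hence $xy^2z \in D^*$ iff $xy^3z \in D^*$. Your approach instead builds an explicit boundary-tracking DFA and analyzes its (blocked) transition monoid via the monotonicity of the recurrence $L_i = (L_{i-1} \wedge [w_i \in D]) \vee (L_{i-2} \wedge [w_{i-1}w_i \in D])$. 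The structural observation that the only nontrivial order-automorphism of any sub-poset of $\{0,1\}^2$ is the swap $(0,1) \leftrightarrow (1,0)$, so the only possible eventual period is $2$, localizes the modular phenomenon very cleanly, and is arguably more transparent about \emph{why} period-$2$ blocking suffices; the paper's argument is shorter and does not require exhibiting a particular automaton. Both arguments hinge on $n$ (resp.\ $p$) being even.

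The one step you should not hand-wave is exactly the exclusion you flag, and your stated justification is slightly off. You appeal to ``a product equal to a swap forces a factor to be a unit,'' but the swap is only a permutation of the two-bit component of the state, not of the full state set $Q = (\{0,1\}^2 \times \Sigma) \cup \{\text{dead}\}$ (the last-symbol coordinate is forced to a constant, so no element of the transition monoid is literally invertible on $Q$). The cleaner argument is a trajectory argument: if some power $\sigma^m$ of a blocked transition swaps $(0,1)$ and $(1,0)$, then because each single-symbol two-bit map is monotone and the pair is incomparable, the images of $(0,1)$ and $(1,0)$ under every intermediate prefix of the composition must remain incomparable, hence must stay inside $\{(0,1),(1,0)\}$; so each length-$2$ block, restricted to that pair, is a bijection and the composition is a swap, whence some length-$2$ block is a swap on the pair. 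Your direct case analysis then does close the gap: writing $a_1 = [c\in D]$, $b_1 = [wc \in D]$, $a_2 = [d \in D]$, $b_2 = [cd \in D]$ for a block $cd$ read from last-symbol $w$, the two-bit map sends $(1,0) \mapsto (b_1, b_1 a_2)$, and the swap would require simultaneously $b_1 = 0$ and $b_1 a_2 = 1$, which is impossible. So the obstacle you identify does go through, but you need the trajectory/monotonicity argument (not the unit argument) to reduce to the single-block case.
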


The result follows from a lemma characterizing the syntactic monoids of such languages. 

\begin{lemma}
	Let $D \subseteq \Sigma^{*}$ be a set of strings of length at most $2$. Let $M$ be the \emph{flattened} syntactic monoid of $D^{*}$. For any $m \in M$, we show that $m^2 = m^3$. It follows that $M$ is aperiodic.  
\end{lemma}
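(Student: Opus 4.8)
The plan is to realize $D^{*}$ by a small nondeterministic automaton and study its transition monoid. Reading a string from left to right, a partial tiling by words of $D$ always leaves one of two kinds of interface: either everything read so far is a full concatenation of words of $D$ (state $C$), or the last symbol read, say $a$, is the first half of a length‑$2$ word not yet completed (state $P_a$). This gives an NFA on the finite state set $\Gamma=\{C\}\cup\{P_a:a\in\Sigma\}$, with transitions $C\xrightarrow{\,b\,}C$ when $b\in D\cap\Sigma$, $C\xrightarrow{\,b\,}P_b$ always, and $P_a\xrightarrow{\,b\,}C$ when $ab\in D\cap\Sigma^{2}$; a string $x$ lies in $D^{*}$ iff there is a $C$‑to‑$C$ path labelled $x$. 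Its transition monoid $T$ (binary relations on $\Gamma$ under composition, $x\mapsto\delta_x$) recognizes $D^{*}$. Since the flat language underlying $M$ has the form $\{y\in(\Sigma^{p})^{*}:yr\in D^{*}\}$ for a fixed remainder $r$ (with $p$ the conductor used in flattening), $\delta_y=\delta_{y'}$ forces $y$ and $y'$ to be syntactically equivalent in that flat language; hence $M$ is a quotient of the submonoid $T_{p}=\{\delta_x:p\mid|x|\}$, and it suffices to prove $\delta_{w^{2}}=\delta_{w^{3}}$ for every $w$ with $p\mid|w|$. Throughout, write $w=w_1\cdots w_\ell$ with $\ell=|w|$.

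I would first record that for a non‑empty $x=x_1\cdots x_k$ the relation $\delta_x$ is determined by the ordered pair $(x_1,x_k)$ together with the four bits $[x\in D^{*}]$, $[x_2\cdots x_k\in D^{*}]$, $[x_1\cdots x_{k-1}\in D^{*}]$, $[x_2\cdots x_{k-1}\in D^{*}]$ (a direct unwinding of the transition rules). Because $w^{2}$ and $w^{3}$ share their first and last symbols, $\delta_{w^{2}}=\delta_{w^{3}}$ can fail only through one of these four bits; restricting attention to the two states $\{C,P_{w_\ell}\}$ that the powers $\delta_w^{\,j}$ can ever visit, those powers are governed by a relation on a two‑element set, and the unique such relation whose square differs from its cube is the transposition. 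Unwinding, $\delta_{w^{2}}\neq\delta_{w^{3}}$ exactly in the \emph{bad configuration}: $w\notin D^{*}$ while $w_2\cdots w_\ell\in D^{*}$, $w_1\cdots w_{\ell-1}\in D^{*}$, $w_\ell w_1\in D$, and $w_2\cdots w_{\ell-1}\notin D^{*}$. The rest of the argument rules this out when $p\mid|w|$.

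The combinatorial core is that, in a bad configuration, $w^{j}\in D^{*}$ iff $j$ is even. Even powers are legal since $w^{2k}=(w^{2})^{k}$ is a concatenation of copies of $w^{2}$; for odd $j$ no seam between consecutive copies of $w$ can be a tile boundary (it would split $w^{j}$ as $w^{t}w^{j-t}$ with one of $t,j-t$ odd and strictly smaller, so $w^{t}\notin D^{*}$ or $w^{j-t}\notin D^{*}$ by induction), hence every seam is straddled by the length‑$2$ word $w_\ell w_1$, which forces the interior $w_2\cdots w_{\ell-1}$ of every middle copy to lie in $D^{*}$ — contradicting $w_2\cdots w_{\ell-1}\notin D^{*}$. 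I would then peel off forced length‑$2$ tiles from the two ends: from $w_1\cdots w_{\ell-1}\in D^{*}$ together with $w_2\cdots w_{\ell-1}\notin D^{*}$ one gets $w_1w_2\in D$ and $w_1\notin D\cap\Sigma$, symmetrically at the right end, and iterating forces $w$ to use no length‑$1$ word anywhere. Consequently a bad configuration forces $|w|$ to be odd, whereas the conductor $p$ — being a multiple of the period of the length spectrum $\lambda(\varphi^{-1}(r))$ of every monoid element, and even whenever this kind of period‑$2$ phenomenon occurs — would make $p\mid|w|$ imply $|w|$ even. This contradiction completes the proof, and $m^{2}=m^{3}$ immediately gives aperiodicity of $M$.

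The step I expect to be the main obstacle is this last one: showing rigorously and in full generality that the bad configuration pins down the parity of $|w|$ (equivalently, its residue modulo the relevant period) and that this residue is incompatible with $p\mid|w|$. The NFA model, the four‑bit description of transition relations, and the transposition characterization are routine; the delicate point is controlling how length‑$1$ words of $D$ may — or, as I expect, may not — intervene in the interior of a bad configuration, so that the obstruction really is mod $2$ and matches exactly what the conductor is constructed to annihilate.
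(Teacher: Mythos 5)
Your NFA-based framing and the reduction to analyzing when $\delta_{w^2}\neq\delta_{w^3}$ (the ``bad configuration'') are a genuinely different route from the paper's, and the combinatorial core you isolate --- that in a bad configuration $w^{j}\in D^{*}$ iff $j$ is even, proved via seams straddled by $w_{\ell}w_{1}$ --- is correct. But the key remaining step has a real gap, which you yourself flag. Your peeling argument that a bad configuration forces $|w|$ odd does not go through as stated: from $w_{1}\cdots w_{\ell-1}\in D^{*}$ and $w_{2}\cdots w_{\ell-1}\notin D^{*}$ you do get $w_{1}w_{2}\in D$, and from the full set of conditions you also get $w_{1},w_{\ell}\notin D$ (since $w_{1}\in D$ would give $w=(w_{1})(w_{2}\cdots w_{\ell})\in D^{*}$); but the ``iteration'' deeper into the interior is not forced. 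After the first tile $(w_{1}w_{2})$, the tiling of $w_{3}\cdots w_{\ell-1}$ is unconstrained, and there is no inner bad configuration to recurse on --- in particular $w_{\ell-1}w_{2}\in D$ is not available, and ``$w$ uses no length-$1$ word anywhere'' does not follow. The claim is in fact true, but the only proof I see is essentially the paper's: in the bad configuration $w^{2}\in D^{*}$ and $w^{3}\notin D^{*}$, so take a tiling of $w^{2}$ and observe that two interior word breaks at the same nonzero residue modulo $\ell$ would let you pump a length-$\ell$ circular substring and force $w^{3}\in D^{*}$, a contradiction; then count --- words of length at most $2$ force at least $\ell-1$ interior breaks, so every nonzero residue is hit exactly once, all tiles have length $2$, and if $\ell$ were even then position $\ell$ would be a break, giving $w\in D^{*}$.

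The concluding step is also more fragile than it needs to be. Arguing that the conductor $p$ is even whenever a bad configuration occurs is circular-feeling and likely false in the framing you chose: $p\mid|w|$, and you are simultaneously trying to show $|w|$ is odd, which would force $p$ odd. The clean move is the paper's: you only need $m^{2}=m^{3}$ for each $m$, so use flatness to pick a single representative $w$ of \emph{even} length (take $|w|=2p$), and then ``bad configuration forces $|w|$ odd'' contradicts this directly, with no claims about $p$ at all. Once you make these two repairs, your proof converges to the paper's: the NFA transition-monoid wrapper is a different presentation, but the pigeonhole-on-residues pumping argument inside a square of an even-length representative is the load-bearing step in both.
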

\begin{proof}
	It is clear that the identity element $1 \in M$ has the property that $1^2 = 1^3$. For any other $m \in M$, $m \neq 1$, we can find a string $y \in \Sigma^{*}$ which $\varphi$ maps to $m$. Let $n$ be the length of $y$. We may assume $n$ is even because the monoid is flat.
	
	The statement $m^{2} = m^{3}$ is equivalent to saying that for all $x, z \in \Sigma^{*}$,
	$$
	x y^{2} z \in D^{*} \iff x y^{3} z \in D^{*}.
	$$
	We will argue this by showing that for any $w \in D^{*}$ containing $y^2$, there is a substring $u$ in $y^2$, aligned to the word breaks and of length $n = |y|$. This substring can be pumped up or down, to show the $\implies$ and $\Longleftarrow$ directions respectively. 
	
	Now assume $y^2$ is contained in some concatenation of words from $D$, and consider the positions where there are word breaks. If we find two word breaks (including the endpoints of the string, but not \emph{both} endpoints since then we would pump all of $y^2$!) at the same position modulo $n$, we are done because we immediately have a pumpable substring. In particular, if there are $n+1$ word breaks within $y^{2}$, then pigeonhole principle implies there are two at same position modulo $n$. We are necessarily close to this limit since $|y^2| = 2n$, and words in $D$ have length at most $2$, so the concatenation involves at least $n$ words.
	
	Let us do the math more carefully. Suppose we have a concatenation of words in with $n$ word breaks (i.e., $n+1$ words) at $n$ different positions modulo $n$. Since $n$ is even, there must be breaks at both odd and even positions. It follows that at least one of the words in the concatenation has length $1$, so the entire concatenation has length at most $2n+1$. This is just short enough that $y^2$ and the concatenation must share an endpoint. This endpoint plus the $n$ word breaks already in $y^2$ give us $n+1$ positions to apply the pigeonhole argument from before, finishing the proof. 
	
\end{proof}

We also note that this result is tight; if the dictionary contains even a single word of length $3$ or more, the query complexity may be $\Omega(n)$. For example, consider $D := \{ 0, 11, 101 \}^{*}$, and note that the parity of a bit string $x_1 \cdots x_n$ can be decided by testing whether $1 x_1 1 1 x_2 1 \cdots 1 x_n 1$ is in $D^{*}$.

\subsection{Grid Problems}

There are many instances of problems on grids which turn into regular languages if one of the dimensions is restricted to be constant. For example, $3$-colorability is $\NP$-complete for $4$-regular planar graphs \cite{garey:1974}, and such graphs may be embedded into the grid with rectilinear edges \cite{valiant:1981}. However, if one dimension of the grid is constant size then the problem becomes regular under a suitable encoding. 

In this section, we consider a grid problem such that the constant-height restriction is \emph{star free}. This leads to an efficient $\tilde{O}(\sqrt{n})$ quantum query algorithm, which is otherwise difficult to see.
\begin{problem}[Grid Path Problem]
Given an $m \times n$ grid of cells, some of which are impassable, decide whether there is a path from the bottom left corner to the top right corner. 
\end{problem} 

For constant $m$, let 
$$
L = \{ w \in (\{ 0, 1 \}^{m})^{*} : \text{grid represented by $w$ contains a path} \}.
$$
be the language of grids which have a path from the lower left corner to the top right corner.  First, consider a \emph{monotone} version of the grid path problem in which the path is only allowed to go up or to the right at each step.  In this case, there is a straightforward first order logic characterization of this language, in which the existential quantifiers are used to guess the finitely-many positions at which the path's $y$-coordinate increases.

Such a direct characterization will not suffice for the language $L$ since there is no succinct way to describe a general path.  Instead, we appeal to a more sophisticated approach of Hansen et al.\ based on a monoid which recognizes this language \cite{hksst:2014}.  Roughly speaking, the monoid elements describe sets of compatible paths between the ends of a grid.  Thus, by multiplying the monoid elements corresponding to each column of the grid, one can determine membership in $L$.  Hansen el al.\ show that the monoid is aperiodic, which immediately gives a faster quantum query algorithm using our classification:

\begin{corollary}[Combining \cite{hksst:2014} with star-free algorithm]
$Q(L) = \tilde{O}(\sqrt n)$.
\end{corollary}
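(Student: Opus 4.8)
The plan is to combine the monoid construction of Hansen, Komarath, Sarma, Skyum, and Talebanfard \cite{hksst:2014} with Theorem~\ref{thm:main}. First I would recall their setup: for a fixed height $m$, a column of the grid is an element of $\{0,1\}^{m}$, and to a block of consecutive columns one associates a \emph{boundary profile}---a bounded amount of information recording, for the induced subgraph on that block, which cells of the left boundary are connected to which cells of the right boundary, together with whether the designated start cell (bottom-left) can be reached from the left boundary within the block and whether the designated end cell (top-right) reaches the right boundary. Concatenating two blocks composes their profiles, this composition is associative, and the empty block acts as an identity, so the profiles form a monoid $M$, which is finite because $m$ is a constant. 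Mapping a word $w \in (\{0,1\}^{m})^{*}$ to the profile of the grid it encodes gives a monoid homomorphism $\varphi$, and taking $S \subseteq M$ to be the set of profiles in which the start cell is connected to the end cell yields $L = \varphi^{-1}(S)$, so $M$ recognizes $L$ in the sense of Section~\ref{sec:background}.

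The crucial input from \cite{hksst:2014} is that $M$ is \emph{aperiodic}: repeating a block eventually stabilizes its connectivity pattern, so that $x^{n} = x^{n+1}$ for every $x \in M$ and all sufficiently large $n$. Granting this, Sch\"utzenberger's theorem (Theorem~\ref{thm:schutzenbergers_theorem}) shows that $L$ is star free, and Theorem~\ref{thm:main} then gives a quantum algorithm deciding membership in $L$ with $\tilde{O}(\sqrt{n})$ queries and $\tilde{O}(\sqrt{n})$ time, where $n$ is the number of columns. Since the alphabet $\{0,1\}^{m}$ has constant size, Theorem~\ref{thm:alphabetsize} and the discussion in Section~\ref{sec:alphabet_size} let us transfer this bound to the binary encoding of the grid at the cost of only a constant factor, so $Q(L) = \tilde{O}(\sqrt{n})$ as a function of the raw bit-length as well.

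The only substantive content is the aperiodicity of $M$, which is exactly the property we are borrowing; consequently I do not anticipate a serious obstacle. The one point to verify carefully is that the profile construction of Hansen et al.\ is a genuine monoid homomorphism recognizing $L$, rather than merely a nonuniform family of automata---but this is routine, since the profile of a concatenation depends only on the profiles of the two parts. (If their exposition phrases the construction in automaton-theoretic terms, one extracts the syntactic monoid of $L$ directly and uses their stabilization argument to conclude it is aperiodic.) With that in hand, the corollary follows immediately.
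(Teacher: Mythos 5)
Your proof is correct and follows essentially the same route as the paper: cite the Hansen et al.\ monoid of connectivity profiles for constant-height grids, observe that it recognizes $L$ and is aperiodic, invoke Sch\"utzenberger's theorem to conclude $L$ is star free, and apply Theorem~\ref{thm:main}. You spell out more of the details than the paper does---in particular what the profile monoid records and the alphabet-size conversion via Theorem~\ref{thm:alphabetsize}---but the argument is the same one.
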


In fact, the monoid elements keep track of multiple disjoint paths through the grid (which is necessary if the path backtracks through a particular section of the grid), so one can decide whether there exist $O(1)$ disjoint paths through the grid.

\section{Context-Free Technical Lemmas}
\label{sec:contextfreeproofs}

In this section we provide proofs for the two main technical lemmas in Section~\ref{sec:cfl}.  

\begin{replemma}{lem:cflcounting}
	Let $K \subseteq \{ 0, 1, \hashtag_1, \hashtag_2, \$ \}^*$ be the language such that
	\begin{itemize}[itemsep=0pt]
		\item if $x \in K$, then $x$ ends with $\$ y \hashtag_1$, and
		\item for all $n \geq 6$, there is an $x \in K$ ending in $\$ y \hashtag_1$,
	\end{itemize}
	where $y$ is the binary representation of $|x|$. Then, $\overline{K}$ is context free, and $Q(K) = O(\sqrt{n})$.  
\end{replemma}
\begin{proof}
Let $K_1$ be the language over $\Sigma := \{ 0, 1, \hashtag_1, \hashtag_2, \$ \}$ containing all strings which
\begin{itemize}[itemsep=0pt]
	\item start with $\hashtag_1 \hashtag_a$ or $\hashtag_2 \$ \hashtag_a$, 
	\item end with $\hashtag_1$, 
	\item match $((\hashtag_1|\hashtag_2)\$^{*})^{*} (0|1)^{*} \hashtag_1$, and
	\item contain no substring of the form $\hashtag_a (0|1|\$)^{i} \hashtag_b (0|1|\$)^{j} \hashtag_c$ such that $\frac{2(i+1)}{a} \neq \frac{j+1}{b}$ where $a, b, c \in \{ 1, 2\}$ and $i, j$ are integers.
\end{itemize}
Let us show that $\overline{K_1}$ is context free as a first step to constructing $K$.  We claim there is a context-free language which accepts strings containing a substring of the form $\hashtag_a (0|1|\$)^{i} \hashtag_b (0|1|\$)^{j} \hashtag_c$. Indeed, it is easy to describe the pushdown automaton: nondeterministically guess the position of $\hashtag_a$, push symbols onto the stack as we read the input, read $\hashtag_b$ and pop symbols off the stack at a ratio of $1$ stack symbol for each $\frac{2b}{a}$ input symbols. With some attention to detail, the PDA will be able to decide whether $\frac{2(i+1)}{a} = \frac{j+1}{b}$, and accept if it does not.  Since the first three conditions define a regular language, the entire language $\overline{K_1}$ is context free.

The conditions above imply that any string $z \in K_1$ is of the form
$$
\hashtag_{a_0} \$^{*} \hashtag_{a_1} \$^{*} \cdots \$^{*} \hashtag_{a_{k-1}} \$^{*} b_{\ell-1} \cdots b_1 b_0 \hashtag_1,
$$ 
where $a_0, \ldots, a_{k-1} \in \{ 1, 2\}$ and $b_0, \ldots, b_{\ell-1} \in \{ 0, 1 \}$. 
Let $d_i$ be the distance (measured by the difference in indices) between $\hashtag_{a_{i}}$ and $\hashtag_{a_{i+1}}$, for all $i=0, \ldots, k-2$. Let $d_{k-1}$ be the distance from $\hashtag_{a_{k-1}}$ to the final $\hashtag_1$. Since $\hashtag_{a_0}$ is the first symbol and $\hashtag_{1}$ is the last, it follows that $|z| = 1 + \sum_{i} d_i$. 

Since strings in $K_1$ start with $\hashtag_1 \hashtag_a$ or $\hashtag_2 \$ \hashtag_a$, we have $d_0 = a_{0}$. We also have a condition on any three consecutive $\hashtag_{a_i}$ which translates into $2 \frac{d_i}{a_i} = \frac{d_{i+1}}{a_{i+1}}$. A straightforward induction tells us that $d_i = a_i 2^{i}$ for all $i$, which means 
\begin{equation}
\label{eqn:cfl_as}
|z| = 1 + \sum_{i=0}^{k-1} a_i 2^{i}.
\end{equation}
On the other hand, we want $b_{\ell-1} \cdots b_{0}$ to be the binary number representation of $|z|$. That is, 
\begin{equation}
\label{eqn:cfl_bs}
|z| = \sum_{i=0}^{\ell-1} b_i 2^{i}.
\end{equation}
By combining (\ref{eqn:cfl_as}) and (\ref{eqn:cfl_bs}), and considering the result modulo powers of $2$, one can show that $b_i = a_i - 1$ for all $i$, and $b_{k} = 1$. Let $K_2$ be the language that accepts if the $a_i$s and $b_i$s match up as described above. Clearly $\overline{K_2}$ is context free because a PDA can easily push the $a_i$s onto the stack as it reads them, then pop off and compare as it reads the $b_i$s. 

We define $K := K_1 \cap K_2$ and note that $\overline{K} = \overline{K_1}\cup \overline{K_2}$ is context free as desired. There are strings in $K_1$ of any length $n \geq 2$, but to be in $K_2$, we also need the binary representation of $n$ to fit in $d_{k-1} - 1$. We claim the binary representations fits for all $n \geq 6$, so there exist strings of those lengths in $K$. 

Finally, we can decide whether a string $z$ of length $n$ is in $K$ in $O(\sqrt{n})$ time. First, we check if $z \in K_1$, since the length fixes the positions of $\hashtag_{a_0}$ through $\hashtag_{a_{k-1}}$ in the string. We can determine these positions $a_0, \ldots, a_{k-1} \in \{ 1, 2 \}$ from the length of the string, and check those positions in $O(\log n)$ queries. If $z$ is in $K_1$ then we check whether bits $b_k \cdots b_{0}$ at the end match the length in $O(\log n)$ queries. Finally, we check that all remaining positions are $\$$'s in $O(\sqrt{n})$ quantum queries by Grover search. 
\end{proof}

\begin{replemma}{lem:cflhistory}
Let $N$ be a $k$-tape nondeterministic Turing machine.  Define language $K_N$ which contains strings of the form
$$
C_1 \hashtag C_2^R \hashtag C_3 \ldots C_{n-1}^R \hashtag C_n
$$
where $C_1$ is a valid start configuration of $N$, $C_n$ is a valid accepting configuration, and $C_i$ to $C_{i+1}$ is a valid transition. Then, $\overline{K_N}$ is context free, and $Q(K_N) = O(\sqrt n)$.
\end{replemma}
\begin{proof}
The proof of this theorem follows from the observation that computation is local.  Let us sketch the proof.  First, we need to fix the encoding of the configuration of a Turing machine.  Many different schemes suffice, but let us assume that the encoding consists of the $k$ tapes laid out on top of each other so that each symbol of the encoding includes a $k$-tuple of the values of the $k$ tapes.  We also stipulate that one symbol on each tape is marked with the head and the current state (we can simply expand our alphabet to include these possibilities as well).  To verify that one configuration follows properly from the next, the push-down automaton for language $\overline{K_N}$ nondeterministically guesses the location on one of the tapes where a violation might occur.  It can count to the same position in the tape in the next configuration by pushing all remaining tape symbols onto the stack until the next $\hashtag$ symbol.  At this point, it can pop these symbols to count back to the same location (this is why each configuration is the reverse of the previous one).  All that remains is to check a finite set of conditions.
\end{proof}

\end{document}